\newcommand{\RR}[0]{\mathbb{R}}
\newcommand{\PP}[0]{\mathbf{P}}
\newcommand{\eps}{\varepsilon } %
\newcommand{\tra}{t } %
\newcommand{\all}{x } %
\newcommand{\ps}{p^\star } %
\newcommand{\rs}{R^\star } %
\newcommand{\vsr}[0]{{v^\star}}
\newcommand{\w}[0]{w}
\newcommand{\E}{\mathsf{E}} %
\newcommand{\bv}[0]{\overline{v}}
\newcommand{\vm}[0]{\ps+\delta}
\newcommand{\vz}[0]{\ps-\delta}
\newcommand{\xr}[0]{x_{\textup{\texttt{pd}}}}
\newcommand{\dxr}[0]{\dot{x}_{\textup{\texttt{pd}}}}
\newcommand{\yr}[0]{y_{\textup{\texttt{pd}}}}
\newcommand{\dyr}[0]{\dot{y}_{\textup{\texttt{pd}}}}
\newcommand{\tr}[0]{t_{\textup{\texttt{pd}}}}
\newcommand{\mr}[0]{m_{\textup{\texttt{pd}}}}
\newcommand{\ir}{{\scriptscriptstyle \text{\rm IR}}}
\newcommand{\ic}{{\scriptscriptstyle \text{\rm IC}}}
\newcommand{\Lic}{\lambda^{\ic}}
\newcommand{\Lir}{\lambda^{\ir}}
\newcommand{\xd}[0]{x_{\textup{\texttt{d}}}}
\newcommand{\td}[0]{t_{\textup{\texttt{d}}}}
\newcommand{\md}[0]{m_{\textup{\texttt{d}}}}
\newcommand{\vd}[0]{v_{\textup{\texttt{d}}}^\star }
\newcommand{\1}[1]{\mathbf{1}_{\{#1\}}}
\newcommand{\pd}[0]{p}  %
\newcommand{\pt}[0]{s}  %
\DeclarePairedDelimiter\floor{\lfloor}{\rfloor}
\newcommand{\twopartdef}[4]{
\begin{dcases*} #1 &\mbox{if } $#2$\\
#3 &\mbox{if } $#4$
\end{dcases*}
}
\newcommand{\threepartdef}[6]{
 \begin{dcases*} #1 &\mbox{if } $#2$\\
#3 &\mbox{if } $#4$\\
#5 &\mbox{if } $#6$ \end{dcases*}
}
\begin{document}

\RUNAUTHOR{Balseiro, Besbes and Castro}

\RUNTITLE{Mechanism Design under Approximate Incentive Compatibility}

\TITLE{ {Mechanism Design under \\Approximate Incentive Compatibility}}

\ARTICLEAUTHORS{%
\AUTHOR{Santiago R. Balseiro}
\AFF{Columbia Business School} %
\AUTHOR{Omar Besbes}
\AFF{Columbia Business School}
\AUTHOR{Francisco Castro}
\AFF{UCLA Anderson School of Management}
} %

\ABSTRACT{%
A fundamental assumption in classical mechanism design is that buyers are perfect optimizers. However, in practice, buyers may be limited by their computational capabilities or a lack of information, and may not be able to perfectly optimize their response to a mechanism. This has motivated the introduction of approximate incentive compatibility (IC) as an appealing solution concept for practical mechanism design. While most of the literature has focused on the analysis of particular approximate IC mechanisms, this paper is the first to study the design of \textit{optimal} mechanisms in the space of approximate IC mechanisms and to explore how much revenue can be garnered by moving from exact to approximate incentive constraints. In particular, we study the problem of a seller facing one buyer with private values and analyze optimal selling mechanisms under $\eps$-incentive compatibility.  We establish that the gains that can be garnered depend on the local curvature of the seller's revenue function around the optimal posted price when the buyer is a perfect optimizer. If the revenue function behaves locally like an $\alpha$-power for $\alpha \in (1,\infty)$, then no mechanism can garner gains higher than order $\eps^{\alpha/(2\alpha-1)}$. This improves upon state-of-the-art results which imply maximum gains of $\eps^{1/2}$ by providing the first parametric bounds that capture the impact of revenue function's curvature on revenue gains. Furthermore, we establish that an optimal mechanism needs to randomize as soon as $\eps>0$ and construct a randomized mechanism that is guaranteed to achieve order $\eps^{\alpha/(2\alpha-1)}$ additional revenues, leading to a tight characterization of the revenue implications of approximate IC constraints. Our study sheds light on a novel class of optimization problems and the challenges that emerge when relaxing IC constraints. In particular, it brings forward the need to optimize not only over allocations and payments but also over best responses, and we develop a new framework to address this challenge.
}%

\KEYWORDS{mechanism design, satisficing behavior, approximate incentive compatibility, revenue maximization, infinite dimensional linear programs, duality} %

\maketitle
\newpage
\section{Introduction}
\label{sec:intro}

From housing allocation to online advertising, market design is having a profound impact on the design, implementation, and operations of markets. An idea fundamental to good market design is the correct assessment of participants' incentives, usually through an exact equilibrium concept. Indeed, most of the literature in mechanism design has assumed that agents make decisions by perfectly optimizing their utility functions, even when facing arbitrarily complex mechanisms. This idea has impacted how practitioners design and operationalize their market solutions, and it has led to a long line of research that has provided a good understanding of optimal mechanism design with \textit{exact} incentive compatibility constraints. However, in real-world applications, markets are complex and participants are not always able to best respond to market conditions. Lack of information or experience, technological limitations, or even behavioral biases might prevent buyers from  accurately assessing their utility from a transaction, and thus buyers might make only near-optimal decisions. %
 Our focus in this paper is to explore the consequences of these limitations, and to understand the potential value that market designers can derive from them. In particular, the main question we tackle is:

\begin{quote}
\textit{What is an optimal or near-optimal selling mechanism when buyers are not perfect optimizers, and  how does the problem structure impact the level of additional revenue performance that can be gained?}
\end{quote}

There has been significant interest in studying specific mechanisms and quantifying the extent to which they are incentive compatible or not (see Section~\ref{sec:literature_review}). This line of work studies specific instances in the space of feasible mechanisms under relaxed incentive compatibility constraints. In addition, many studies have shown how to transform an approximately incentive-compatible mechanism into an exactly incentive compatible one while controlling for the loss in revenues. However, to the best of our knowledge, the study of the \textit{optimization} of mechanisms under relaxed incentive compatibility constraints is new. As we will see, this leads to a new class of optimization  problems that is fundamentally different in nature than classical mechanism design. Furthermore, we will derive new insights into the relationship between the perturbation of IC constraints and the revenue performance that can be achieved.

To make progress toward this question, we anchor our analysis around the classical Myerson setting with a single buyer with private values.  In this case, a posted price mechanism is known to be optimal when the buyer is a perfect optimizer. To capture that the buyer is not a perfect optimizer, we allow the incentive compatibility (IC) constraints to be satisfied up to a ``small'' $\eps>0$, i.e., the buyer is satisficing and near-optimizer, and he can select a reporting strategy that is $\eps$ away from optimal. { Studying the case of small $\eps$ is natural as agents, while satisficing, might not be willing to forgo large gains from not perfectly optimizing.} We will refer to this constraint as $\eps$-IC. Essentially, we study a classical mechanism design problem in which the IC constraint is slightly relaxed.

It is worth noting that the classical setting with IC constraints is an \textit{infinite-dimensional linear program} that admits a very simple solution through a posted price (see, e.g., \citealt{rileyzeckhauser1983}). The problem with $\eps$-IC constraints is also an infinite-dimensional linear program, but, as we will see, it leads to a rich new class of problems. In light of the question above, we are interested in quantifying the difference between the values of the two problems as a function of $\eps$, but also in understanding the underlying structure of near-optimal mechanisms in the $\eps$-IC setting and the impact of the buyer's distribution of private information on performance.

\paragraph{Existing benchmarks.}
A na\"ive comparison might lead one to expect the difference between the values of these two linear programs to scale (at most) in a piecewise linear fashion (as a function of $\eps$) based on intuition gleaned from finite dimensional linear programs. However, in the class of mechanism design problems, we will see that the picture is more subtle and rich,  since we are relaxing an \textit{infinite number of constraints}.
On another hand, an application of the powerful rounding argument of Nisan, which allows to transform an $\eps$-IC mechanism into an IC mechanism while bounding the revenue loss (see \Cref{sec:Nisan}), enables one to readily obtain an upper bound on the maximal additional revenues that could be garnered of order $\eps^{1/2}$. %

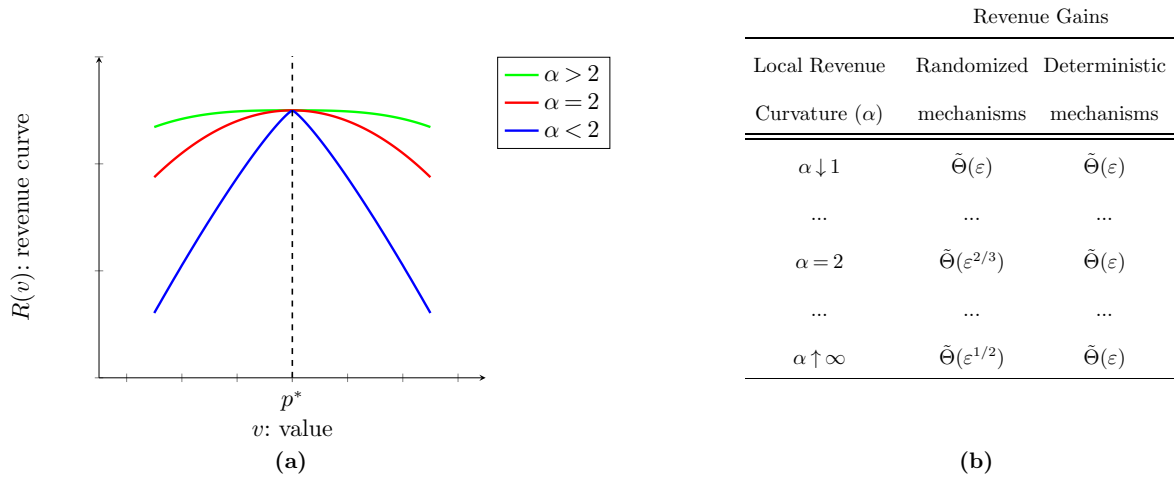
\begin{figure}[t]
\begin{center}
\scalebox{1}{\begin{tikzpicture}[scale=0.75,every node/.style={scale=1.1}]

        \begin{axis}[
        ylabel style={align=center},
        xlabel={$v$: value},
        xlabel style={align=right},
        ylabel={$R(v)$: revenue curve} ,
        axis y line=left,
        axis x line=bottom,
        xmin=1/4-0.1, xmax=3/4+0.1, ymin=0, ymax=0.3,
        extra x ticks={0.5},
        extra x tick labels={$p^*$},
        legend entries = {{$\alpha>2$},{$\alpha=2$},{$\alpha<2$}},
        legend pos=outer north east,
        yticklabels={,,},
        xticklabels={,,},
        ]

        \addplot [very thick, green, domain=1/4:3/4, samples=100]
      {1/4 - abs(x-1/2)^(3)};
        \addplot [very thick, red, domain=1/4:3/4, samples=100]
        {1/4 - abs(x-1/2)^2};
        \addplot [very thick, blue, domain=1/4:3/4, samples=100]
        {1/4 - abs(x-1/2)^(1.2)};

        \addplot[dashed, thick, black] coordinates {(1/2,0)(1/2,1)};
        \end{axis}
        
\node[scale=0.65,xshift = 16cm,yshift = 3.6cm]  {%
  \begin{tabular}{ccc}
	\begin{tabular}{ cccc }
			  & & \multicolumn{2}{c}{Revenue Gains}  \\ \hline 
			Local Revenue  &    & Randomized   &  Deterministic      \\
	Curvature ($\alpha$) & 	                     & mechanisms  & mechanisms   \\
			                     \hline\hline
$\alpha \downarrow 1$	& &$\tilde \Theta(\eps)$ & $\tilde \Theta(\eps)$ \\
...	&  & ...  & ...  \\
$\alpha =2$	& &$\tilde \Theta(\eps^{2/3})$ & $\tilde \Theta(\eps)$ \\
...	&  & ...  & ...  \\
$\alpha \uparrow \infty$	& &$\tilde \Theta(\eps^{1/2})$ & $\tilde \Theta(\eps)$ \\
			\hline
		\end{tabular}
  \end{tabular}
};

\node[scale = 0.7] at (3.4,-1.5){ \textbf{(a)}};
\node[scale = 0.7] at (2.55+13,-1.5){ \textbf{(b)}};

\end{tikzpicture}}
\end{center}
\caption{\textbf{(a)} Local curvature of the revenue function $R(v) = v (1 - F(v))$ around the optimal posted price $\ps$. Here, $F(v)$ denotes the cumulative distribution function of the buyer's value.
\textbf{(b)} Revenue improvement for $\eps$-optimizers as a function of $\eps$. %
 In this paper, we provide matching upper and lower bounds showing that the optimal rate is of order $\eps^{\alpha/(2\alpha-1)}$ for randomized mechanisms where $\alpha\in(1,\infty)$ captures the local curvature of the revenue function. We also show that the rate is linear for deterministic mechanisms.
}
\label{fig:curvature}
\end{figure}

\paragraph{Contributions.}
Our main contribution is to initiate the study of the implications of $\eps$-IC constraints from an optimization perspective. While most of the related literature has focused on measuring deviations from IC constraints, on studying mechanisms that are $\eps$-IC, or on reductions from $\eps$-IC to IC, our focus is instead on understanding the structure of \textit{optimal} $\eps$-IC mechanisms and their revenue guarantees.

Our objective is to study the impact of the buyer's distribution of values on the revenue that can be garnered by relaxing the incentive compatibility constraints. Notably, we identify that the revenue gains when $\eps$ is small are governed by the local curvature of the revenue function around the optimal posted price when the buyer is a perfect optimizer. More formally, denote by $F(v)$ the cumulative distribution function of the buyer's value and by $R(v) = v (1-F(v))$ the revenue function, which measures the seller's expected revenue when the posted price is $v$ and the buyer is a perfect optimizer. Denote by $\ps$ an optimal posted price. Roughly speaking, we say that the revenue function admits local $\alpha$-power envelopes for $\alpha \in (1,\infty)$ if
\[
    R(\ps) - R(v) \asymp  |v - \ps|^\alpha\,,
\]
for posted prices $v$ close to the optimal, that is, the revenue loss of choosing a suboptimal posted price decays at a rate of $\alpha$.\footnote{ Our working definition of local envelopes in Definition~\ref{def1} is actually slightly stronger as it is stated in terms of the derivative of the revenue function, but these are equivalent in many cases of interest.} See Figure~\ref{fig:curvature} (a) for an illustration of different local behaviors.

We first explore upper bounds on the achievable performance. As articulated above, we first review how Nisan's classical rounding argument  enables one to readily obtain an  impossibility result through an upper bound on the maximal additional revenues that could be garnered, given by order $\eps^{1/2}$ (\Cref{prop:smoothing}). This yields a powerful, uniform guarantee that is independent of the curvature of the revenue function. Our first main result is a parametric upper bound on the revenue gains (that can be obtained when relaxing the IC constraints) that depends on the local curvature of the revenue function. In particular, we establish in \Cref{thm:up-bd-per} that no (randomized) mechanism can yield revenue gains of order more than  $\eps^{\alpha/(2\alpha-1)}$, which yield an improvement on classical results for every $\alpha \in (1,\infty)$. Interestingly, in the limit when $\alpha \uparrow \infty$, i.e., when the revenue curve is locally flat, we recover the classical bound of $\eps^{1/2}$ from Nissan. In the limit when $\alpha \downarrow 1$, i.e., when the revenue curve has a kink, we recover the $\eps$ gains suggested by sensitivity analysis for finite dimensional linear programming. Finally, in the prototypical case of $\alpha=2$, i.e., when the revenue curve is smooth and locally quadratic, we obtain a novel bound of $\eps^{2/3}$ on the maximum revenue gains that can be garnered.

We derive our upper bound through a novel duality argument that also highlights the different nature of objects that emerge in this class of problems, and the interplay between $\eps$-IC and revenues.  Our approach involves guessing a best response for the buyer and then relaxing, for each value, the IC constraints of all reports except the one corresponding to the report made by the best response. Because these pairs of values and reports induce a path in the two-dimensional Euclidean space, we dub the resulting problem as the \emph{path-based relaxation}. We further upper bound the path-based relaxation by considering the Lagrangian dual problem obtained after dualizing the remaining constraints. The most challenging part of our analysis involves bounding the value of the latter problem. Here, our choice of the best response ends up playing a key role.

Our second set of main results pertains to achievability. We start our search for nearly optimal mechanisms, which would yield lower bounds on attainable revenues, by exploring the space of deterministic mechanisms. Our study of deterministic mechanisms is motivated by the fact that an optimal mechanism with exact IC constraints is deterministic. In that space, we establish in \Cref{prop:hard/soft-floor} that the gains are of order $\eps$ and derive an optimal mechanism. %

We then expand our search to include randomized mechanisms. We establish in \Cref{thm:low-bd-per} that there exists a mechanism that leads to gains of at least order $\eps^{\alpha/(2\alpha-1)}$.  Notably, our approach is constructive in that we exhibit a family of mechanisms that delivers such supra-linear gains for every $\alpha \in (1,\infty)$. These are an appropriately constructed  perturbation of the optimal  mechanism when $\eps=0$. The construction of this family is also instructive in that it highlights the special nature of the problem at hand. In particular, the mechanism is randomized and its allocations are characterized by a set of ordinary and delayed differential equations. Such delayed differential equations appear critical for understanding the class of $\eps$-IC mechanism design problems.

We summarize in \Cref{fig:curvature} (b) the results in the present paper. Notably, our bounds provide a sharp characterization of the performance implications of approximate incentive compatibility. The gains stemming from it are exactly of order $\eps^{\alpha/(2\alpha-1)}$, thus yielding a spectrum of gains depending on the local curvature $\alpha \in (1,\infty)$ of the revenue function. Furthermore, our results imply that  an optimal mechanism \textit{has to randomize}.  This is true even if, for example,  the underlying distribution is regular. This unveils an interesting phase transition. For $\eps=0$ the optimal allocation is deterministic---a posted-price mechanism---but as soon as $\eps$ becomes positive, the optimal mechanism requires randomization.  Our results also shed light on the geometry of the set of incentive-compatible mechanisms. Because mechanism design problems are linear programs, by Bauer's maximum principle they attain their maximum at an extreme point of the feasible set. Interestingly, in the  case of exact IC constraints, extreme points are deterministic posted-price mechanisms~\citep{manelli2007multidimensional}. A takeaway from our results is that as soon as $\eps > 0$, the set of extreme points becomes richer as optimal mechanisms must randomize.

Our performance analysis focuses on the case of $\eps$ small to derive clean analytical insights on the limiting revenue gains when $\eps \downarrow 0$. We remark, however, that our approach is also valid to develop upper and lower bounds for arbitrary $\eps$. Our path-based relaxation dualizes $\eps$-IC constraints along the path of a best response and, by weak duality, this approach yields upper bounds for any value of $\eps$. Moreover, by optimizing over the paths we can readily obtain upper bounds for every instance. Similarly, our lower bound provides a framework to construct a feasible $\eps-$IC mechanism through a set of ordinary and  delayed differential equations. Both methods are general and provide a computational framework for $\eps-$IC mechanism design for values of $\eps$ that are not necessarily small. \label{ref2-1}

From a methodological perspective, an important and novel takeaway emerges. While in classical mechanism design with exact IC constraints, the problem can be formulated  as an optimization over allocations and payments, now, as one relaxes the IC constraints to be approximately satisfied, an additional endogenous object emerges in the optimization, through the best response of the buyer (we discuss this in  \Cref{sec:formulation} after we introduce the problem formally). In the same way that the allocation pins down payments for an exactly IC mechanism, we show that the best response can be the central object that can be used to optimize over $\eps$-IC mechanisms. For example, in our path-based relaxation, a best response fully determines an upper bound on the achievable performance. Similarly, in constructing our proposed mechanisms, our choice of the best response (together with some reasonable assumptions on the mechanism) pins down both the allocation and payments. Optimizing over best responses, and the associated methodology, might have implications beyond the exact problem studied in the present paper.

\subsection{Literature Review}\label{sec:literature_review}

Our paper builds on the classical mechanism design literature for revenue maximization under exact IC constraints, e.g., \citet{myerson1979incentive} and \citet{myerson1981optimal}. More specifically, the setting of selling to a single buyer relates to the work of \citet{rileyzeckhauser1983}. Under an exact IC constraint,  \citet{rileyzeckhauser1983} establish that an optimal mechanism is simply a posted price. In contrast, as soon as IC is relaxed to hold approximately, we establish that any deterministic allocation mechanism is suboptimal. %

There has recently been a surge of interest in measuring IC and relaxing IC in the context of selling mechanisms. \citet{milgrom2011critical} highlights that understanding notions of approximate IC and their implications for performance is one of the four critical issues in the practice of market design. In the present paper, for a class of selling problems, we study the implications of relaxing IC constraints for both the structure of optimal mechanisms and the associated performance, while  also shedding light on the rich structure of the associated optimization problems that emerge.

There are various studies that analyze mechanisms that are approximately IC in various contexts. \cite{hartline2010bayesian} and \cite{hartline2015non} study efficient welfare black-box reductions that turn an allocation algorithm into a Bayesian incentive-compatible mechanism with minimal welfare and revenue losses in a single-dimensional setting. Several papers study similar reductions in more general settings such as multi-dimensional agents' types, discrete or continuous types, or combinations thereof.  In higher multi-dimensional settings, \cite{bei2011bayesian} provide reductions that are approximately incentive compatible for the case of discrete type space, while \cite{hartline2011bayesian} and \cite{hartline2015bayesian} find reductions that are exact IC. Until recently, the problem for continuous type space only had $\eps-$IC reductions but this gap was closed by \cite{dughmi2021bernoulli}. Finally, \citet{cai2021efficient} provide an efficient revenue preserving (with small loss) transformation from $\eps-$IC to exact IC
 and \cite{conitzer2020welfare} provide a transformation that preserves revenue and incurs negligible revenue loss.

We also refer the reader to \cite{Caroll2013}  and \cite{Dutting2021} that motivate and study  notions of $\eps-$IC in the context of voting rules and contract design, respectively.

\citet{nazerzadeh2013dynamic} and \citet{kanoria2017dynamic} consider approximately IC mechanisms in a dynamic context with learning. \citet{balseiro2015repeated} and \citet{balseiro2019dynamic} focus on such an environment in the context of multi-period settings with cumulative budget constraints. \citet{Gorokh2019arti} also exhibits a mechanism that satisfies some form of approximate IC when converting one-shot monetary mechanisms to dynamic mechanisms with artificial currencies. While the studies above use related notions of approximate IC, they differ in their focus. In these studies, the main motivation for the introduction of approximate IC is computational and analytical tractability since, in many settings, even computing a near-optimal, exactly IC mechanism is not possible. To overcome this challenge, these papers consider tractable mechanisms that are approximately IC and whose performance is close to that of an optimal, exactly IC mechanism. By contrast, in the present study, we ask what should be an \textit{optimal} mechanism if one optimizes over the space of approximate IC mechanisms. An interesting and novel takeaway of our work is that relaxing the IC constraint leads to supralinear gains.

The relaxation of IC constrains has also emerged in other general market design settings. We refer the reader to \citet{Lubin-2012-Approx} and \citet{Azevedo-2018-SPL} as well as references therein for an overview of such other applications. At a high level, these studies view approximate IC as a desirable property of a mechanism being studied.

With the emergence of black-box mechanisms in practice, another related line of research has focused on measuring {how far a non-truhtfhul mechanism is from incentive compatible}. Examples of recent work in this area include \citet{Lahaie2018-IC}, \citet{balcan2019estimating}, \citet{Deng2019-IC}, \citet{Feng-2019-IC}, \citet{Deng-2020-IC}, and \citet{colinibaldeschi2020envy}.

Finally, our work expands the set of techniques and ideas that have been used in the literature to analyze approximate incentive-compatible mechanisms. A central approach is the rounding argument often attributed to Noam Nisan and, to the best of our knowledge, first used by \citet{balcan2005mechanism} to turn any $\eps$-IC mechanism into an exactly IC mechanism while, at the same time, limiting the revenue losses. This idea and the associated guarantees have been extended by \citet{daskalakis2012symmetries} and \citet{rubinstein2015simple} to multiple bidders settings. %

\section{Problem Formulation}\label{sec:formulation}
We consider the classical setting  developed in \cite{myerson1981optimal} with a seller (she) selling a single item to one buyer (he). The buyer's value for the item is drawn from a distribution $F$ with support $\mathcal{S}$. The distribution $F$ is common knowledge.
The buyer's set of messages is denoted by $\Theta$. The seller aims to design an indirect selling mechanism given by $\hat{\all}:\Theta\rightarrow \RR$ and $\hat{\tra}:\Theta\rightarrow \RR$, where $\hat{\all}$ denotes the  allocation probability and $\hat{\tra}$  the  transfers. The buyer's strategic response to the mechanism is a function $\hat{\theta}: \mathcal{S}\rightarrow \Theta$. Given a particular mechanism and response, the buyer's utility is given by $v\cdot \hat{\all}(\hat{\theta}(v))-\hat{\tra}(\hat{\theta}(v))$.
Our critical modeling assumption is that the buyer is not a perfect optimizer. We capture this by imposing that the incentive compatibility
constraints are satisfied up to $\eps$.
More precisely, the buyer aims to select a reporting strategy that ensures that he collects non-negative utility from participating,   and the buyer is satisficing in that he aims to collect the maximum surplus up to $\varepsilon$, i.e., for all $v\in \mathcal{S}$, $v\cdot \hat{\all}(\hat{\theta}(v))-\hat{\tra}(\hat{\theta}(v))\geq
 \sup_{\hat{\theta}'\in\Theta} \{v\cdot \hat{\all}(\hat{\theta}')-\hat{\tra}(\hat{\theta}') \}
-\eps.$ When there are multiple $\eps$-optimal best responses for the buyer, we assume that the buyer chooses the one that is the most favorable to the principal. %
  An implication of this assumption, which is pervasive in the mechanism design literature, is that the seller's problem reduces to simultaneously choosing a mechanism together with a best response for the buyer, which together are individually rational and approximately incentive compatible.   The seller aims to maximize the expected revenues from trade $\E_v[\hat{\tra}(\hat{\theta}(v))]$ subject to the above constraints.

\paragraph{Reformulation via  the revelation principle.}
The key change compared to a classical mechanism design problem is that the buyer is not a perfect optimizer.
 However, one may use  classical arguments (see \citealt{myerson1979incentive}) to reduce attention to  direct revelation mechanisms (see Appendix~\ref{app-rev} for details). In particular, without loss of optimality, one may restrict attention to the following  seller's problem
\begin{flalign}\label{eq:probeps}\tag{$\mathcal{P}_\eps$}
 \sup_{\all(\cdot),\tra(\cdot)} & \E_v[\tra(v)]\\\label{eq:IR} \tag{IR}
\text{s.t. }& v\cdot \all(v)-\tra(v)\geq 0,\quad \forall v\in \mathcal{S},\\\label{eq:IC} \tag{$\mbox{IC}_{\varepsilon}$}
&v\cdot \all(v)-\tra(v)\geq
 v\cdot \all(v')-\tra(v')
-\eps,\quad \forall v\in \mathcal{S},\quad v'\in \mathcal{S},\\\nonumber
&\all:\mathcal{S} \rightarrow [0,1] \quad \text{and}\quad \tra:\mathcal{S}\rightarrow \RR.
\end{flalign}
Problem \eqref{eq:probeps} will be our main focus of analysis. Our goal is to shed light on the structure of an optimal or near-optimal solution to \eqref{eq:probeps} and provide a comprehensive characterization of the added value of \eqref{eq:IC} compared to imposing exact IC.

Note that when $\eps=0$,  \eqref{eq:probeps} corresponds to the standard mechanism design problem. In the remainder of the paper we will use $(\mathcal{P}_0)$ to refer to the latter problem.  We denote by $\mathcal{M}(\varepsilon)$  the set of feasible mechanisms for problem  \eqref{eq:probeps} and use  $m$ to denote an element of this set. For a mechanism $m=(x,t) \in \mathcal{M}(\varepsilon)$, we denote by $\Pi(m)$ the corresponding revenues.
Note that for any $\varepsilon>0$, $ \mathcal{M}(0) \subset \mathcal{M}(\varepsilon) $.

In what follows we will be interested in quantifying the revenue implications associated with  relaxing the set of feasible mechanisms from $\mathcal{M}(0)$ to $\mathcal{M}(\varepsilon)$.
 Let
 \begin{equation*}
 \Pi^\star(\mathcal{M}(\eps))\triangleq \sup_{m\in \mathcal{M}(\eps)} \Pi(m)
 \end{equation*}
 denote the optimal revenue under constraints $\mathcal{M}(\eps)$.  We are interested in quantifying the difference in revenues
\begin{equation}\label{eq:key-Metric}
 \Pi^\star(\mathcal{M}(\eps))- \Pi^\star(\mathcal{M}(0)),
\end{equation}
and how it changes as a function $\varepsilon$ when the latter is small.
It is worth noting that by comparing $ \Pi^\star(\mathcal{M}(\eps))$ to $ \Pi^\star(\mathcal{M}(0))$, we are essentially comparing the values of  two infinite-dimensional linear programs, where in one of these, the right-hand side of the incentive compatibility constraint is relaxed. As noted in the introduction, a na\"ive comparison,  based on the intuition gleaned from finite dimensional linear programs,
may lead one to expect that the difference above  scales (at most) in a piecewise linear fashion. However, we will see that this not the case, as we are relaxing an infinite number of constraints.

We aim to quantify the gap in  \Cref{eq:key-Metric} in an  instance dependent manner, i.e., to understand the magnitude of this gap as a function of the underlying distribution. %
We first introduce the following technical assumption on the distribution of values.

\begin{assumption} \label{assumption1}
The distribution $F$ admits a density, $f$, and has support in $\mathcal{S}=[0,\bv]$ with $0 \le \bv <\infty$. Moreover, the density $f$ has bounded variation.
\end{assumption}
This assumption is mild and is satisfied by a broad set of distributions commonly studied in the literature. We assume that the lower bound of the support is zero to simplify the analysis, but our result could be extended to distributions whose support does not include zero.
Note that bounded variation implies that $f(x)$ is bounded. We use $\bar{f}$ to denote an upper bound.

To study instance dependent bounds, we  parametrize distribution families through their local behavior around the price that maximizes their corresponding revenue function. Formally, let $\bar{F}(v)$ denote $1-F(v)$, and let $R(v)$  denote $v\cdot \bar{F}(v)$. We define an optimal price and optimal revenue by
\begin{equation}
\ps \in \argmax_{v\in[0,\bv]} R(v),\quad \text{and}\quad \rs = \max_{v\in[0,\bv]} R(v).
\end{equation}
Our parametrization of distributions is then given by the following definition.
\begin{definition}[Local $\alpha$-power envelopes] \label{def1} We say that the revenue function $R(v)$ admits local $\alpha$-power envelopes if
there exists a unique interior solution $\ps$, and
there exists $\alpha \in (1,+\infty)$, and positive constants $\kappa_L,\kappa_U$ and a neighborhood of $\ps$, $\mathcal{N}_\ell=(\ps-\ell,\ps+\ell)\subset (0,\bv)$, such that
$$\kappa_L\alpha\cdot |v-\ps|^\alpha \leq (\ps-v)\cdot \dot{R}(v)\leq \kappa_U\alpha\cdot |v-\ps|^\alpha, \quad \forall v\in\mathcal{N}_\ell.
$$
\end{definition}
This definition characterizes the curvature of the revenue function around $\ps$.
A distribution with a local $\alpha$-power envelope is such that its associated revenue function has a power of $\alpha$ local behavior around the unique optimal price. For example, when $\alpha=2$ then the behavior is quadratic around $\ps$. The latter case can be thought as the prototypical one, and  holds for a wide range of distributions  (e.g., uniform, exponential,...). A distribution in the limiting case when $\alpha\uparrow \infty$ is almost flat (and approaches the isorevenue distribution) around the optimal price; while a distribution
in the limiting case when $\alpha\downarrow 1$ has a kink around $\ps$. That is, the local $\alpha$-power envelopes notation characterizes a very broad set of  curvatures around the optimal price $\ps$. We will show that the local behavior of $R(v)$, as captured by  $\alpha$, is a key driver of performance. %

Finally, we note that \Cref{def1} implies that
\begin{equation}\label{eq:def1-intuitive}
\kappa_L\cdot |v - \ps|^\alpha \le R(\ps) - R(v) \le \kappa_U \cdot|v - \ps|^\alpha.
\end{equation}
Moreover if $R(p)$ is locally concave, the condition above and \Cref{def1} are equivalent (see \Cref{sec:app_rev_aux-11} in the appendix for a formal statement and proof).

\paragraph{Notation.} We next introduce some notation  that will be used in the rest of the paper. In what follows we will be interested in quantities as $\eps \downarrow 0$. Let $g$ and $h$ be two functions, we write
$g(\eps)=\mathcal{O}(h(\eps))$  if there exists a positive constant $C$ and $\eps_0>0$ such that  $g(\eps)\leq C\cdot h(\eps)$ for all $\eps \in (0,\eps_0]$.
Additionally,
$g(\eps)=\tilde{\mathcal{O}}(h(\eps))$  if there exists a positive constant $C$ and $\eps_0>0$ such that  $g(\eps)\leq C\cdot h(\eps)\cdot\log(1/\eps)$ for all $\eps \in (0,\eps_0]$.
We write
$g(\eps)=\Omega(h(\eps))$  if there exists a positive constant $C$ and $\eps_0>0$ such that  $C\cdot h(\eps)\le g(\eps)$ for all $\eps \in (0,\eps_0]$. %

\subsection{Initial Remarks on the Impact of Relaxing IC Constraints}
Before we proceed with our main analysis we elaborate on the challenges associated with the $\eps$-IC mechanism design problem.
The coming discussion will also allow to lay down the main distinctive  features of \eqref{eq:probeps} compared to
$(\mathcal{P}_0)$.
In particular,  we will highlight how the arguments and reductions from classical mechanism design  no longer apply as soon as $\eps$ becomes positive.

The classical mechanism design problem, $(\mathcal{P}_0)$, is amenable to a reformulation that greatly simplifies its analysis and that leads to a simple posted-price solution. Fundamentally, there are three key properties that enable this:
\begin{enumerate}

\item[(P1)] The allocation $\all(\cdot)$ is monotone non-decreasing. This is a consequence of $(\text{IC}_0)$.

\item[(P2)] By $(\text{IC}_0)$, truthful reporting is an optimal reporting strategy. This obviates the need to keep track of the buyer's optimal reporting strategy.

\item[(P3)] The transfers $t(\cdot)$ can be expressed as a linear functional of the allocation $x(\cdot)$. This is obtained by leveraging $(\text{IC}_0)$ in conjunction with an application of the envelope theorem.%

\end{enumerate}

These three properties lead to a reduction of $(\mathcal{P}_0)$  as
\begin{flalign}\label{eq:probstan}%
\max_{\all(\cdot)}& \left\{ \int_{\mathcal{S}} \all(v)\psi(v)f(v)dv  \: : \: \text{s.t}\quad  \all(\cdot) \text{ is non-decreasing} \right\},
\end{flalign}
where we denote the \textit{virtual value} function by $\psi:\mathcal{S}\rightarrow \RR$ with $\psi(v) \triangleq v-(1-F(v))/f(v)$ for all $v\in \mathcal{S}$. This reformulation was developed and solved using ironing in \cite{myerson1981optimal}. A more direct approach (without ironing) can be found in \cite{rileyzeckhauser1983}. It can be shown that the  optimal allocation is a posted-price mechanism $\ps$ that maximizes the revenue function
\begin{equation*}%
\ps \in \arg\max_{v\in\mathcal{S}} R(v). %
\end{equation*}

Let us now consider the case $\eps>0$ and consider the impact on the three properties above.  First, $(\text{IC}_0)$ implies that the allocation is a non-decreasing function, see property (P1).  However, as soon as  $\eps>0$,  a feasible allocation is not necessarily a non-decreasing function. The incentive compatibility constraint $(\text{IC}_\eps)$ only implies the following  ``approximate'' monotonicity property
\begin{equation*}%
(v-v')(\all(v) - \all(v')) \ge -2\eps,\quad \forall v,v'\in \mathcal{S}.
\end{equation*}

Second,  $(\text{IC}_0)$ implies that it is optimal for the buyer to report his type, see property (P2). When $\eps>0$, this is no longer necessarily true. Let  $u(v)=\max_{w\in \mathcal{S}} \{vx(w)-t(w)\}$ be the optimal utility of a buyer with value $v$,
and  $\vsr(v):\mathcal{S}\rightarrow \mathcal{S}$  be a \textit{best reporting mapping} of the buyer defined by
\begin{equation}\label{eq:best-str-Map}
\vsr(v)\in \argmax_{w\in \mathcal{S}}\Big\{v\cdot \all(w)-\tra(w)\Big\}.
\end{equation}
While  $(\text{IC}_0)$ implies that $\vsr(v)=v$ is a best reporting strategy, $(\text{IC}_\eps)$ only implies
\begin{equation}\label{eq:IC-no-tight}
v\cdot \all(v)-\tra(v)\ge u(v)-\eps, \quad \forall v\in \mathcal{S}.
\end{equation}
That is, one only knows that truthful reporting leads to a utility in the interval $[u(v)-\eps,u(v)]$. %
Consequentially, in the $\eps$-IC setting, from an optimization perspective, the need to account for the endogenous object $\vsr(\cdot)$ is not obviated. As a matter of fact, we will see that this object plays a crucial role in both deriving impossibility results but also in optimizing over mechanisms.

Finally, another important difference that emerges relates to the transfers $\tra(\cdot)$, and the lack of a strong  characterization  readily available such as in property (P3). In particular, when $\eps>0$, according to the envelope theorem (see e.g., \citet{milgrom2002envelope}) we have
\begin{equation}\label{eq:env-report}
u(v) = u(0)+\int_{0}^{v}\all(\vsr(s))ds, \quad \forall v\in \mathcal{S}.
\end{equation}
   By combining \cref{eq:IC-no-tight} and \cref{eq:env-report} we only obtain an inequality for $t(\cdot)$ as a function of $x(\cdot)$, $u(0)$ and $\vsr(\cdot)$,  as opposed to a full characterization of $\tra(\cdot)$. That is, we have
   \begin{equation}\label{eq:env-UB-t}
\tra(v) \:\le\: v\cdot \all(v)- u(0) -\int_{0}^{v}\all(\vsr(s))ds+\eps, \quad \forall v\in \mathcal{S}.
\end{equation}

As the above illustrates, as soon as $\epsilon>0$,  the seller's problem becomes one of a different nature. In particular, in the $\eps$-IC case, one must solve jointly for the allocation and transfers by also tracking a best reporting mapping. As we shall see, the latter will play a critical role in our approach.

\section{Upper Bound}\label{sec:upperboundper}

We begin our analysis of \eqref{eq:probeps} for general mechanisms by developing an upper bound for \cref{eq:key-Metric}.
As an initial benchmark, we review a known upper bound of $\mathcal{O}(\eps^{1/2})$ . This known upper bound provides an initial range of what is achievable for the optimal gains on the revenue improvement when the buyer is not a perfect optimizer. Then, we establish our first main result: a new instance dependent
upper bound of  a different order
$\tilde{\mathcal{O}}(\eps^{\alpha/(2\alpha-1)})$ on the revenue improvement that is based on  a novel approach and techniques.

\subsection{An Initial Upper Bound via Nisan's Rounding Argument} \label{sec:Nisan}
We first present an  upper bound  that follows directly from  a classical rounding argument,  attributed to Noam Nisan (we are grateful to Anonymous for this reference).  In mechanism design, the rounding argument has been used extensively to show that an approximate incentive-compatible mechanism can be transformed into an exact incentive-compatible mechanism (see e.g., \citealt{balcan2005mechanism}).
Here, one may use it to bound the maximal revenues one could gain when solving under approximate, as opposed to exact, incentive compatibility. We provide a proof in the appendix for completeness.

\begin{proposition}[Upper Bound via Rounding]\label{prop:smoothing}
\begin{equation*}
\Pi^\star(\mathcal{M}(\eps)) \: -\: \Pi^\star(\mathcal{M}(0)) \:=\:\\ \mathcal{O}(\eps^{1/2}).
\end{equation*}
\end{proposition}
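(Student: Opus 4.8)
The plan is to implement Nisan's rounding argument, which transforms an $\eps$-IC mechanism into an exactly IC mechanism at a controlled revenue cost. Fix any feasible $m = (x,t) \in \mathcal{M}(\eps)$ with revenue $\Pi(m)$. The idea is to discretize the type space into a grid of meshsize $\Delta$ (to be optimized), and construct a new mechanism $\tilde m$ that, on each grid cell, offers the buyer only the menu option associated with the \emph{top} of that cell (or more precisely the option that a type at the top of the cell would select under the best response to $m$). Because a buyer with value $v$ differs from the grid point $\lceil v \rceil_\Delta$ above him by at most $\Delta$, the allocation he receives is worth at most $\bar v \cdot \Delta$ less in value terms, and crucially the coarsened menu is now exactly IC: within a cell everyone is offered one option, and across cells the $\eps$-IC inequalities, combined with the fact that we moved each type to a nearby type, upgrade to exact IC provided $\eps$ is absorbed by the value gap. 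The revenue sacrificed by this coarsening is then on the order of $\eps/\Delta$ (from forgoing, per unit mass, the slack one has to leave to restore exact IC, which scales like $\eps$ divided by the meshsize after the telescoping across cells) plus a term of order $\Delta$ (from the value lost by rounding types within a cell).

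The key steps, in order, are: (i) set up the grid $\{0, \Delta, 2\Delta, \ldots\}$ on $[0,\bar v]$ and define the rounded allocation/transfer by assigning to all types in $[(k-1)\Delta, k\Delta)$ the menu entry selected under $m$'s best response by the type $k\Delta$; (ii) verify individual rationality is preserved, possibly after subtracting a small uniform discount from transfers; (iii) verify that the rounded mechanism is exactly IC — here one chains the $\eps$-IC constraints of $m$ along adjacent grid points and uses that consecutive grid points differ by $\Delta$, so that the accumulated $\eps$'s are dominated once transfers are adjusted downward by an amount of order $\eps/\Delta$ per step in a telescoping fashion; (iv) bound the revenue loss: $\Pi(m) - \Pi(\tilde m) \le c_1 \bar v \Delta + c_2 \eps/\Delta$ for constants depending only on $\bar v$; (v) optimize over $\Delta$, choosing $\Delta \asymp \sqrt{\eps}$, which yields $\Pi(m) - \Pi(\tilde m) = \mathcal{O}(\sqrt{\eps})$; (vi) since $\tilde m \in \mathcal{M}(0)$, conclude $\Pi(m) \le \Pi^\star(\mathcal{M}(0)) + \mathcal{O}(\sqrt\eps)$, and take the supremum over $m \in \mathcal{M}(\eps)$ to obtain the claim.

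The main obstacle I anticipate is step (iii): making the rounding genuinely restore \emph{exact} incentive compatibility rather than merely a smaller $\eps'$-IC condition. The subtlety is that after coarsening, a type $v$ in cell $k$ could still find it profitable to misreport into a distant cell $j$, and the $\eps$ slack from the original $\eps$-IC constraints does not automatically vanish — one must exploit the monotone (or approximately monotone) structure of the allocation and the fact that within-cell all types face the same option, so that the relevant deviations are effectively between representative types that differ by multiples of $\Delta$, and the transfer adjustment can be made monotone in the cell index to kill the residual slack. A clean way to handle this is to track the indirect utility function of the best response and show it is Lipschitz with constant at most $1$ (it is an envelope of affine functions of slope $x(\cdot) \in [0,1]$, cf.\ \eqref{eq:env-report}), so that rounding a type down by $\Delta$ changes its attainable utility by at most $\Delta$; combining this Lipschitz bound with the $\eps$ slack gives exactly the $\eps/\Delta$ per-cell correction needed. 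Since the proposition is stated for completeness and the argument is classical, I would keep the exposition tight and defer the routine verification of IR and IC to short displayed inequalities rather than belaboring constants.
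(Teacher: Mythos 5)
Your proposal is correct in its overall architecture and lands on the right scaling, but it implements a genuinely different version of Nisan's rounding than the paper does, and the paper's version sidesteps exactly the technical hazard you flag in step (iii).

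The paper does not discretize the type space at all. Instead, given an $\eps$-IC mechanism $(x,t)$, it scales the \emph{payments} down by a factor $(1-\delta)$ and lets the buyer exactly best-respond to the discounted menu: set $v^\star_\delta(v)\in\arg\max_w\{v\,x(w)-(1-\delta)t(w)\}$ and define $\tilde m=(x(v^\star_\delta(\cdot)),(1-\delta)t(v^\star_\delta(\cdot)))$. Because $\tilde m$ is the outcome of an exact best response, it is exactly IC by construction — there is nothing to verify. Adding the $\eps$-IC inequality for the pair $(v,v^\star_\delta(v))$ to the defining inequality of $v^\star_\delta(v)$ and cancelling the allocation terms yields $t(v^\star_\delta(v))\ge t(v)-\eps/\delta$ in one line, so the revenue loss is at most $\delta\,\Pi^\star(\mathcal{M}(0))/(1-\delta)+\eps/\delta$; taking $\delta=\sqrt{\eps}$ finishes. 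No grid, no monotonicity, no ironing, no IR re-verification beyond noticing that discounting payments only raises utilities.

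Your grid-coarsening route is a legitimate alternative and the $\Delta\asymp\sqrt\eps$ optimization is the right endgame, but steps (iii)–(iv) are heavier than the sketch suggests. First, the direct allocation $x(\cdot)$ in an $\eps$-IC mechanism is only \emph{approximately} monotone, so a step-function menu built from $x(k\Delta)$ need not be implementable at any price schedule; you must instead build the menu from the exact best-response allocation $x(v^\star(k\Delta))$, which is monotone, and then relate the original revenue $\E[t(v)]$ back to the best-response revenue $\E[t(v^\star(\cdot))]$, incurring an additional $\eps$-order slippage that you would need to track. Second, once you restore exact IC on the grid by the standard Myersonian payment formula, you still need a comparison between that payment and the original $t$, and this comparison involves the envelope integral $\int x(v^\star(s))\,ds$ and its Riemann-sum error (your ``Lipschitz of indirect utility'' observation is the right tool here, but it bounds the rounding loss, not the $\eps$-slack term — the two need separate accounting). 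None of this is fatal, and the bookkeeping does produce $c_1\Delta+c_2\eps/\Delta$, but the paper's discounting trick buys you exact IC for free and compresses the whole argument into roughly five lines, which is why the paper takes that route.
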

The rounding argument provides a first step towards gauging the impact of imperfect optimizers on the seller's revenue. It provides an instance independent impossibility result on the incremental revenues that can be garnered by the seller.
In the present paper, we aim at understanding the instance dependent optimal performance, and, in turn, a natural question is whether the bound above is tight once we specify a family of distributions. In other words, we ask whether it is possible to improve upon this bound by considering the characteristics of a given family of distributions. Furthermore, we are also interested in \textit{how} to achieve optimal or near-optimal performance.
 We next develop a novel path-based duality  approach to obtain a new impossibility result that leads to an improved upper bound for \cref{eq:key-Metric}.

\subsection{Upper Bound via Path-Based Duality Approach}\label{sec:path-based-upper-bound}

Our first main theorem provides a new impossibility result. %

\begin{theorem}[Performance Upper Bound]\label{thm:up-bd-per} Suppose \Cref{assumption1} holds, and that the  revenue function admits local $\alpha$-power envelopes in a neighborhood $\mathcal{N}_\ell$ of $\ps$ and
$
\inf_{p\notin \mathcal{N}_\ell} |\dot{R}(p)|>0.
$
Then
\begin{equation*}
\Pi^\star(\mathcal{M}(\eps)) \: -\: \Pi^\star(\mathcal{M}(0)) \:=\:\tilde{\mathcal{O}}(\eps^{\alpha/(2\alpha-1)}).
\end{equation*}
\end{theorem}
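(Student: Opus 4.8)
The plan is to produce, for every ``guessed'' reporting map $\phi:\mathcal S\to\mathcal S$, a relaxation of $(\mathcal P_\eps)$ whose value $V(\phi)$ dominates $\Pi^\star(\mathcal M(\eps))$, to dualize that relaxation by Lagrangian duality, and finally to \emph{choose} $\phi$ together with the dual multipliers so that the resulting certificate sits just above $\rs=\Pi^\star(\mathcal M(0))$, the excess being $\tilde{\mathcal O}(\eps^{\alpha/(2\alpha-1)})$. The relaxation is the \emph{path-based relaxation}: fix a measurable $\phi:\mathcal S\to\mathcal S$ and, in $(\mathcal P_\eps)$, discard for each value $v$ every $(\text{IC}_\eps)$ constraint except the one against the report $v'=\phi(v)$, keeping $(\text{IR})$. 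Since we only delete constraints, $V(\phi)\ge\Pi^\star(\mathcal M(\eps))$ for \emph{every} $\phi$, so it suffices to exhibit one good $\phi$. The retained constraints couple $t(v)$ with $t(\phi(v))$ and $x(v)$ with $x(\phi(v))$ only along the graph $\{(v,\phi(v)):v\in\mathcal S\}$ --- a path in $\mathcal S\times\mathcal S$, whence the name --- and the relaxed problem is still an infinite-dimensional linear program in $(x,t)$.

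I would then dualize it: attach nonnegative multipliers $\mu(\cdot)$ to $(\text{IR})$ and $\lambda(\cdot)$ to the path constraint. Because $t$ is unconstrained, stationarity of the Lagrangian in $t$ imposes a ``balance'' relation between $f$, $\mu$, $\lambda$ and the image of $\lambda$ under $\phi$; because $x$ ranges over $[0,1]$, the inner supremum over $x$ keeps only the positive part of the coefficient of $x$. This produces a weak-duality bound of the form
\[
 V(\phi)\ \le\ \eps\,\|\lambda\|_1\ +\ \int_{\mathcal S}\big(c_\phi(v;\lambda)\big)^{+}\,dv ,
\]
where $c_\phi(\cdot;\lambda)$ is a coefficient built from $v$, $f$ and the way $\phi$ transports $\lambda$; any dual-feasible pair $(\mu,\lambda)$ then yields a legitimate upper bound on $\Pi^\star(\mathcal M(\eps))$.

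It remains to choose $\phi,\mu,\lambda$. Away from a window $\mathcal N_\delta=(\ps-\delta,\ps+\delta)$ with $\delta\le\ell$, take $\phi$ and $\lambda$ to mimic the optimal dual certificate of $(\mathcal P_0)$ (the one certifying $\rs$ through the ironed virtual value): the hypothesis $\inf_{p\notin\mathcal N_\ell}|\dot R(p)|>0$ makes that certificate strictly slack outside $\mathcal N_\ell$, so the contribution of $\{v\notin\mathcal N_\delta\}$ to the bound is exactly the $\rs$-part with no $\eps$-dependent leakage. Inside $\mathcal N_\delta$, let $\phi$ be a genuine near-truthful best-response-type map on the window and pick $\lambda$ supported there; the local $\alpha$-power envelope of \Cref{def1}, which pins $(\ps-v)\dot R(v)\asymp|v-\ps|^\alpha$, bounds the window's contribution to $\int(c_\phi)^{+}$ by $\mathcal O(\delta^{\alpha})$, while the mass of $\lambda$ needed to satisfy the balance relation on the window is $\tilde{\mathcal O}(\delta^{-(\alpha-1)})$, so $\eps\|\lambda\|_1=\tilde{\mathcal O}(\eps\,\delta^{-(\alpha-1)})$. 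Collecting the two errors, $\Pi^\star(\mathcal M(\eps))-\Pi^\star(\mathcal M(0))=\tilde{\mathcal O}(\delta^{\alpha})+\tilde{\mathcal O}(\eps\,\delta^{-(\alpha-1)})$, and balancing at $\delta\asymp\eps^{1/(2\alpha-1)}$ --- which is $\le\ell$ once $\eps$ is small --- yields $\tilde{\mathcal O}(\eps^{\alpha/(2\alpha-1)})$. Checking dual feasibility of $(\mu,\lambda)$ and that the truncation at $\bv$ and the bounded variation of $f$ (\Cref{assumption1}) cause no trouble is routine.

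The main obstacle is this last step: building the guessed best response $\phi$ and the multiplier $\lambda$ on the window and showing the two error terms genuinely have orders $\delta^{\alpha}$ and $\eps\,\delta^{-(\alpha-1)}$. This is delicate because $\phi$ enters the dual both through the balance relation and through $c_\phi$, so $\phi$ and $\lambda$ must be designed jointly; the window estimate must use only the one-sided envelope inequalities of \Cref{def1}; and it is precisely driving $\|\lambda\|_1$ down to order $\delta^{-(\alpha-1)}$ --- rather than the cruder estimate that would only recover Nisan's $\eps^{1/2}$ --- that forces the exponent to be $\alpha/(2\alpha-1)$, with the near-critical integral in that estimate accounting for the logarithmic factor in $\tilde{\mathcal O}$.
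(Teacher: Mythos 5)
Your high-level plan is exactly the paper's: a path-based relaxation (keep only the $\eps$-IC constraint along the graph of one reporting map), Lagrangian duality with stationarity in $t$ pinning down a balance relation and the $[0,1]$ range of $x$ producing a positive-part integral, and then an optimization over the path that balances two error terms $\delta^{\alpha}$ and $\eps\,\delta^{-(\alpha-1)}$ at $\delta\asymp\eps^{1/(2\alpha-1)}$. Your predicted orders match the paper's (the paper's $\beta=(\alpha-1)/(2\alpha-1)$, window size $\eps^{\beta/(\alpha-1)}$, and step count $\tilde{\mathcal{O}}(\eps^{-\beta})$ are exactly your $\delta$, $\delta$, and $\delta^{-(\alpha-1)}$). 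So the skeleton is right.

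The gap is precisely where you say it is, and you under-estimate how little freedom remains once $\phi$ is chosen. In the dualization, stationarity in $t$ \emph{determines} $\lambda$ on all of $[\mu,\bv]$ through the functional recursion $\Lic(v)=f(v)+\dot w(v)\,\Lic(w(v))$ (and $\Lic=f$ on $[\nu_0,\bv]$), so you cannot ``pick $\lambda$ supported on the window'' or ``pick $\lambda$ to mimic the $(\mathcal P_0)$ certificate away from the window'': both $\lambda$ and the coefficient of $x$ are forced by $\phi$ alone, and the whole burden of the design falls on the path. The paper resolves this by choosing $w(\cdot)$ to be piecewise linear with slopes $m=1+\Theta(\eps^\beta)$ above $\ps$ and $2-m$ below, minimum gap $w(\ps)-\ps=\eps^{1-\beta}$; the recursion is then solved in quasi-closed form by iterating $w$ (the ``steps'' $\nu_k$), the number of steps $K$ is $\tilde{\mathcal{O}}(\eps^{-\beta})$ by a geometric-decay count (this is where the $\log(1/\eps)$ comes from — it is the step count, not a ``near-critical integral''), and a telescoping identity gives $\|\Lic\|_1\le K+1$. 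For the positive-part term, the paper proves a clean integral identity, $\int_x^{\bv}\Delta\le R(x)+2(w(x)-x)$, and sandwiches $\Delta$ between $-\dot R\pm C\eps^\beta$ using the bounded variation of $f$; the local envelope then locates the sign change of $\Delta$ at $\ps\pm\Theta(\eps^{\beta/(\alpha-1)})$, yielding your $\delta^\alpha$ term. None of this — the piecewise-linear $w$, the step/telescoping argument controlling $\|\Lic\|_1$, the $R(x)+2(w(x)-x)$ lemma, the $\Delta_L\le\Delta\le\Delta_U$ sandwich — appears in your proposal, and your sketch of an $\mathcal{N}_\delta$ patch glued to the $(\mathcal P_0)$ dual would not in fact satisfy the balance relation. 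So: correct strategy, correct exponent arithmetic, but the proof as written does not close and the missing pieces are the heart of the argument.
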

This result shows that the seller can benefit by at most $\eps^{\alpha/(2\alpha-1)}$ up to polylogarithmic terms by offering a mechanism that is $\eps$-incentive compatible. (We conjecture that  it might be possible to remove the polylogarithmic terms from the upper bound.)  \Cref{thm:up-bd-per} presents an instance dependent impossibility result in the presence of imperfect optimizer. Furthermore, our family of bounds (across of $\alpha$) establishes  that the limit on achievable performance is driven by the local behavior of the revenue function around the optimal posted price.

Additionally, since for any $\alpha\in(1,+\infty)$ we have that $1/2< \alpha /(2\alpha-1)$, our result also implies the $\eps^{1/2}$ scaling (obtained through the rounding argument) for the class of distributions that satisfies the assumptions of the theorem. We note that the last assumption in \Cref{thm:up-bd-per} implies that the revenue function does not have local maxima besides $\ps$.

\subsubsection{Main proof ideas}
Our upper bound is based on an appropriate relaxation that can be analyzed through a duality argument. For our duality approach, we will relax most IC constraints except those that go through a carefully constructed path. This is one of our critical observations and it is what makes the Proof of \Cref{thm:up-bd-per} of a different nature than previous approaches for classical mechanism design problems. Indeed, constructing the path is analogous to choosing a best reporting function for the $\eps$-IC problem.  As we will see, this is an intricate task that requires going beyond approaches that consider local deviations in the incentive compatibility constraints and, instead, entails optimizing over best-response functions. Next, we explain our approach in detail.

\paragraph{Path-based relaxation of IC constraints.} To construct the path, let us first fix $\mu$ and $\nu_0$ such that $0 < \mu<\nu_0 < \bv$. The values of $\mu$ and $\nu_0$ will parametrize the path. Define the candidate best response function $\vsr:[\mu,\bv]\rightarrow [0,\nu_0]$ such that $\vsr(v)$ is strictly increasing and continuous, $\vsr(\mu)=0$ and $\vsr(\bv)=\nu_0$. Later we show that this candidate is indeed a best response function. We also assume that
$\vsr(v)< v$ for all $v$ so that the best response always under-reports. The function $w:[0,\nu_0]\rightarrow [\mu,\bv]$ is the inverse of $\vsr(\cdot)$. We note that $\vsr(\cdot)$ is constructed to have all the properties that are desirable on a reporting function. Later, we will optimize over choices of the parameters that satisfy the aforementioned properties.

 Given $\mu$ and $\nu_0$, we consider the following  relaxed version of  \eqref{eq:probeps} in which the IR constraint is only imposed on $[0,\mu]$ and the $\eps$-IC constraint is imposed on the path given by $\vsr(\cdot)$:
\begin{flalign}\label{eq:probepsdel}\tag{$\mathcal{P}_{\text{path}}$}
\max_{\all(\cdot),\tra(\cdot)} & \E_v[\tra(v)]\\\label{eq:IRd} \tag{$\text{IR}_\mu$}
\text{s.t. } &  v\cdot \all(v)-\tra(v)\geq 0,\quad v\in [0,\mu]  \\\label{eq:ICd} \tag{$\text{IC}_\mu$}
&v\cdot \all(v)-\tra(v)\geq
 v\cdot \all(\vsr(v))-\tra(\vsr(v))
-\eps,\quad \forall v\in [\mu,\bv],\\\nonumber
&\all:\Theta\rightarrow [0,1] \quad \text{and}\quad \tra:\Theta\rightarrow \RR.
\end{flalign}
We let $\mathcal{M}_{\text{path}}(\eps)$ denote the set of feasible mechanisms.
Clearly, $\Pi^\star(\mathcal{M}(\eps))\leq \Pi^\star(\mathcal{M}_{\text{path}}(\eps))$. The first set of constraints in \eqref{eq:probepsdel} come from the individual rationality constraints in the original problem. However, we only consider \eqref{eq:IR} for types with low values (below $\mu$). Intuitively, individual rationality constraints are always binding for low  value buyers. Meanwhile, incentive compatibility constraints are in general binding for high value buyers. This leads to the  second set of constraints which come from the original approximate incentive compatibility constraints. Here we consider the pairs $(v,\vsr(v))$ for $v\ge \mu$. That is, for any type  $v\geq \mu$ we consider a downward misreport of size $v-\vsr(v)$. The latter choice captures an essential property of most optimal contracts in mechanism design, namely, higher types have an incentive to report low values and so the mechanism should prevent such deviations.
We note that there are many choices for an upper bound problem such as  \eqref{eq:probepsdel}. However, the above problem strikes a balance in the sense that it captures key properties of optimal mechanisms (we keep the constraints that are most likely to bind), and it also parametrizes these properties in a simple and, as we will shortly see, tractable way.

For each constraint \eqref{eq:IRd} and \eqref{eq:ICd}, we introduce  dual variables $\lambda^\ir : [0,\mu] \rightarrow \mathbb R_+$ and
 $\lambda^\ic : [\mu,\bar v] \rightarrow \mathbb R_+$, respectively. After dualizing these constraints and optimizing over $(\all,\tra)$ we obtain the following result.
 \begin{lemma}[Weak duality]\label{lem:weak-dual-upper}
 Consider the dual problem
\begin{flalign}
\label{eq:dualproblem}\tag{$\mathcal{D}$}
 \Pi^\star(\mathcal{D})\triangleq \min_{
 \substack{
 \Lic : [\mu,\bv] \rightarrow \mathbb R_+
}
 }& \Phi_1(\Lic)+ \Phi_2(\Lic)\\\nonumber
\text{s.t.}\:\:\:\:\:\:\:\: & \Lic(v) = f(v),\quad \text{a.e on } [\nu_0,\bv], \\ \nonumber
&\Lic(v) = f(v) + \dot{\w}(v)\Lic(\w(v)),\quad \text{a.e on } [\mu,\nu_0],
 \end{flalign}
 where
 \begin{equation*}
\Phi_1(\Lic)=\eps \int_{\mu}^{\bar v} \Lic(v) \text{d} v\quad \text{and}\quad
\Phi_2(\Lic)=\int_{0}^{\bar v} \Big( v f(v) - \dot{\w}(v)(v-\w(v)) \Lic(\w(v) ) \1{v \in [0, \nu_0]} \Big)^+ \text{d} v.
  \end{equation*}
 Then,
 \begin{equation*}
  \Pi^\star(\mathcal{M}_{\text{path}}(\eps))\leq \Pi^\star(\mathcal{D}).
 \end{equation*}
 \end{lemma}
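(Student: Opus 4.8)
The plan is to prove the bound by Lagrangian weak duality applied to the infinite-dimensional linear program \eqref{eq:probepsdel}. First I would attach nonnegative multiplier functions $\Lir:[0,\mu]\to\RR_+$ to \eqref{eq:IRd} and $\Lic:[\mu,\bv]\to\RR_+$ to \eqref{eq:ICd}, and form the Lagrangian
\begin{align*}
L(\all,\tra,\Lir,\Lic)={}&\E_v[\tra(v)]+\int_0^\mu\Lir(v)\big(v\all(v)-\tra(v)\big)\,dv\\
&+\int_\mu^{\bv}\Lic(v)\big(v\all(v)-\tra(v)-v\all(\vsr(v))+\tra(\vsr(v))+\eps\big)\,dv.
\end{align*}
Since \eqref{eq:probepsdel} is a maximization problem and the multipliers are nonnegative, every feasible pair $(\all,\tra)$ satisfies $\E_v[\tra(v)]\le L(\all,\tra,\Lir,\Lic)$; hence $\Pi^\star(\mathcal{M}_{\text{path}}(\eps))\le\sup_{\all,\tra}L(\all,\tra,\Lir,\Lic)$ for any choice of the multipliers, the supremum being over measurable $\all$ with values in $[0,1]$ and measurable real-valued $\tra$. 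It then remains to evaluate this inner supremum for multipliers satisfying the constraints of \eqref{eq:dualproblem}.

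The one substantive manipulation is a change of variables. In the two terms of $L$ involving $\all(\vsr(v))$ and $\tra(\vsr(v))$ I would substitute $p=\vsr(v)$, i.e.\ $v=\w(p)$, $dv=\dot\w(p)\,dp$; since $\vsr$ is a strictly increasing continuous bijection of $[\mu,\bv]$ onto $[0,\nu_0]$, these become integrals over $p\in[0,\nu_0]$ carrying a Jacobian factor $\dot\w(p)$. (This uses that the candidate path may be taken absolutely continuous --- one of the desirable properties we are free to impose on $\vsr$ --- so that $\dot\w$ exists a.e.\ and the substitution formula applies.) Regrouping, $L$ equals the constant $\eps\int_\mu^{\bv}\Lic(v)\,dv=\Phi_1(\Lic)$ plus $\int_0^{\bv}c_t(v)\,\tra(v)\,dv+\int_0^{\bv}c_x(v)\,\all(v)\,dv$, where the coefficients $c_t$ and $c_x$ are read off piecewise on $(0,\mu)$, $(\mu,\nu_0)$ and $(\nu_0,\bv)$; only on $[0,\nu_0]$ do the substituted terms contribute.

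Next I would eliminate $\tra$ and then $\all$. Since $\tra$ is an unconstrained real-valued function, $\sup_\tra L=+\infty$ unless $c_t\equiv0$ a.e., and imposing $c_t\equiv0$ forces exactly $\Lir(v)=f(v)+\dot\w(v)\Lic(\w(v))$ on $[0,\mu]$ --- which is automatically $\ge0$, so $\Lir$ drops out --- together with $\Lic(v)=f(v)$ a.e.\ on $[\nu_0,\bv]$ and $\Lic(v)=f(v)+\dot\w(v)\Lic(\w(v))$ a.e.\ on $[\mu,\nu_0]$, which are precisely the feasibility constraints of \eqref{eq:dualproblem} (and in fact pin down $\Lic$ uniquely, so the minimization there is over a single point). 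Nonnegativity $\Lic\ge0$ then follows by iterating the recursion: since $\w(v)>v$ on $[0,\nu_0]$, $\w$ has no fixed point there, so the orbit $\w^{\circ k}(v)$ is strictly increasing and must leave $[\mu,\nu_0]$ in finitely many steps, landing where $\Lic=f\ge0$; then $f\ge0$ and $\dot\w\ge0$ propagate nonnegativity back down the orbit. With $c_t\equiv0$ imposed, substituting these relations into $c_x$ collapses it exactly to the integrand of $\Phi_2$, and since $\all(v)\in[0,1]$ we get $\sup_\all\int_0^{\bv}c_x(v)\,\all(v)\,dv=\int_0^{\bv}c_x(v)^+\,dv=\Phi_2(\Lic)$. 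Therefore $\sup_{\all,\tra}L=\Phi_1(\Lic)+\Phi_2(\Lic)=\Pi^\star(\mathcal{D})$, and combined with the weak-duality inequality above this gives $\Pi^\star(\mathcal{M}_{\text{path}}(\eps))\le\Pi^\star(\mathcal{D})$.

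I expect the main difficulty to be careful bookkeeping rather than conceptual depth: carrying the substitution $p=\vsr(v)$ correctly through both the $\all$- and the $\tra$-terms, and tracking the three sub-intervals $(0,\mu)$, $(\mu,\nu_0)$, $(\nu_0,\bv)$ so that $c_t$ and $c_x$ come out in exactly the stated form, while checking that the regularity the substitution needs --- absolute continuity of $\vsr$, existence a.e.\ of $\dot\w$, validity of the change-of-variables formula --- is genuinely furnished by the construction of the path. A secondary, routine technicality is justifying the two interchange steps in the infinite-dimensional setting: ``$\sup_\tra L=+\infty$ unless $c_t\equiv0$'' (it suffices to perturb $\tra$ by bounded measurable functions, or to restrict to such $\tra$ with no loss of optimality) and the pointwise maximization of $\int c_x\,\all$ over $\all(\cdot)\in[0,1]$ (a standard measurable-selection / interchange argument); neither affects the value.
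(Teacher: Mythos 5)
Your proposal follows exactly the same route as the paper's proof: form the Lagrangian with multipliers on \eqref{eq:IRd} and \eqref{eq:ICd}, change variables $u=\vsr(v)$ in the misreport terms, annihilate the coefficient of the free transfer $\tra$ (which simultaneously pins down $\Lir$ on $[0,\mu]$ and produces the recursion defining $\Lic$ as the feasibility constraints of \eqref{eq:dualproblem}), and then maximize pointwise over $\all(v)\in[0,1]$ to get the positive part in $\Phi_2$. Your extra observations — that the dual constraints determine $\Lic$ uniquely so the ``minimization'' in \eqref{eq:dualproblem} is over a singleton, and that nonnegativity of $\Lic$ propagates down the orbit of $\w$ since $f\ge 0$ and $\dot\w\ge 0$ a.e.\ — are correct and match the paper's brief remark on this. (One small bookkeeping point worth confirming when you carry out the substitution: the coefficient of $\all(v)$ on $[0,\nu_0]$ comes out as $vf(v)-\dot\w(v)\bigl(\w(v)-v\bigr)\Lic(\w(v))$, so the $(v-\w(v))$ written in the displayed lemma statement has a sign typo; the formula in the body of the paper's proof has the correct $(\w(v)-v)$.)
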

The result above is obtained by first forming the Lagrangian for \eqref{eq:probepsdel}. Since the transfer $\tra(\cdot)$ is a free variable, all terms that multiply $t(v)$ must equal zero for (almost) every realization of the values. This leads to a set of conditions the dual variables should satisfy which, in turn, completely pin down the dual variable of the IR constraint $\Lir$ and give a functional equation that determines the dual variable $\Lic$. Therefore, in the dual problem we can eliminate $\Lir$ and state the problem in terms of $\Lic$ directly. Our dual problem is an infinite-dimensional mathematical program with constraints given by functional equations.

 The new upper bound objective has two terms. The first, $\Phi_1(\lambda^\ic)$, comes from the approximate incentive constraint \eqref{eq:ICd} while the second term, $\Phi_2(\lambda^\ic)$, comes from optimizing $\all(v)\in [0,1]$. Due to the functional equation that defines $\Lic$,  problem \eqref{eq:dualproblem} does not have a straightforward solution. Moreover, its solution depends on our choice of $\w(\cdot)$ which we have not explicitly defined yet. In order to circumvent this, we provide further bounds for the dual problem. In what follows, we develop structural properties for \eqref{eq:dualproblem}. These properties will provide guidelines for how to choose $\w(\cdot)$ and will lead us to develop a tight  bound for $\Pi^\star(\mathcal{D})$.

\paragraph{Structural properties.} We provide some definitions and a lemma that simplifies $\Lic(\cdot)$. We define a sequence of thresholds $\{\nu_k\}$ iteratively by
\begin{equation*}
\nu_{k} = w(\nu_{k+1}), \quad k\geq 0, \quad \nu_{-1}=\bv.
\end{equation*}
We let $K$ be the first index such that $\nu_K\leq \mu$ and $\nu_{K-1}\geq \mu$. Since $w(\cdot)$ is strictly increasing, this sequence is well defined. Also, because $w(x)\ge  x$ we have that
$\nu_k\geq \nu_{k+1}$. We will refer to $K$ as the number of steps while every interval of the form $[\nu_{k},\nu_{k-1}]$ will be referred to as a step.  Intuitively, $K$ gives the number of times we need to iterate $w(\cdot)$ to reach the origin starting from $\nu_0$. Geometrically, this sequence is constructed by repeatedly moving leftward and downward until we touch the best response and the $45^\circ$ line, respectively. See \Cref{fig:steps} for a graphical representation of $\{\nu_k\}$.
\begin{figure}[t]
\begin{center}
\scalebox{0.9}{\begin{tikzpicture}[baseline=0pt]

\draw[->,line width=0.3mm] (0,0)--(0,5.5);
\draw[->,line width=0.3mm] (0,0)--(5,0);
\node at (-0.5,5.2){$\w(v)$};
\node at (5.1,-0.25){$v$};

\draw[dashed,line width=0.4mm](4,0)--(4,4.5);
\node at (4,-0.25) {$\nu_0$};
\draw[dashed,line width=0.4mm](4,4.5)--(0,4.5);
\node at (-0.25,4.5) {$\bv$};

\draw[line width=0.4mm](0,0)--(4,4);
\draw
    (3,0) coordinate (a) node[right] {}
    -- (0,0) coordinate (b) node[left] {}
    -- (2,2) coordinate (c) node[above right] {}
    pic[" $45^{\circ}$", draw=orange, <->, angle eccentricity=0.7, angle radius=2.35cm]
    {angle=a--b--c};

\draw[line width=0.4mm, domain=0:4, smooth, variable=\x, blue] plot ({\x}, {4.5/(1+0.275*(\x-4)*(\x-4))});

\draw[line width=0.3mm,red](4,4)--(3.3,4);
\draw[line width=0.3mm,red](3.3,4)--(3.3,3.3);
\draw[line width=0.3mm,red](3.3,3.3)--(2.85,3.3);
\draw[line width=0.3mm,red](2.85,3.3)--(2.85,2.85);
\draw[line width=0.3mm,red](2.85,2.85)--(2.55,2.85);
\draw[line width=0.3mm,red](2.55,2.85)--(2.55,2.55);
\draw[line width=0.3mm,red](2.55,2.55)--(2.3,2.55);
\draw[line width=0.3mm,red](2.3,2.55)--(2.3,2.3);
\draw[line width=0.3mm,red](2.3,2.3)--(2.1,2.3);
\draw[line width=0.3mm,red](2.1,2.1)--(2.1,2.3);
\draw[line width=0.3mm,red](2.1,2.1)--(1.95,2.1);
\draw[line width=0.3mm,red](1.95,1.95)--(1.95,2.1);
\draw[line width=0.3mm,red](1.95,1.95)--(1.825,1.95);
\draw[line width=0.3mm,red](1.825,1.825)--(1.825,1.95);
\draw[line width=0.3mm,red](1.825,1.825)--(1.7,1.825);
\draw[line width=0.3mm,red](1.7,1.7)--(1.7,1.825);
\draw[line width=0.3mm,red](1.7,1.7)--(1.55,1.7);
\draw[line width=0.3mm,red](1.55,1.55)--(1.55,1.7);
\draw[line width=0.3mm,red](1.55,1.55)--(1.4,1.55);
\draw[line width=0.3mm,red](1.4,1.4)--(1.4,1.55);
\draw[line width=0.3mm,red](1.4,1.4)--(1.15,1.4);
\draw[line width=0.3mm,red](1.15,1.15)--(1.15,1.4);
\draw[line width=0.3mm,red](1.15,1.15)--(0.8,1.15);
\draw[line width=0.3mm,red](0.8,0.8)--(0.8,1.15);
\draw[line width=0.3mm,red](0.78,0.78)--(0.0,0.78);

\draw[line width=0.3mm,blue](0,0.85)--(0.85,0.85);
\node at (-0.25,0.85) {$\mu$};

\draw[dashed,line width=0.3mm,red](3.3,0)--(3.3,3.3) node at (3.3,-0.25) {$\nu_1$};
\draw[dashed,line width=0.3mm,red](2.85,0)--(2.85,2.85) node at (2.9,-0.25) {$\nu_2$};

\draw[dashed,line width=0.3mm,red](2.55,0)--(2.55,2.55) node at (2.5,-0.25) {$\nu_3$};

\draw[dashed,line width=0.3mm,red](1.15,0)--(1.15,1.15) node at (1.45,-0.25) {$\nu_{K-1}$};

\draw[dashed,line width=0.3mm,red](0.78,0)--(0.78,0.78) node at (0.78,-0.25) {$\nu_K$};

\begin{scope}[xshift=7cm]
\draw[->,line width=0.3mm] (0,0)--(0,5.5);
\draw[->,line width=0.3mm] (0,0)--(8,0);
\node at (8.1,-0.25){$v$};

\draw[dashed,line width=0.3mm] (0,2.1)--(8,2.1);
\node at (-0.2,2.1) {$0$};

\draw[dotted,line width=0.4mm,shift={(0,0.1)}] plot file {upperbound_1.data};
\draw[densely dashed,blue,line width=0.4mm,shift={(0,0.1)}] plot file {upperbound_2_up.data};
\draw[red,loosely dashed,line width=0.4mm,shift={(0,0.1)}] plot file {upperbound_3_low.data};
\draw[ black,line width=0.4mm,shift={(0,0.1)}] plot file {upperbound_4_delayed.data};

\draw[line width=0.3mm] (4.5,0)--(4.5,2.1);
\node at (4.5,-0.2) {$\ps$};

\draw[line width=0.3mm] (2.2,0)--(2.2,2.1);
\node at (2.2,-0.2) {$\ps_U$};

\draw[line width=0.3mm] (5.2,0)--(5.2,2.1);
\node at (5.2,-0.2) {$\ps_L$};

\draw[black,line width=0.4mm,shift={(0,0.1)},name path=A] plot file {upperbound_4_delayed_region.data};
\draw[black,line width=0mm,name path=B]  (5.2,2.1)--(7.5,2.1);
\tikzfillbetween[of=A and B]{red, opacity=0.2}; 

\draw[densely dashed,blue,line width=0.4mm,shift={(0,0.1)},name path=C] plot file {upperbound_2_up_region.data};
\draw[black,line width=0mm,name path=D]  (2.2,2.1)--(5.2,2.1);
\tikzfillbetween[of=C and D]{blue, opacity=0.2};

\draw[densely dashed, blue,line width=0.3mm] (0.5,4.5)--(1.0,4.5) node at (1.6, 4.5) {$\Delta_U(v)$};
\draw[dotted,line width=0.3mm] (0.5,4.5-0.5)--(1.0,4.5-0.5) node at (1.6, 4.55-0.5) {$-\dot{R}(v)$};
\draw[black,line width=0.3mm] (0.5,4.5-1)--(1.0,4.5-1) node at (1.6, 4.5-1) {$\Delta(v)$};
\draw[loosely dashed,red,line width=0.3mm] (0.5,4.5-1.5)--(1.0,4.5-1.5) node at (1.6, 4.5-1.5) {$\Delta_L(v)$};
\end{scope}

\node at (2,-1) {\textbf{(a)}};
\node at (7+4,-1) {\textbf{(b)}};

\end{tikzpicture}}
\end{center}
\caption{\textbf{(a)} Illustration of $\w$---the inverse of the best-response mapping $\vsr(\cdot)$---and the sequence $\{\nu_k\}$. In our relaxation \eqref{eq:probepsdel}, we drop all IC constraints except for the pairs $(v,\vsr(v))_{v \in [\mu, \bv]}$ or, equivalently, $(w(v), v)_{v \in [0, \nu_0]}$.
\textbf{(b)} Illustration for the bound on $\Phi_2(\Lic)$---the integral of $\Delta(v)$. For $v\ge \ps_L$, we bound the integral using \Cref{lem:bound_rev}, and for $v$ in $[\ps_U,\ps_L]$ we bound it with the area under $\Delta_U(v)$.
}
\label{fig:steps}
\end{figure}
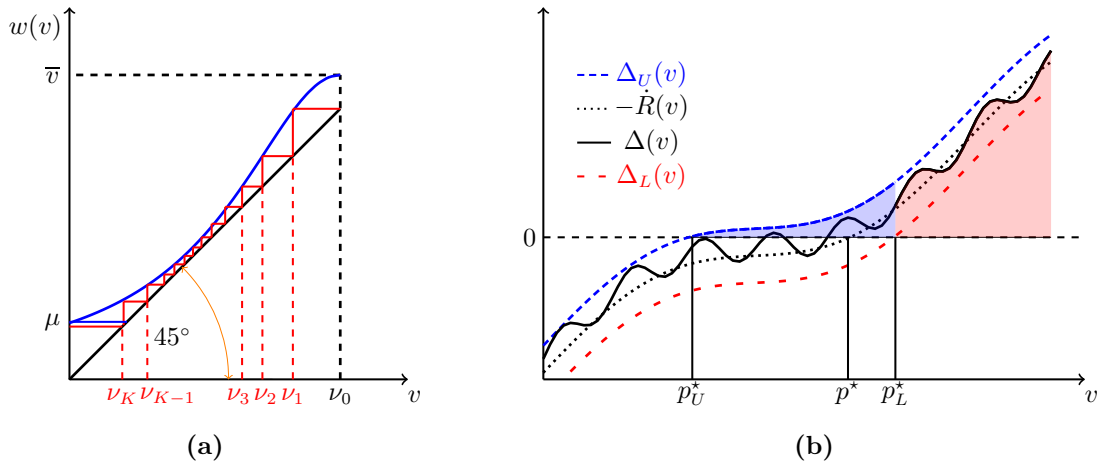

The purpose of defining the thresholds $\{\nu_k\}$ is that they allow to look at the functional equation that defines $\Lic$ in \eqref{eq:dualproblem} from a different perspective. In particular, we can obtain a quasi-closed form solution for $\Lic$ that depends on these thresholds by iteratively solving for $\Lic$ in each step.

\begin{lemma}[Solution to Functional Equation]
\label{lem:charac_lambda_ic}   A solution $\Lic(\cdot)$ to Problem \eqref{eq:dualproblem} is given by, for any $k\in \{0,\dots,K\}$ and for any $v\geq \mu$,
\begin{equation*}
\Lic(v) = \frac{d}{dv}\sum_{j=0}^k F(w^j(v)), \quad \text{a.e on } [\nu_k,\nu_{k-1}].
\end{equation*}
where $w^j(v)$ is the $j$th composition  of $w$ and $w^0(v)=v$.
\end{lemma}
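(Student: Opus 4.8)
The plan is to verify the displayed formula directly, by induction on the step index $k$, using only the two functional–equation constraints in \eqref{eq:dualproblem} (read from the top down these determine $\Lic$ uniquely given $\w$, so it suffices to check the claimed expression solves them). First I would record the regularity I need. Since $F$ has a density bounded by $\bar f$ (\Cref{assumption1}), $F$ is Lipschitz; and since $\w$ is the inverse of the strictly increasing continuous candidate $\vsr(\cdot)$ and $\dot\w$ already appears in \eqref{eq:dualproblem}, I work under the standing assumption that $\w$ is absolutely continuous and strictly increasing, so that every iterate $\w^j$ and every map $v\mapsto F(\w^j(v))$ is absolutely continuous and the a.e.\ chain rule $\frac{d}{dv}g(\w(v))=\dot\w(v)\,g'(\w(v))$ is available for absolutely continuous $g$. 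From $\nu_k=\w(\nu_{k+1})$ and monotonicity of $\w$, the map $\w$ sends the step $[\nu_{k+1},\nu_k]$ onto $[\nu_k,\nu_{k-1}]$; iterating, $\w^j$ sends $[\nu_k,\nu_{k-1}]$ onto $[\nu_{k-j},\nu_{k-j-1}]$ for $0\le j\le k$, so the compositions $\w^j(v)$ appearing in the claim are well defined on the $k$-th step. For the base case $k=0$, on $[\nu_0,\nu_{-1}]=[\nu_0,\bv]$ the formula reads $\Lic(v)=\frac{d}{dv}F(\w^0(v))=f(v)$ a.e., which is exactly the first constraint of \eqref{eq:dualproblem}.

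For the inductive step, suppose the formula holds a.e.\ on $[\nu_k,\nu_{k-1}]$, i.e.\ $\Lic=g_k'$ a.e.\ there, where $g_k(u):=\sum_{j=0}^k F(\w^j(u))$. Fix $v$ in the next step $[\nu_{k+1},\nu_k]$; for every step but the last one this interval lies in $[\mu,\nu_0]$, and for the last step one simply intersects with $[\mu,\bv]$, which is exactly the qualifier ``$v\ge\mu$'' in the statement, and the resulting set still lies in $[\mu,\nu_0]$. Then $\w(v)\in[\nu_k,\nu_{k-1}]$, so the inductive hypothesis applies at $\w(v)$ and the chain rule gives, a.e.,
\[
\dot\w(v)\,\Lic(\w(v))=\dot\w(v)\,g_k'(\w(v))=\frac{d}{dv}\,g_k(\w(v))=\frac{d}{dv}\sum_{j=0}^{k}F(\w^{j+1}(v))=\frac{d}{dv}\sum_{j=1}^{k+1}F(\w^{j}(v)).
\]
Adding $f(v)=\frac{d}{dv}F(\w^0(v))$ and using the second constraint of \eqref{eq:dualproblem}, valid a.e.\ on $[\mu,\nu_0]$, we obtain
\[
\Lic(v)=f(v)+\dot\w(v)\,\Lic(\w(v))=\frac{d}{dv}\sum_{j=0}^{k+1}F(\w^{j}(v))\quad\text{a.e. on }[\nu_{k+1},\nu_k],
\]
which is precisely the formula at level $k+1$. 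Since $K$ is finite and the steps $[\nu_k,\nu_{k-1}]$, $k=0,\dots,K$, cover $[\mu,\bv]$ up to the finite set of thresholds $\{\nu_k\}$, the formula holds a.e.\ on all of $[\mu,\bv]$. To complete the argument I would also check dual feasibility, i.e.\ nonnegativity: term by term, $\frac{d}{dv}F(\w^j(v))=f(\w^j(v))\,\frac{d}{dv}\w^j(v)\ge 0$ a.e., since $f\ge 0$ and each $\w^j$ is nondecreasing, so $\Lic\ge 0$ on $[\mu,\bv]$.

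The main obstacle is the measure-theoretic care around the chain rule: one must justify $\frac{d}{dv}g_k(\w(v))=\dot\w(v)\,g_k'(\w(v))$ a.e.\ when $g_k'$ and $\dot\w$ are defined only almost everywhere, which is exactly where absolute continuity of $\w$ is essential — it guarantees that $\w$ maps null sets to null sets (so that $g_k'\circ\w$ is a.e.\ meaningful) and that $g_k\circ\w$ is itself absolutely continuous, hence differentiable a.e.\ with the expected derivative — and one must confirm that the finitely many thresholds $\nu_k$, across which $\Lic$ may well be discontinuous, form a null set so that the ``a.e.'' statements on adjacent steps glue. Everything else is driven by the bookkeeping identity $\w^j\circ\w=\w^{j+1}$ feeding the one-line induction above.
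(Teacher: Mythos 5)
Your proof is correct and follows essentially the same induction-on-$k$ argument as the paper's: verify the $k=0$ base case from the first constraint, then use the functional equation together with the chain rule $\dot w(v)\,g_k'(w(v))=\frac{d}{dv}g_k(w(v))$ and the relabeling $w^{j+1}=w^j\circ w$ to pass from step $k$ to step $k+1$. The paper's proof is terser and leaves implicit the measure-theoretic justifications (absolute continuity of $w$, a.e.\ chain rule, gluing across the finitely many thresholds) that you spell out; this added care is welcome but does not change the route.
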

Armed with \Cref{lem:charac_lambda_ic} we can derive upper bounds for $\Phi_1(\Lic)$ and
$\Phi_2(\Lic)$ that depend solely on properties of $\w(\cdot)$.
\begin{lemma} [Bound on Number of Steps]
\label{lem:steps}
There exists a constant $C>0$ such that
\begin{equation*}
\Phi_1(\Lic) \leq C\cdot\eps\cdot K.
\end{equation*}
\end{lemma}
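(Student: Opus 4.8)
The plan is to leverage the quasi-closed form for $\Lic$ from \Cref{lem:charac_lambda_ic} together with a telescoping identity that will make each ``step'' contribute at most a constant to the integral $\int_\mu^{\bv}\Lic(v)\,\text{d} v$, thereby producing the linear-in-$K$ bound. First I would split the domain of integration into the pieces $[\nu_k,\nu_{k-1}]$ for $k=0,\dots,K-1$, together with the truncated final piece $[\mu,\nu_{K-1}]$; since $\nu_{-1}=\bv$ and $\nu_K\le\mu\le\nu_{K-1}$, these $K+1$ pieces cover $[\mu,\bv]$. On the $k$-th full piece, \Cref{lem:charac_lambda_ic} gives $\Lic(v)=\frac{d}{dv}\sum_{j=0}^{k}F(\w^j(v))$ a.e., and because $F$ is absolutely continuous under \Cref{assumption1} and each $\w^j$ is increasing, the fundamental theorem of calculus yields
\[
\int_{\nu_k}^{\nu_{k-1}}\Lic(v)\,\text{d} v \;=\; \sum_{j=0}^{k}\Big(F\big(\w^j(\nu_{k-1})\big)-F\big(\w^j(\nu_k)\big)\Big).
\]

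Next I would use the defining recursion $\nu_k=\w(\nu_{k+1})$, i.e.\ $\w(\nu_k)=\nu_{k-1}$, which iterates to $\w^j(\nu_k)=\nu_{k-j}$. Substituting this and re-indexing by $i=k-j$ collapses the sum above to $\sum_{i=0}^{k}\big(F(\nu_{i-1})-F(\nu_i)\big)=F(\nu_{-1})-F(\nu_k)=1-F(\nu_k)\le 1$. For the truncated final piece the same substitution gives $\int_\mu^{\nu_{K-1}}\Lic=\sum_{j=0}^{K}\big(F(\nu_{K-1-j})-F(\w^j(\mu))\big)$; since $\mu\in[\nu_K,\nu_{K-1}]$ and $\w$ is increasing with $\w(x)\ge x$, one obtains $\w^j(\mu)\ge\nu_{K-j}$ and hence $F(\w^j(\mu))\ge F(\nu_{K-j})$, so this integral is at most $\big(F(\bv)+\sum_{i=0}^{K-1}F(\nu_i)\big)-\sum_{i=0}^{K}F(\nu_i)=1-F(\nu_K)\le 1$.

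Summing the contributions of the $K+1$ pieces gives $\int_\mu^{\bv}\Lic(v)\,\text{d} v\le K+1\le 2K$, where the last inequality uses $K\ge 1$ (which holds because $\mu<\nu_0$); multiplying by $\eps$ yields $\Phi_1(\Lic)\le 2\eps K$, so the claim holds with $C=2$. The only slightly delicate points are the bookkeeping with the iterated maps $\w^j$ and the handling of the truncated last interval, but once the identity $\w^j(\nu_k)=\nu_{k-j}$ is established these reduce to a uniform bound of $1$ per step---which is precisely the mechanism behind the $\mathcal{O}(\eps K)$ growth.
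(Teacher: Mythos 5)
Your proposal is correct and follows essentially the same route as the paper: split $[\mu,\bv]$ into the $K+1$ pieces $[\nu_k,\nu_{k-1}]$ (for $k=0,\dots,K-1$) together with $[\mu,\nu_{K-1}]$, apply the quasi-closed form for $\Lic$ from \Cref{lem:charac_lambda_ic} piecewise, telescope via $\w^j(\nu_k)=\nu_{k-j}$ to get a per-piece contribution of $1-F(\nu_k)\le 1$, and handle the truncated final piece with $\w^j(\mu)\ge\nu_{K-j}$. Both arguments arrive at the bound $K+1$ (the paper separates $[\nu_0,\bv]$ as a standalone $\int f\le 1$ while you treat it as the $k=0$ piece, but this is cosmetic), and your final step $K+1\le 2K$ is a fine way to exhibit an explicit constant.
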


\Cref{lem:steps} asserts that $\Phi_1(\Lic)$ is bounded above by $\eps\cdot K$ where $K$ is the number of steps.  We prove the result by using our formula for $\Lic$ given in \Cref{lem:charac_lambda_ic}, which allows us to easily integrate the dual variable in each step, and then telescoping the resulting sums.

\begin{lemma}[Dual Virtual Value Bound]
\label{lem:bound_rev}
Define the integrand of $\Phi_2(\Lic)$ by
\begin{equation*}
\Delta(v)\triangleq vf(v)-\dot{w}(v)(w(v)-v)\Lic(w(v)) \1{v \in [0, \nu_0]}.
\end{equation*}
Suppose that $\Delta(v)$ is non-negative in $[x,\nu_0]$ for some $x\in [0, \nu_0)$ then
\begin{equation*}
\Phi_2(\Lic) = \int_{x}^{\bv} \Delta(v) dv \leq R(x)+2\cdot (\w(x)-x).
\end{equation*}
\end{lemma}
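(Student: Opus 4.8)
The plan is to reduce $\Phi_2(\Lic)$ to the integral $\int_x^{\bv}\Delta$, put that integral in a convenient form by a change of variables, and then bound it using the functional equation for $\Lic$ coming from $(\mathcal D)$ together with the monotonicity of the tail integral of $\Lic$. First, recall $\Phi_2(\Lic)=\int_0^{\bv}(\Delta(v))^+\,dv$. Since $\Delta(v)=vf(v)\ge0$ for $v\ge\nu_0$ (the indicator vanishes there) and $\Delta\ge0$ on $[x,\nu_0]$ by hypothesis, $\Delta\ge0$ throughout $[x,\bv]$, so $\Phi_2(\Lic)=\int_0^x(\Delta)^+\,dv+\int_x^{\bv}\Delta(v)\,dv$; for the relevant choice of $x$ the first term vanishes, which gives the claimed identity $\Phi_2(\Lic)=\int_x^{\bv}\Delta(v)\,dv$, and it remains to bound this by $R(x)+2(\w(x)-x)$. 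Using the standard identity $\int_x^{\bv}vf(v)\,dv=R(x)+\int_x^{\bv}\bar F(v)\,dv$ and the change of variables $u=\w(v)$ in the second term of $\Delta$ (it maps $[x,\nu_0]$ onto $[\w(x),\bv]$, with $v=\vsr(u)$, $\w(v)-v=u-\vsr(u)$, $\dot\w(v)\,dv=du$), one obtains
\[
\int_x^{\bv}\Delta(v)\,dv=R(x)+\int_x^{\bv}\bar F(v)\,dv-\int_{\w(x)}^{\bv}(v-\vsr(v))\Lic(v)\,dv .
\]
Splitting $\int_x^{\bv}\bar F=\int_x^{\w(x)}\bar F+\int_{\w(x)}^{\bv}\bar F$ and using $\int_x^{\w(x)}\bar F\le\w(x)-x$ (since $\bar F\le1$), it suffices to prove
\begin{equation*}
\int_{\w(x)}^{\bv}\bar F(v)\,dv-\int_{\w(x)}^{\bv}(v-\vsr(v))\Lic(v)\,dv\;\le\;\w(x)-x. \tag{$\star$}
\end{equation*}

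To prove $(\star)$, write $a=\w(x)$ (so $\vsr(a)=x$) and set $H(v)=\int_v^{\bv}\Lic(s)\,ds$, which is non-increasing with $H(\bv)=0$. The constraints of $(\mathcal D)$ — namely $\Lic=f$ on $[\nu_0,\bv]$ and $\Lic(v)=f(v)+\dot\w(v)\Lic(\w(v))$ on $[\mu,\nu_0]$ — translate, upon differentiating and re-integrating, into $H(v)=\bar F(v)$ on $[\nu_0,\bv]$ and the recursion $H(v)=\bar F(v)+H(\w(v))$ on $[\mu,\nu_0]$ (in particular $H\ge0$). An integration by parts gives $\int_a^{\bv}(v-\vsr(v))\Lic(v)\,dv=(a-x)H(a)+\int_a^{\bv}(1-\dot\vsr(v))H(v)\,dv$. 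I then distinguish two cases. If $\w(x)\le\nu_0$, I use $H-\bar F=H(\w(\cdot))$ on $[\mu,\nu_0]$ and another change of variables $u=\w(v)$ to write $\int_a^{\bv}\bar F=\int_a^{\bv}H-\int_{\w(a)}^{\bv}\dot\vsr(v)H(v)\,dv$; substituting this and the integration-by-parts identity into $(\star)$, the $\int_a^{\bv}H$ terms cancel and the left-hand side of $(\star)$ collapses to $\int_a^{\w(a)}\dot\vsr(v)H(v)\,dv-(a-x)H(a)$, which is $\le H(a)\int_a^{\w(a)}\dot\vsr(v)\,dv-(a-x)H(a)=H(a)(a-x)-(a-x)H(a)=0$, where I used that $H$ is non-increasing on $[a,\w(a)]$ and $\vsr(\w(a))=a$. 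If instead $\w(x)>\nu_0$, then $\Lic=f$ and $H=\bar F$ on $[a,\bv]$, and the same integration by parts (with $\bar F$ in place of $H$) shows the left-hand side of $(\star)$ equals $\int_a^{\bv}\dot\vsr(v)\bar F(v)\,dv-(a-x)\bar F(a)\le\bar F(a)(\vsr(\bv)-x)-(a-x)\bar F(a)=\bar F(a)(\nu_0-a)\le0$, since $\vsr(\bv)=\nu_0<a$. In either case $(\star)$ holds, hence $\int_x^{\bv}\Delta(v)\,dv\le R(x)+(\w(x)-x)\le R(x)+2(\w(x)-x)$, completing the proof.

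The main obstacle is the bookkeeping behind $(\star)$: arranging the change of variables and the integration by parts so that every non-monotone term cancels and the inequality reduces to the monotonicity of $H$, and recognizing that the functional equation for $\Lic$ passes to the clean recursion $H=\bar F+H\circ\w$ for its tail integral. The case $\w(x)>\nu_0$ (that is, $x$ in the top step $[\nu_1,\nu_0]$) has to be treated separately because the recursion for $H$ only holds on $[\mu,\nu_0]$, though there the argument is shorter. The remaining issues — absolute continuity of $H$ and of $H\circ\w$ so that the differentiation and integration-by-parts steps are legitimate — are routine given $\Lic\in L^1$ and the monotonicity and (mild, assumed) absolute continuity of $\w$.
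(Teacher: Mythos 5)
Your proof is correct, and it takes a genuinely different route from the paper's. The paper proves the bound by splitting $\int_x^{\nu_0}\Delta$ across the steps $[\nu_k,\nu_{k-1}]$, computing each piece via the partial sums $H_k(v)=\sum_{j=0}^k F(w^j(v))$ from Lemma~\ref{lem:charac_lambda_ic}, and then telescoping a long list of terms; at the very end it bounds $(w(x)-x)(F(w^k(x))-F(x))\le w(x)-x$, which is where the factor of $2$ comes from. You instead work globally: you introduce the tail integral $H(v)=\int_v^{\bv}\Lic$, observe that the constraints of $(\mathcal D)$ give the clean recursion $H=\bar F+H\circ\w$ on $[\mu,\nu_0]$ and $H=\bar F$ on $[\nu_0,\bv]$, do a single change of variables and a single integration by parts, and reduce the inequality to the monotonicity of $H$. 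This avoids all of the step-by-step bookkeeping and the $\nu_k$ combinatorics, with only a mild case split on whether $\w(x)\le\nu_0$. As a bonus your argument yields the strictly sharper constant $R(x)+(\w(x)-x)$ rather than $R(x)+2(\w(x)-x)$, which is of course still consistent with the lemma as stated and would not change any downstream rate. (Your remark at the start about when the equality $\Phi_2(\Lic)=\int_x^{\bv}\Delta$ actually holds is correct — the lemma's statement is a little loose on this point, and the paper's proof, like yours, really only establishes the inequality for $\int_x^{\bv}\Delta$, which is what gets used.)
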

\Cref{lem:bound_rev} is reminiscent of the analysis in classical mechanism design where it is shown that the integral of the virtual value times the density in an interval $[x,\bv]$ equals  the revenue function $R(x) = x\cdot \bar{F}(x)$.  However, in our dual problem, we do not have the standard virtual value $\psi(v)$ in the objective but a modified ``dual virtual value'' given by $\Delta(v)$, whose integral can be bounded in terms of the revenue function and an additional  correction term proportional to  $\w(x)-x$.

\paragraph{Designing a best-response mapping.} We leverage the structural properties developed in \Cref{lem:steps} and \Cref{lem:bound_rev} to build a candidate $\w(\cdot)$ that is both easy to manipulate---it allows for a closed-form approximation for $\Lic$---and that yields a tight upper bound. The previous lemmas  provide two guidelines for choosing $\w(\cdot)$. First, to apply \Cref{lem:bound_rev} we need to design the inverse reporting $w(\cdot)$ such that $\Delta(v)$ crosses zero only once and, at the crossing point, its distance to the $45^{\circ}$ line as measured by $w(x) - x$ is of order  $\eps^{\alpha/(2\alpha-1)}$. This implies that, as $\eps$ goes to zero, the inverse reporting function must get closer to the $45^{\circ}$ line. Second, from \Cref{lem:steps}, $\w(\cdot)$ must be such that the number of steps scales, at most, with order $\eps^{(1-\alpha)/(2\alpha-1)}$ (which goes to infinity as $\eps\downarrow0$) to yield a revenue improvement of $\eps^{\alpha/(2\alpha-1)}$. Therefore, because the number of steps increases as the inverse reporting function gets closer to the $45^{\circ}$ line, $w(\cdot)$ should not approach the $45^{\circ}$ line too fast. The two lemmas present a trade off: we would like to set $\w(x)-x$ small but not too small otherwise $K$ may be too large.

 A first, natural approach would be to choose $\w(x)=x+\delta$ with $\delta>0$, i.e., the buyer is under-reporting by an amount of $\delta$. This choice would converge to the truthful reporting function as $\delta$ decreases. Because the number of steps is proportional to $1/\delta$, assuming that we can use the lemmas above, it is possible to show that
$$
\Phi_1(\Lic)+\Phi_2(\Lic)\leq \frac{C\cdot \eps}{\delta} + \Pi^\star(\mathcal{M}(0)) + 2\cdot \delta,
$$
where we have used that, for any $x$, $R(x) \le \Pi^\star(\mathcal{M}(0))$.
 With this choice of $\w$ it is optimal to set $\delta \approx \epsilon^{1/2}$, which yields the upper bound $\mathcal{O}(\eps^{1/2})$ . (Choosing a smaller value of $\delta$ would necessarily lead to a too-large number of steps.) This simple choice of $\w(\cdot)$ is, unfortunately, not tight but provides an alternative proof to \Cref{prop:smoothing} using our duality based approach. The strength of our approach is that it provides  the flexibility to design the best-response function and choose a $\w(\cdot)$ that leads to an improved bound. To do so, we need $\w(\cdot)$ to be closer to the $45^\circ$ line while, at the same time, preventing the number of steps from growing too fast. The previous analysis shows this is impossible to achieve with a linear $\w(\cdot)$. We address this challenge by making $\w(\cdot)$ piecewise linear with a kink at $\ps$. By carefully adjusting the slope of $\w(\cdot)$ we can guarantee that the number of steps does not grow too fast while approaching the $45^\circ$ line at the right rate.

Our choice of $\w(\cdot)$ is given by\
\begin{equation}\label{eq:w_for_up_bound}
\w(v) =\twopartdef{m\cdot (v-\ps) + \ps+\eps^{1-\beta}}{v\geq \ps;}
{(2-m)\cdot (v-\ps) + \ps+\eps^{1-\beta}}{v\leq \ps,}
\quad
m=\frac{\bv-(\ps+\eps^{1-\beta})}{\nu_0-\ps},
\quad \nu_0=\bv-\eps^{\beta},
\end{equation}
for $\beta\in (0,1/2)$. The closest point of $\w(\cdot)$ to the $45^\circ$ line is at $\ps$ where the distance
 is $\eps^{1-\beta}$. The slopes of $\w(\cdot)$ are $m>1$ and $2-m<1$ to the right and left of $\ps$, respectively.
 This choice of $\w(\cdot)$ trades off tractability and tightness of our bound. Indeed, because $\w(\cdot)$ is a piecewise linear function we can obtain a simple approximation for $\Lic$ and, ultimately, analyze $\Delta(v)$ in a tractable way.  In order to obtain a tight bound, we constructed $\w(\cdot)$ such that as $\eps\downarrow 0$, $\w(\cdot)$ approaches the $45^\circ$ line in two ways: the slope converges to one and the distance converges to zero. By doing so, as we explain next, we can guarantee that $\w(\cdot)$ balances the number of steps $K$ and the difference $\w(x)-x$ in \Cref{lem:steps} and \Cref{lem:bound_rev}.

For $\w(\cdot)$ defined above, we leverage \Cref{lem:steps} and \Cref{lem:bound_rev} to obtain closed form bounds in terms of $\eps$ for the dual problem. We begin by bounding the number of steps.
\begin{proposition}[Number of Steps]\label{prop:numberOfSteps} Under $\w(\cdot)$ given by \cref{eq:w_for_up_bound} the number of steps, $K$, satisfies
$K=\mathcal{O}\left(\log(1/\eps)/\eps^{\beta}\right).$
\end{proposition}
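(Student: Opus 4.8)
The plan is to track the sequence $\{\nu_k\}$ directly. By definition $\nu_{k+1}=\w^{-1}(\nu_k)=\vsr(\nu_k)$, and it is strictly decreasing because $\w(v)>v$ for every $v$: from \cref{eq:w_for_up_bound} the gap $\w(v)-v$ equals $(m-1)|v-\ps|+\eps^{1-\beta}\ge\eps^{1-\beta}>0$. First I would record the relevant scalings. Since $\beta\in(0,1/2)$ one has $\eps^{1-\beta}=o(\eps^{\beta})$, so for $\eps$ small the slope $m=\frac{\bv-\ps-\eps^{1-\beta}}{\bv-\ps-\eps^{\beta}}$ satisfies $c_1\eps^{\beta}\le m-1\le c_2\eps^{\beta}$ for positive constants $c_1,c_2$ depending only on $\ps,\bv$; in particular $1<m<2$, hence $2-m\in(0,1)$, $\log m\ge(m-1)/2$, and $-\log(2-m)\ge m-1$. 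Also $\nu_0=\bv-\eps^{\beta}>\ps$, and by construction $\mu=\w(0)=\ps(m-1)+\eps^{1-\beta}\to 0$, so $\ps-\mu\ge\ps/2$ for $\eps$ small. Because the kink of $\w$ is at $\ps$, the inverse $\w^{-1}$ is affine with slope $1/m<1$ on $[\ps+\eps^{1-\beta},\bv]$ (range of the right branch) and affine with slope $1/(2-m)>1$ on $(-\infty,\ps+\eps^{1-\beta}]$ (range of the left branch), so the sequence has a contracting phase above the kink and an expanding phase below it. \textbf{Phase 1:} writing $d_k=\nu_k-\ps$, as long as $d_k\ge\eps^{1-\beta}$ we have $d_{k+1}=(d_k-\eps^{1-\beta})/m\le d_k/m$, so $d_k\le d_0 m^{-k}$; since $d_0=\bv-\ps-\eps^{\beta}=\Theta(1)$, the first index $k_1$ with $d_{k_1}<\eps^{1-\beta}$ satisfies $k_1\le 1+\frac{\log d_0+(1-\beta)\log(1/\eps)}{\log m}=\mathcal{O}(\log(1/\eps)/\eps^{\beta})$, using $\log m\ge\tfrac{c_1}{2}\eps^{\beta}$. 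Throughout Phase 1, $\nu_k>\ps>\mu$ so $K>k_1$, and $d_{k_1}=(d_{k_1-1}-\eps^{1-\beta})/m\in[0,\eps^{1-\beta})$.

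\textbf{Phase 2:} once $\nu_k<\ps+\eps^{1-\beta}$ the step uses the left branch, so with $\eta_k=\nu_k-\ps$ we get $\eta_{k+1}=(\eta_k-\eps^{1-\beta})/(2-m)$, an affine map with repelling fixed point $\eta^\star=\eps^{1-\beta}/(m-1)$. By the slope estimates $\eta^\star=\Theta(\eps^{1-2\beta})$, and crucially $\eps^{1-\beta}=o(\eps^{1-2\beta})$ (this is where $\beta<1/2$ enters), so the entry value $\eta_{k_1}=d_{k_1}\in[0,\eps^{1-\beta})$ lies below $\eta^\star$, with $\eta^\star-\eta_{k_1}\ge c_3\eps^{1-2\beta}$ for some $c_3>0$. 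Iterating, $\eta_{k_1+j}-\eta^\star=(2-m)^{-j}(\eta_{k_1}-\eta^\star)$, so $\eta_{k_1+j}$ decreases monotonically — staying below $\eps^{1-\beta}$, which keeps the left branch valid — while $|\eta_{k_1+j}-\eta^\star|$ grows geometrically with ratio $1/(2-m)>1$. We reach level $\mu$, i.e. $\eta_{k_1+j}\le\mu-\ps$, as soon as $|\eta_{k_1+j}-\eta^\star|\ge|\mu-\ps-\eta^\star|=(\ps-\mu)+\eta^\star=\Theta(1)$. Hence it suffices to pick $j^\star$ with $(2-m)^{-j^\star}\ge|\mu-\ps-\eta^\star|/|\eta_{k_1}-\eta^\star|=\mathcal{O}(\eps^{-(1-2\beta)})$, giving $j^\star=\mathcal{O}\big(\log(1/\eps)/(-\log(2-m))\big)=\mathcal{O}(\log(1/\eps)/\eps^{\beta})$, using $-\log(2-m)\ge m-1\ge c_1\eps^{\beta}$. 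Therefore $K\le k_1+j^\star=\mathcal{O}(\log(1/\eps)/\eps^{\beta})$, which is the claim.

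\textbf{Main obstacle.} The delicate point is the behaviour near $\ps$: the gap $\w(v)-v$ stays of order $\eps^{1-\beta}$ across a band $|v-\ps|=\mathcal{O}(\eps^{1-2\beta})$, so there the sequence "creeps" in increments of order $\eps^{1-\beta}$, which by itself forces of order $\eps^{1-2\beta}/\eps^{1-\beta}=\eps^{-\beta}$ steps; the extra $\log(1/\eps)$ factor then comes from the geometric contraction (resp. expansion) with ratio $1\mp\Theta(\eps^{\beta})$ needed to cover the remaining $\Theta(1)$ distance away from $\ps$. Turning this heuristic into the bound requires carrying through the two affine recursions and, in particular, the inequality $\eps^{1-\beta}=o(\eps^{1-2\beta})$ — equivalently $\beta<1/2$ — which guarantees that the sequence enters Phase 2 strictly below the repelling fixed point $\eta^\star$ and hence proceeds monotonically down to $\mu$.
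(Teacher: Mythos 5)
Your argument is correct and reaches the claimed bound. It follows the same two-phase decomposition as the paper's proof — first iterating through the contracting right branch down to the kink at $\ps$, then through the expanding left branch down to $\mu$ — so it is essentially the same proof strategy, but the algebraic execution differs in a pleasant way. For the phase above $\ps$, you replace the paper's explicit closed-form solution of the linear recurrence by the one-sided bound $d_{k+1}\le d_k/m$, which is all that is needed for an upper bound on the step count and is cleaner. For the phase below $\ps$, the paper again unrolls the recurrence $\nu_{k^\star+j}$ in closed form; you instead identify the repelling fixed point $\eta^\star = \eps^{1-\beta}/(m-1) = \Theta(\eps^{1-2\beta})$ of the affine map $\eta\mapsto(\eta-\eps^{1-\beta})/(2-m)$ and track the geometric divergence $\eta_{k_1+j}-\eta^\star=(2-m)^{-j}(\eta_{k_1}-\eta^\star)$, which is the same calculation but packaged so that monotonicity and the reason the iterates cross $\mu-\ps$ are transparent. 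One small correction: you attribute the inequality $\eps^{1-\beta}=o(\eps^{1-2\beta})$ to the restriction $\beta<1/2$, but this inequality holds for any $\beta>0$ (the ratio is $\eps^{\beta}$). The place $\beta<1/2$ is genuinely needed is earlier — to guarantee $\eps^{1-\beta}<\eps^{\beta}$ and hence $m>1$, which is what makes Phase 1 a contraction at all and makes $\mu$, $\eta^\star$ vanish as $\eps\downarrow 0$. This is a misplaced comment, not a gap; the rest of the reasoning is sound.
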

The intuition behind the proposition is simple.
For ease of exposition, consider the number of steps above $\ps$.
Since $\w(\cdot)$ is linear with slope $m$, it must be that
$(\w(\nu_k)-\w(\nu_{k+1}))/(\nu_k-\nu_{k+1})=m$. But by the definition of $\nu_k$ this is equivalent to $(\w(\nu_{k+1})-\nu_{k+1})=(\w(\nu_k)-\nu_k)/m$. That is, the distance between $\w(\cdot)$ and the $45^\circ$ degree line decreases by a factor of $1/m$ in every step. In general, we have that $(\w(\nu_{k})-\nu_{k})=(\bv-\nu_0)/m^k$.
Because $\w(\ps)-\ps=\eps^{1-\beta}$,
the number of steps above $\ps$ is the value of $K$ such that $(\w(\nu_{K})-\nu_{K}) \approx \eps^{1-\beta}$  and, therefore,
\begin{equation*}
    K\approx \frac{\log((\bv-\nu_0)/\eps^{1-\beta})}{\log(m)}=\mathcal{O}\left(\log(1/\eps)/\eps^{\beta}\right),
\end{equation*}
where the last equality comes from a Taylor series around $\eps=0$ and from $m-1\approx \eps^{\beta}$. By combining this result and \Cref{lem:steps}, we deduce that $\Phi_1(\Lic)$ is upper bounded by a term proportional to $\log(1/\eps)\cdot \eps^{1-\beta}$.

Next, we use \Cref{lem:bound_rev} to bound $\Phi_2(\Lic)$. The key idea of our argument is to show that, given our choice of $\w(\cdot)$, there exists a vanishing neighborhood  $(\ps_U, \ps_L)$ of size $\eps^{\beta/(\alpha-1)}$ around $\ps$ such the integrand in $\Delta(v)$ is negative below and positive above the neighborhood.  Below the neighborhood, the integral is zero because we take the positive part of $\Delta(v)$. Inside the neighborhood, the integral of $\Delta(v)$ is at most $\mathcal{O}(\eps^{\beta+ \beta/(\alpha-1)})$ because both the integrand and the size of neighborhood are small. Above the neighborhood, we can use \Cref{lem:bound_rev} to obtain an upper bound for
$\Phi_2(\Lic)$ which, together with the fact that
$\w(v)-v\approx \eps^{1-\beta}$ for $v\approx \ps$, obtains an upper bound of  $\mathcal{O}(\eps^{\gamma})$, with $\gamma = \min\{1-\beta,\beta+ \beta/(\alpha-1)\}$. We illustrate this in \Cref{fig:steps} (a) and provide a more detailed explanation next.

By developing an approximation for $\Lic$, we can establish that
\begin{equation*}
    \Delta_L(v)\triangleq vf(v)-\bar{F}(v) - K_1\cdot \eps^{\beta}\leq \Delta(v) \leq vf(v)-\bar{F}(v) +K_2\cdot \eps^{\beta}\triangleq \Delta_U(v),
\end{equation*}
almost everywhere, that is, $\Delta(v)$ converges uniformly (almost everywhere) to the negative of the derivative of the revenue function $-R'(v) = vf(v)-\bar{F}(v)$. Note that we cannot simply use $\Delta_U(v)$ to bound $\Phi_2(\Lic)$ because
 that would lead to a bound of $\mathcal{O}(\eps^{\beta})$. Instead, we can use $\Delta_L(v)$ and $\Delta_U(v)$ to approximate where $\Delta(v)$ crosses zero and then apply \Cref{lem:bound_rev}.

 Using the lower bound above and %
the local $\alpha$-power envelopes of the revenue function given in \Cref{def1}, we can determine the right endpoint of our neighborhood to be $\ps_L \approx \ps + \eps^{\beta/(\alpha-1)}$ above which $\Delta_L(v)$
is non-negative and, therefore, $\Delta(v)$ is non-negative  above $\ps_L $. By \Cref{lem:bound_rev}, we have
\begin{equation}\label{eq:delta_long_int}
    \int_{\ps_L}^{\bv}\Delta(v)dv\leq R(\ps_L) +2(w(\ps_L)-\ps_L). %
\end{equation}
The left endpoint of the neighborhood is similarly determined to be $\ps_U \approx \ps - \eps^{\beta/(\alpha-1)}$ below which $\Delta(v)$ can be guaranteed to be non-positive. Now, within the neighborhood we cannot control the sign of $\Delta(v)$; however, we can make use of $\Delta_U(v)$ as follows:
\begin{equation}\label{eq:delta_small_int_a}
    \int_{\ps}^{\ps_L}\Delta(v)^+dv\leq \int_{\ps}^{\ps_L}\Delta_U(v)dv =
    R(\ps)-R(\ps_L)+\mathcal{O}(\eps^{\beta+ \beta/(\alpha-1)}),
\end{equation}
 by integrating the derivative of the revenue function which is negative above $\ps$ and using that $[\ps,\ps_L]$ has size $\eps^{\beta/(\alpha-1)}$.
 In $[\ps_U,\ps]$, $\Delta_U(v)^+$ is bounded above by $K_2 \eps^\beta$ because the derivative of the revenue function is positive in this interval. Hence, in this interval, the integral of $\Delta(v)^+$ is at most of size $\mathcal{O}(\eps^{\beta+ \beta/(\alpha-1)})$.
 Combining this with \eqref{eq:delta_long_int} and
 \eqref{eq:delta_small_int_a}:
 \begin{equation*}
    \int_{0}^{\bv}\Delta(v)^+dv\leq R(\ps)
    +2(w(\ps_L)-\ps_L)+\mathcal{O}(\eps^{\beta+ \beta/(\alpha-1)})
    = \Pi^\star(\mathcal{M}(0))+\mathcal{O}(\eps^{\beta+ \beta/(\alpha-1)}) + \eps^{1-\beta},
 \end{equation*}
 where we have used the definition of $\w(\cdot)$ and that $m-1\approx \eps^{\beta}$. We obtain \Cref{thm:up-bd-per} by combining the previous upper bound with \Cref{prop:numberOfSteps}, which gives a bound for the dual problem of order
$ \tilde{\mathcal{O}}(\eps^{\gamma})$ with
 $\gamma = \min\{1-\beta,\beta+ \beta/(\alpha-1)\}$. The latter is minimized by taking $\beta = (\alpha-1)/(2\alpha-1)$ which, in turn, yields $\gamma= \alpha/(2\alpha-1)$.

\section{Performance of deterministic mechanisms}\label{sec:floor}
In the previous section, we developed an upper bound on the revenue gains for the seller. We now begin our analysis of the achievable revenue gains by looking at the class of deterministic mechanisms, i.e., the class
\begin{eqnarray*}
\mathcal{M}_\texttt{d}(\varepsilon) = \left\{m \in \mathcal{M}(\varepsilon) : x(\cdot): \mathcal{S} \rightarrow \{0,1\} \right\}.
\end{eqnarray*}
By analyzing this class of mechanism, we will develop a lower bound for \Cref{eq:key-Metric} which, together with the upper bound, will define the feasible spectrum of achievable revenue performances.
 Deterministic mechanisms are also appealing in practice because the buyer can always expect the same outcome from the same report. It is important to note that when $\eps=0$, an optimal mechanism is deterministic and hence included in $\mathcal{M}_\texttt{d}(0)$. The next result characterizes the best performance of deterministic mechanisms.

\begin{theorem}[Deterministic mechanisms] \label{prop:hard/soft-floor}Suppose \Cref{assumption1} holds.
An optimal deterministic mechanism, $\md = (\xd, \td)$, is given by the hard/soft floor mechanism:
\begin{eqnarray*}
\xd(v) &=& \mathbf{1}\{v \in [\pd,\bv] \},\\
\td(v) &=& v \: \mathbf{1}\{v \in [\pd,s]\} + s \:\mathbf{1}\{v \in [s,\bv]\},
\end{eqnarray*}
where the hard floor $p$ and soft floor $s$  satisfy $s = p + \epsilon$ and are chosen to maximize revenue. Furthermore, the optimal performance satisfies
\begin{equation*}
 \Pi^\star(\mathcal{M}_\texttt{d}(\eps))  \: - \: \Pi^\star(\mathcal{M}(0))\: = \: \Theta(\eps).
\end{equation*}
\end{theorem}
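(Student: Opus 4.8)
The plan is to prove matching $\mathcal O(\eps)$ and $\Omega(\eps)$ bounds, with the characterization of the optimal deterministic mechanism falling out of the upper-bound argument; throughout I use $\Pi^\star(\mathcal M(0))=\rs$, the classical single-buyer posted-price optimum. For the upper bound, fix any feasible $(x,t)\in\mathcal M_\texttt{d}(\eps)$, let $B=\{v\in\mathcal S:x(v)=1\}$ be its buy region, and set $a=\inf B$ (if $B=\emptyset$ the revenue is nonpositive and there is nothing to show). Two elementary facts suffice. First, \eqref{eq:IR} forces $t(v)\le 0$ on $\mathcal S\setminus B$, so $\Pi(x,t)\le\int_B t(v)f(v)\,dv$ --- I discard this nonpositive mass rather than try to reset those transfers to zero, which can break \eqref{eq:IC} when the original mechanism subsidizes the buyer. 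Second, on $B$ the transfers lie in a window of width $\eps$: \eqref{eq:IR} gives $t(v)\le v$ and, taking $v_n\in B$ with $v_n\downarrow a$, $\inf_{v\in B}t(v)\le a$; and \eqref{eq:IC} applied to pairs $v,v'\in B$ (where $x(v)=x(v')=1$) gives $t(v)\le t(v')+\eps$, hence $t(v)\le a+\eps$ on $B$. Combining, $t(v)\le\min\{v,\,a+\eps\}$ on $B$, and since $B\subseteq[a,\bv]$ and $F(\bv)=1$,
\[
\Pi(x,t)\;\le\;\int_B\min\{v,\,a+\eps\}\,f(v)\,dv\;\le\;\int_a^{a+\eps}v f(v)\,dv\;+\;R(a+\eps).
\]
The right-hand side is exactly the revenue of the hard/soft floor mechanism with hard floor $p=a$ and soft floor $s=a+\eps$, which I would check is feasible by verifying \eqref{eq:IR} and \eqref{eq:IC} for all type pairs. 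Since the hard/soft floor family is available, optimizing the bound over $a$ yields $\Pi^\star(\mathcal M_\texttt{d}(\eps))=\max_p\big(\int_p^{p+\eps}v f(v)\,dv+R(p+\eps)\big)$, attained at some $p^\dagger$, which is the claimed mechanism $\md=(\xd,\td)$. As $\int_p^{p+\eps}v f\le(\bv+\eps)\bar{f}\,\eps$ uniformly (by $f\le\bar{f}$ from \Cref{assumption1}) and $R(p+\eps)\le\rs$, this gives $\Pi^\star(\mathcal M_\texttt{d}(\eps))-\Pi^\star(\mathcal M(0))=\mathcal O(\eps)$.

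For the matching lower bound I would simply evaluate the hard/soft floor at $p=\ps$ and $s=\ps+\eps$ (feasible for $\eps$ small, since $R(\bv)=0<\rs$ forces $\ps<\bv$). An exact computation --- no Taylor expansion is needed --- gives
\[
\int_{\ps}^{\ps+\eps}v f(v)\,dv+R(\ps+\eps)\;=\;\rs\;+\;\eps\,\bar{F}(\ps)\;-\;\int_{\ps}^{\ps+\eps}(\ps+\eps-v)\,f(v)\,dv,
\]
and the last term lies in $[0,\bar{f}\,\eps^2]$. Because $\rs=\ps\bar{F}(\ps)>0$ forces $\bar{F}(\ps)>0$, the left-hand side equals $\rs+\Omega(\eps)$, so $\Pi^\star(\mathcal M_\texttt{d}(\eps))-\Pi^\star(\mathcal M(0))\ge\eps\bar{F}(\ps)-\bar{f}\,\eps^2=\Omega(\eps)$. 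Together with the upper bound this yields $\Theta(\eps)$.

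I expect the only genuinely delicate step to be the transfer characterization on $B$: the natural attempt to ``clean up'' the mechanism by zeroing transfers outside the buy region does not preserve \eqref{eq:IC}, so one must instead drop the no-buy region's nonpositive contribution outright and extract $t\le\min\{v,a+\eps\}$ from \eqref{eq:IR} together with only the $\eps$-IC inequalities \emph{among buy messages}. The remaining pieces are routine: the feasibility check for the hard/soft floor --- whose crux is that, since $p$ is the cheapest available buy price, every buyer is within $\eps$ of purchasing at $p$, which both caps realized transfers at $p+\eps$ (matching the upper bound) and makes truthful reporting $\eps$-optimal --- and the two one-line integral estimates above.
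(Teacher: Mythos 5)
Your proposal is correct and follows essentially the same route as the paper: reduce any feasible deterministic mechanism to a hard/soft floor with $s = p+\eps$ by bounding the transfers via \eqref{eq:IR} on the buy region and \eqref{eq:IC} among buy messages, then obtain the $\mathcal{O}(\eps)$ upper bound from $R(p+\eps)\le\rs$ and the $\Omega(\eps)$ lower bound by evaluating the hard/soft floor at $p=\ps$ and using $\bar F(\ps)>0$. The only cosmetic differences are that the paper first posits an optimal right-continuous $\xd$ and argues the inequalities must bind at optimality, whereas you bound the revenue of an arbitrary feasible mechanism directly (sidestepping existence/normalization), and the paper simplifies the revenue via integration by parts to $r\bar F(r)+\int_r^{r+\eps}\bar F$ rather than your two-term estimate---both are valid.
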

The theorem establishes that over the entire class of $\eps$-IC deterministic mechanisms, the seller cannot  exploit the relaxation of the IC constraints to yield more than linear gains in $\eps$. Furthermore, this bound is shown to be  tight for deterministic mechanisms as one can exhibit a mechanism that yields such gains in performance. That is, there exist mechanisms that yield a linear improvement when relaxing the IC constraints.  This sets a lower bound on the performance that can be achieved in the approximate-IC setting.  Any optimal approximate-IC mechanism should deliver at least linear gains compared to the optimal performance with IC constraints. Furthermore, we know that it is impossible to break the linear barrier with deterministic mechanisms. This means that the spectrum of achievable performances ranges in general from $\eps$ to $\eps^{1/2}$, but from \Cref{thm:up-bd-per} this spectrum can be tightened: it becomes  $\eps$ to $\eps^{\alpha/(2\alpha-1)}$ when we account for the characteristics of a distribution. The next natural questions that emerge are whether this linear guarantee can be surpassed, and if so, how much more performance gain can be achieved and what mechanisms guarantee such improved performance, i.e., is the $\eps^{\alpha/(2\alpha-1)}$ performance achievable? In the next section, we will expand to broader families of \textit{randomized} mechanisms and provide a positive answer to these questions.

\paragraph{Discussion and proof ideas.}
The proof is organized around two main steps. We first establish that an optimal deterministic mechanism must take the form of a hard/soft floor mechanism, i.e., a mechanism characterized by a pair $(\pd,\pt)$ with  $\pd \le \pt \le \bv$ such that the allocation and payments are as in the statement of the theorem.
In this family of mechanisms, if the buyer's bid is between the hard and soft floor then he pays his own bid and is allocated the item. If the buyer's bid is above the soft floor then he pays the soft floor and is allocated the item. Otherwise the item is not allocated and the buyer does not incur any payment. Incidentally, this class of mechanisms that naturally emerges in the $\eps$-IC setting has been used by some online advertising platforms in the past to extract revenue from buyers with high values~\citep{mopub2013, Zeithammer2019soft}. %

In the second step of the proof, we show that there exists $r\in[0,\bv]$ such that the optimal hard/soft floor mechanism is  characterized by
$\pd=r$ and $\pt=r+\eps$. Note that this is an alternative way to show that when $\eps=0$ the optimal deterministic mechanism must be the optimal posted price $\ps$. Importantly, from this characterization one can readily obtain the upper bound in the theorem. Indeed, using integration by parts, we obtain that the revenue of $\md$ with hard/soft floors as above satisfies
\begin{equation*}
\Pi(\md)=r\cdot  \overline{F}(r) +\int_{r}^{r+\eps}\overline{F}(v)dv\: \leq\: R(r) +\eps \: \leq\: \Pi^\star(\mathcal{M}(0))+\eps,
\end{equation*}
where the first inequality follows because $\overline{F}(v) \le 1$ and the second by optimizing over $r$.
The lower bound is derived in a similar fashion except that we replace $r$ with $\ps$ and bound the integral above from below by $\overline{F}(\ps+\eps)$, which is positive for small enough $\epsilon >0$ because the optimal posted price is interior.

Interestingly, for hard/soft floor mechanisms, truthful reporting is not a best reporting strategy. A best reporting strategy is given by
\begin{equation}\label{eq:v_star}
\vd(v) = \twopartdef{0}{v\in[0,r);}{r}{v\in[r,\bv].}
\end{equation}
See \Cref{fig:reporting_fun} (b) for an illustration.  For $v\in [r,\bv]$ the best reporting strategy for the buyer is always to report a lower type, namely, $\vd(v)=r$ and gain a utility of $v-r$. Nevertheless, since the buyer is $\eps$-optimizer he will be willing to bid $v$ and take a loss of at most $\eps$. The mechanism exploits this fact to obtain an extra revenue  of $\int_{r}^{r+\eps}\overline{F}(v)dv=\mathcal{O}(\eps)$ compared to the optimal posted-price mechanism.

The main departure point of \eqref{eq:probeps} from  $(\mathcal{P}_0)$ is that in the former the reporting function is not necessarily determined by the incentive constraints and, therefore, one can perform the optimization in \eqref{eq:probeps} by searching over the space of reporting functions. In the space of deterministic mechanisms, if we fix the allocation to $\xd(\cdot)$ and choose as a reporting function $\vd(\cdot)$ as described above, then from \eqref{eq:IC} and \eqref{eq:IR} it is possible to infer that the transfer $\td(\cdot)$ must be zero below $r$ and bounded above by $\min\{\td(r)+\eps,v\}$ with $\td(r)\le r$. Therefore, if we optimize over the transfer, we will end up with $\td(\cdot)$ as in the statement of  \Cref{prop:hard/soft-floor}. In other words, in  mechanism design with approximate incentive compatibility we  have two degrees of freedom when designing mechanisms: allocations and reporting functions. In the following section, we will exploit this observation to analyze  the richer class of randomized mechanisms.

\section{Near Optimal Mechanisms}\label{sec:low-bound-perf}
In Section \ref{sec:upperboundper}, we saw that the relaxation to $\varepsilon$-IC mechanisms can lead
to at most  $\tilde{\mathcal{O}}(\eps^{\alpha/(2\alpha-1)})$ revenue gains for $\alpha\in (1,\infty)$ compared to
$(\mathcal{P}_0)$.  In this section, we establish that our upper bound is tight by developing a lower bound for the growth rate of
$\Pi^\star(\mathcal{M}(\eps)) \: -\: \Pi^\star(\mathcal{M}(0))$
 as a function of $\eps$. We also present some of the main proof ideas as they highlight new techniques and arguments that are critical  for the understanding of $\eps$-IC mechanism design problems.
  We present the main result of this section next.
\begin{theorem}[Performance Lower Bound]\label{thm:low-bd-per}Suppose \Cref{assumption1} holds, and that the  revenue function admits local $\alpha$-power envelopes in a neighborhood $\mathcal{N}_\ell$ of $\ps$.  Then
\begin{equation*}
\Pi^\star(\mathcal{M}(\eps)) \: -\: \Pi^\star(\mathcal{M}(0)) \:=\:\Omega(\eps^{\alpha/(2\alpha-1)}).
\end{equation*}
\end{theorem}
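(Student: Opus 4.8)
This is an achievability statement, so it suffices to construct, for every small $\eps>0$, a feasible mechanism $m_\eps=(x_\eps,t_\eps)\in\mathcal{M}(\eps)$ with $\Pi(m_\eps)-\Pi^\star(\mathcal{M}(0))\ge c\,\eps^{\alpha/(2\alpha-1)}$ for some $c>0$. Guided by \Cref{thm:up-bd-per}, $m_\eps$ should be a perturbation of the optimal posted price $\ps$: the allocation equals the posted-price indicator away from $\ps$, but on a vanishing window $W$ around $\ps$ of width $\eta=\Theta(\eps^{1/(2\alpha-1)})$ it ramps continuously from $0$ to $1$; correspondingly the buyer's best reporting mapping $\vsr(\cdot)$ (the identity away from $W$) under-reports into $W$ by an amount $\delta=\Theta(\eps^{\alpha/(2\alpha-1)})$. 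Following the paradigm stressed in the introduction, I would \emph{design the best response first}: fix a strictly increasing continuous $\vsr$ with $\vsr(v)<v$ on and above $W$, set $u(0)=0$, and pin the transfers by forcing \eqref{eq:IC} to bind along the path $(v,\vsr(v))$ and IR to bind at the bottom, i.e. through the envelope identity \eqref{eq:env-report}, $u(v)=\int_0^v x_\eps(\vsr(s))\,ds$ and $t_\eps(v)=v\,x_\eps(v)-u(v)+\eps$ wherever this is IR-feasible (dropping the $+\eps$ on the $\Theta(\eps)$-measure set of low values where it is not). Imposing that $\vsr(v)$ is in fact a \emph{global} maximizer of $w\mapsto v\,x_\eps(w)-t_\eps(w)$ then forces, on $W$, the delayed differential equation
\[
 \bigl(v-\vsr(v)\bigr)\,x_\eps'\!\bigl(\vsr(v)\bigr)\;=\;x_\eps\!\bigl(\vsr(v)\bigr)-x_\eps\!\bigl(\vsr(\vsr(v))\bigr),
\]
equivalently $\int_{\vsr(v)}^{v}\bigl(x_\eps(\vsr(v))-x_\eps(\vsr(s))\bigr)\,ds=\eps$, which says precisely that the $\eps$-slack is fully spent along the path. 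A short Taylor computation shows this relation forces the ramp slope $x_\eps'\asymp\eps/(v-\vsr(v))^2$ on $W$; since the ramp must climb by $1$ across a window of width $\eta$, this ties $\eta\asymp\delta^2/\eps$, and balancing the revenue effects (below) then yields $\delta\asymp\eps^{\alpha/(2\alpha-1)}$ and $\eta\asymp\eps^{1/(2\alpha-1)}$, the scales anticipated by the upper bound.

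\textbf{Feasibility.} I would verify $m_\eps\in\mathcal{M}(\eps)$ in four steps. (i) $x_\eps\in[0,1]$ and nondecreasing: by the heuristic above the ramp climbs by $\Theta(x_\eps'\cdot\eta)=\Theta(1)$, so the leading constants in $\delta$ and $\eta$ are calibrated so that $x_\eps$ reaches exactly $1$ at the top of $W$. (ii) IR: immediate from $u(0)=0$ and monotonicity of $u$, once the $+\eps$ is dropped on the set where $u(v)<\eps$ (a $\Theta(\eps)$-measure set, costing only $O(\eps^2)$ in revenue). (iii) $\eps$-IC for truthful reporting: holds with equality by construction, since $v\,x_\eps(v)-t_\eps(v)=u(v)-\eps$. (iv) The crux: that reporting $\vsr(v)$ is a \emph{global}, not merely stationary, maximizer, so that $u(v)=\max_w\{v\,x_\eps(w)-t_\eps(w)\}$. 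For this I would use the representation $v\,x_\eps(w)-t_\eps(w)=(v-w)\,x_\eps(w)+u(w)-\eps$ and show its $w$-derivative crosses zero exactly once, from $+$ to $-$, at $w=\vsr(v)$: off $W$ the allocation is constant and $t_\eps$ is constant above $W$ and equals the no-purchase value below $W$, so the problem reduces to maximizing over $w\in W$ and the no-purchase option; on $W$ the delayed ODE is exactly the first-order condition, and monotonicity of $x_\eps$ together with the sign of $v-w$ rules out spurious interior optima. Granting (i)--(iv), and using the tie-breaking-in-favor-of-the-seller convention, the seller's realized revenue is at least $\E_v[t_\eps(v)]$, since truthful reporting is $\eps$-optimal for every type -- this one-sided bound is all that is needed.

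\textbf{Revenue.} Substituting $t_\eps=v\,x_\eps-u+\eps$ and $u(v)=\int_0^v x_\eps(\vsr(s))\,ds$ into $\Pi(m_\eps)=\int_0^{\bv}t_\eps(v)f(v)\,dv$, integrating $\int u\,f$ by parts, and using $vf(v)=\bar{F}(v)-\dot{R}(v)$, one obtains after rearrangement
\[
 \Pi(m_\eps)-\Pi^\star(\mathcal{M}(0))\;=\;\int_0^{\bv}\bar{F}(v)\,\bigl(x_\eps(v)-x_\eps(\vsr(v))\bigr)\,dv\;-\;\Bigl(R(\ps)+\int_0^{\bv}x_\eps(v)\,\dot{R}(v)\,dv\Bigr)\;+\;\eps\;+\;(\text{lower order}).
\]
The first term is nonnegative because $\vsr(v)\le v$ and $x_\eps$ is nondecreasing, and since $x_\eps(v)-x_\eps(\vsr(v))\approx x_\eps'(v)\,(v-\vsr(v))\asymp\delta/\eta$ on $W$ while $\bar{F}\approx\bar{F}(\ps)>0$ there ($\ps$ being interior), it equals $\Theta(\delta)=\Theta(\eps^{\alpha/(2\alpha-1)})$. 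The second term equals $\int_W\bigl(R(\ps)-R(v)\bigr)\,dx_\eps(v)$ after an integration by parts and $\int_W dx_\eps=1$; it is nonnegative, and by the local $\alpha$-power envelope \eqref{eq:def1-intuitive} it is at most $\kappa_U\,\eta^\alpha=\Theta(\eps^{\alpha/(2\alpha-1)})$. Since $\eps$ is of strictly smaller order than $\eps^{\alpha/(2\alpha-1)}$, the leading behavior is $\bigl(c_1\bar{F}(\ps)-c_2\kappa_U\bigr)\eps^{\alpha/(2\alpha-1)}$; the relation $\eta\asymp\delta^2/\eps$ makes the loss coefficient $c_2\kappa_U$ of higher order in the free scaling constant than the gain coefficient $c_1\bar{F}(\ps)$ (the gain is linear in that constant, the loss of order $\alpha>1$), so choosing that constant small enough makes the difference strictly positive, yielding $\Pi(m_\eps)-\Pi^\star(\mathcal{M}(0))=\Omega(\eps^{\alpha/(2\alpha-1)})$.

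\textbf{Main obstacle.} The hard part is (iv) together with (i): showing the delayed differential equation is well posed, that its solution genuinely stays in $[0,1]$ across $W$, and that the jointly designed pair $(x_\eps,\vsr)$ is consistent, i.e. the path $(v,\vsr(v))$ is a \emph{global} best response rather than just a critical point. Unlike the exact-IC case, monotonicity of $x_\eps$ is neither automatic nor sufficient for global optimality of the report map (since $\vsr$ is not the identity), so one must control the interaction between the deviating argument $\vsr$ and the slope $x_\eps'$ across the whole window, and splice cleanly to the posted-price mechanism at both endpoints -- near $\ps$, where the ramp meets $1$, and at the bottom of $W$, where IR rather than $\eps$-IC binds. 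A second delicate point is that the two leading revenue contributions are both exactly of order $\eps^{\alpha/(2\alpha-1)}$, so the argument closes only because the constants can be arranged so their difference is positive, and this positivity must survive the approximation errors incurred in solving the delayed ODE.
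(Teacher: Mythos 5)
Your plan follows the same architecture as the paper's proof: design a perturbed best-reporting map $\vsr$ that under-reports near $\ps$, back out the allocation by forcing the $\eps$-slack to bind along the path (which yields the same delayed differential relation $(v-\vsr(v))\,x'(\vsr(v))=x(\vsr(v))-x(\vsr(\vsr(v)))$, or equivalently your integral identity), calibrate $\delta\asymp\eps^{\alpha/(2\alpha-1)}$ and $\eta\asymp\delta^2/\eps\asymp\eps^{1/(2\alpha-1)}$, and close with the $\alpha$-power envelope. The revenue bookkeeping -- gain $\asymp\bar F(\ps)\delta$ linear in the free constant, loss $\lesssim\kappa_U\eta^\alpha$ of order $K^{2\alpha}$ -- is exactly the comparison the paper makes in its Step 3.

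There is, however, one structural inconsistency in the construction as stated. You prescribe $x_\eps$ to ``equal the posted-price indicator away from $\ps$'' and to ``ramp continuously from $0$ to $1$'' across $W$, so $x_\eps$ and its pre-history vanish at the left endpoint of $W$. But your own binding identity $\int_{\vsr(v)}^{v}\bigl(x_\eps(\vsr(v))-x_\eps(\vsr(s))\bigr)\,ds=\eps$ cannot hold there: if $x_\eps(\vsr(v))=0$ the left side is $-\int_{\vsr(v)}^{v}x_\eps(\vsr(s))\,ds\le 0$, not $\eps>0$. Equivalently, the DDE is linear and homogeneous in $x_\eps$, so a zero pre-history propagates $x_\eps\equiv 0$ forward. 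The paper resolves this by requiring IR and $\eps$-IC to bind simultaneously at the splice point $\ps-\delta$, which forces the boundary condition $\xr(0)=\eps/(\ps-\delta)>0$; the allocation is then strictly positive everywhere (a point the paper explicitly flags as a departure from exact-IC design), solved by an ODE on $[0,\ps-\delta]$ that supplies a genuinely positive pre-history for the DDE. You would need the same device -- your mechanism cannot coincide with the posted price below the window.

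The second gap is one you acknowledge: the assertion that the constants ``are calibrated so that $x_\eps$ reaches exactly $1$ at the top of $W$'' and that the DDE is well posed with a monotone solution in $[0,1]$. This is the crux of feasibility, and the paper proves it non-trivially (\Cref{prop:well-posed}, \Cref{prop:feas}) by changing variables to obtain a closed-form series solution for the DDE and then invoking a renewal-theory identity (\Cref{lem:new-rev-bound-on-gamma}) to show the tuning equation $\Gamma(2\delta/\mu)=(\mu/e\eps)e^{-2\delta/\mu}$ has a root with $\delta\in[\mu^2/(4\eps)-\mu,\;\mu^2/(4\eps)]$, which is precisely what delivers your $\eta\asymp\delta^2/\eps$ with the right constants. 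Your outline of the global-optimality check (iv) matches the paper's \Cref{prop:alloc-tra-vs} and is sound once the boundary condition is corrected.
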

This result  has multiple implications. First, it shows that the seller can benefit by at least $\Omega(\eps^{\alpha/(2\alpha-1)} )$ by offering a mechanism that is $\eps$-IC as opposed to IC.   That is, when we search in the entire class of $\varepsilon$-IC mechanisms, $\mathcal{M}(\varepsilon)$, quite notably, it is indeed possible to achieve \textit{supralinear} gains  compared to when we restrict attention to $\mathcal{M}(0)$. In conjunction with Theorem \ref{thm:low-bd-per}, we obtain that there exist constants $\underline{C}, \bar{C} > 0$ such that for all $\eps$ small enough
\begin{equation*}
\underline{C} \eps^{\alpha/(2\alpha-1)} \: \le \: \Pi^\star(\mathcal{M}(\eps)) \: -\: \Pi^\star(\mathcal{M}(0)) \:\le \:\bar{C} \eps^{\alpha/(2\alpha-1)}\log(1/\eps),\quad \alpha\in (1,\infty).
\end{equation*}
In other words, we have established that it is possible to achieve supralinear growth and characterized the types of supralinear growths that are possible ranging from linear to square root performance.
 Interestingly, in contrast to classical linear programming, a linear perturbation to the infinite-dimensional linear program \eqref{eq:probeps} leads to a \textit{nonlinear} gain in the objective.
 In conjunction with \Cref{prop:hard/soft-floor}, this result also establishes the crucial role that randomization takes in the class of $\varepsilon$-IC problems: \textit{An optimal mechanism has to randomize as soon as $\eps>0$.
 } This implies that the extreme points in \eqref{eq:probeps} and $(\mathcal{P}_0)$ are of a very different nature. In $(\mathcal{P}_0)$ the extreme points are posted-price mechanisms; however, \Cref{thm:low-bd-per} establishes that perturbing the IC constraints in a mechanism design problem fundamentally changes the feasible region and the set of extreme points is richer. In \Cref{fig:alloc-tra}, we depict a near optimal  $\eps$-IC mechanism that exhibits randomization.
 This highlights the subtle and deep impact of relaxing IC constraints on the nature of the optimization problem, and the significant impact it has on the performance that can be achieved (see \citealt{martin2016slater} and \citealt{basu2017strong} for other examples of properties that are true in finite dimension but fail in infinite-dimensional optimization problems).

\begin{figure}[t]
\begin{center}
\scalebox{1}{\begin{tikzpicture}[baseline=0pt,scale=0.5]
\def\n{10}

\draw [->,black,line width=0.3mm] (0,0) -- (1.1*\n,0);
\draw [->,black,line width=0.3mm] (0,0) -- (0,0.7*\n);
\draw node at (1.1*\n+0.3,-0.8) { $v$};

\draw [-,line width=0.6mm] (0,0.3) -- (0,-0.3) node at (0,-0.8) {0};

\draw [-,line width=0.6mm]  node at (-1.1,7.1) { $x(v)$};

\draw [-,line width=0.6mm] (0.3,5.1) -- (-0.3,5.1) node at (-1.0,5.1) { \small $1$};

\draw[line width=0.4mm,shift={(0,0.1)},blue] plot file {alloc-L-bound.data};

\def\sh{14}
\draw [->,black,line width=0.3mm] (0+\sh,0) -- (1.1*\n+\sh,0);
\draw [->,black,line width=0.3mm] (0+\sh,0) -- (0+\sh,0.7*\n);
\draw node at (1.1*\n+0.3+\sh,-0.8) { $v$};

\draw [-,line width=0.6mm] (0+\sh,0.3) -- (0+\sh,-0.3) node at (0+\sh,-0.8) {0};

\draw [-,line width=0.6mm]  node at (-1.1+\sh,7.1) { $t(v)$};

\draw [-,line width=0.6mm] (0.3+\sh,5.1) -- (-0.3+\sh,5.1) node at (-1.0+\sh,5.1) {\small $0.5$};

\draw[line width=0.4mm,shift={(0+\sh,0.1)},blue] plot file {trans-L-bound.data};

\draw [dashed,line width=0.4mm,red] (0,0) -- (4.95,0);
\draw [dashed,line width=0.4mm,red] (4.95,0) -- (4.95,5.09);
\draw [dashed,line width=0.4mm,red] (4.95,5.09) -- (10,5.09);

\draw [dashed,line width=0.4mm,red] (0+\sh,0) -- (4.95+\sh,0);
\draw [dashed,line width=0.4mm,red] (4.95+\sh,0) -- (4.95+\sh,4.7);
\draw [dashed,line width=0.4mm,red] (5.4+\sh,5.05) -- (10+\sh,5.05);
\draw [dashed,line width=0.4mm,red] (4.95+\sh,4.7) -- (5.4+\sh,5.05);

\draw[] plot coordinates {(0.99+\sh,0)(1.98+\sh,0)(2.97+\sh,0)(3.96+\sh,0)(4.95+\sh,0)};

\draw [-,line width=0.6mm] (5.2,0.3) -- (5.2,-0.3) node at (5.2,-0.8) {\small $\ps$};
\draw [-,line width=0.6mm] (5.2+\sh,0.3) -- (5.2+\sh,-0.3) node at (5.2+\sh,-0.8) {\small $\ps$};

\draw [-,line width=0.4mm,blue] (10.4+\sh,4.1) -- (10.8+\sh,4.1) node at (14.0+\sh,4) {\small near-optimal ($\mr$)};
\draw [dashed,line width=0.4mm,red] (10.3+\sh,3.1) -- (10.9+\sh,3.1) node at (14.0+\sh,3) {\small deterministic ($\md$)};

\end{tikzpicture}}
\end{center}
\caption{Allocations and transfers of the lower bound mechanism and the optimal deterministic mechanism when $F$ is uniform in $[0,1]$, i.e., $\alpha=2$, and $\eps=0.001$. Because $\eps$ is small, the optimal posted-price mechanism when $\eps=0$ is almost indistinguishable from $\md$.
}
\label{fig:alloc-tra}
\end{figure}
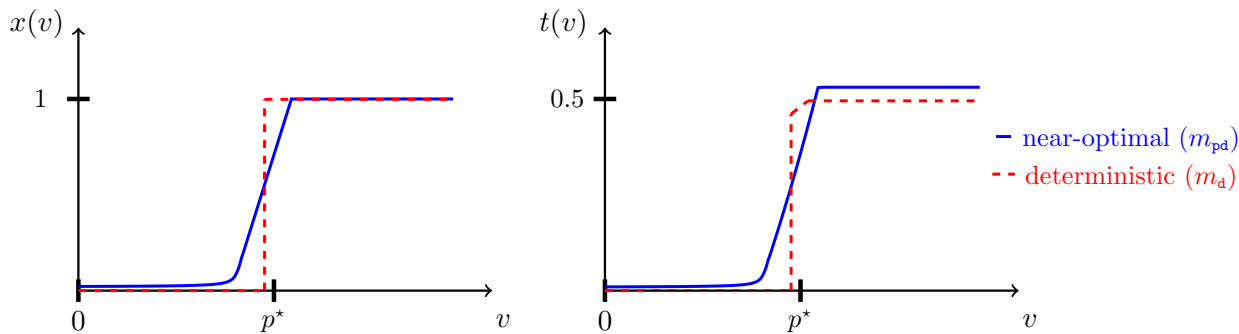

The Proof of \Cref{thm:low-bd-per} is constructive in that we judiciously construct a feasible mechanism that yields said performance. The proof  also sheds light on what are the key forces at play when optimizing over $\mathcal{M}(\varepsilon)$, and showcases important departures from the classical mechanism design approach. We next detail  the proof and its main building blocks.

\subsection{Proof of \Cref{thm:low-bd-per}}

The proof has three main components.  First we propose a family of mechanisms through an endogenous characterization.
This characterization critically depends on a carefully designed parametric family of best reporting mappings which indirectly induce the mechanisms.
In particular, we establish that for our proposed family, the allocation is fully characterized by a system of ordinary and delayed differential equations. We then solve this system and  establish that the induced mechanisms are feasible for \eqref{eq:probeps}. Finally, we characterize the performance of these mechanisms and,  by optimizing over their defining parameters,  we obtain the $\Omega(\eps^{\alpha/(2\alpha-1)} )$ revenue gains.

\subsubsection{Step 1: A Proposed Family of Mechanisms}\label{sec:prop-family}

Our main challenge is to exhibit  a feasible mechanism that is simple enough so that we can analyze its performance, but rich enough so that it can garner additional $\eps^{\alpha/(2\alpha-1)}$ revenue gains. We accomplish this in two steps. First, we propose a suitable family of best reporting functions $\vsr(\cdot|\mu,\delta)$ parametrized by $(\mu,\delta)$ (see \cref{eq-vs-fam}). Second, we establish a consistency property of $\vsr(\cdot|\mu,\delta)$: it
coincides with the best reporting function of a  family of mechanisms that we coin \textit{perturbed delayed mechanisms} (see \Cref{prop:alloc-tra-vs}).

\paragraph{Best reporting function.}
In \Cref{sec:formulation}, we saw that standard reductions no longer apply  when $\eps>0$.
In particular, after selecting a mechanism $m\in \mathcal{M}(\eps)$ the seller must also account for  the best reporting mappings $\vsr(\cdot)$ (c.f., \cref{eq:best-str-Map}). Consequently, the optimization must be indirectly performed over the  triplet $(\all, \tra,\vsr)$. One possible approach to come up with a feasible solution for \eqref{eq:probeps} is to define a pair $(\all,\tra)$ that is both IR and $\eps$-IC. However, such an approach is challenging given the lack of control on $\vsr(\cdot)$.
To bypass this issue, the approach we take is to define $\vsr(\cdot)$ as a natural  perturbation of $\vd(\cdot)$, defined in \Cref{eq:v_star} and depicted in \Cref{fig:reporting_fun} (b). Then we use the fact that $\vsr(v)$ must  maximize $v\cdot \all(w)-\tra(w)$ for some $m=(\all,\tra)\in \mathcal{M}(\eps)$ to reverse engineer  the mechanism $m$ that is consistent with $\vsr(\cdot)$. The perturbation of the reporting function will lead to randomization  in the mechanisms that induce it. %
 We define the best reporting mappings $\vsr(\cdot|\mu,\delta)$---a perturbation of $\vd(\cdot)$---parametrized by some positive constants $(\mu,\delta)$  as follows

\begin{equation}\label{eq-vs-fam}
\vsr(v|\mu,\delta)\triangleq \threepartdef{0}{v\in[0,\ps-\delta);}{v-\mu}{v\in[\ps-\delta,\ps+\delta+\mu];}{\ps+\delta}{v\in [\ps+\delta+\mu,\bv].}
\end{equation}
We assume that $\delta < \ps$ and $\ps+\delta+\mu < \bv$, which is possible because the optimal posted price $\ps$ is assumed to be interior.

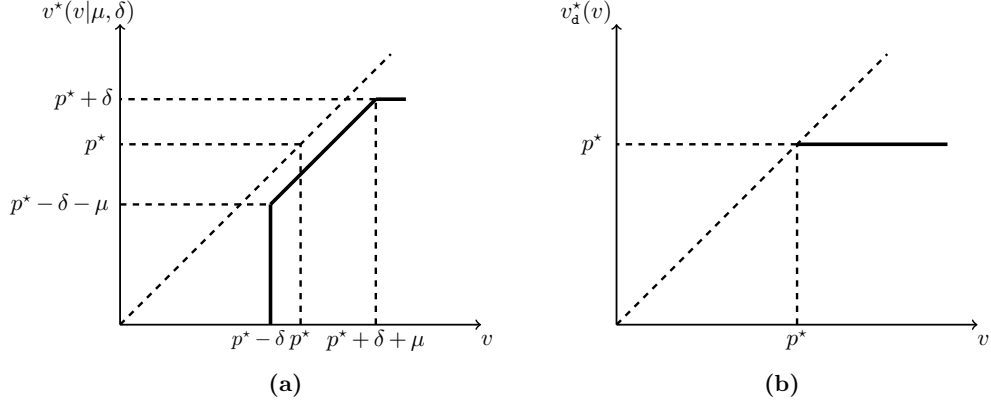
\begin{figure}[t]
\begin{center}
\scalebox{0.8}{\begin{tikzpicture}[baseline=0pt]

\draw[->,line width=0.3mm] (0,0)--(0,5);
\draw[->,line width=0.3mm] (0,0)--(6,0);
\node at (-0.5,5.2){$\vd(v)$};
\node at (6.1,-0.25){$v$};

\draw[dashed,line width=0.4mm](0,0)--(4.5,4.5);

\node at (-0.4,3 ){$\ps$};
\node at (3,-0.25 ){$\ps$};
\draw[dashed,line width=0.4mm](3,0)--(3,3);
\draw[dashed,line width=0.4mm](3,3)--(0,3);
\draw[line width=0.6mm](3,3)--(5.5,3);

\def\sh{-8.25}

\draw[->,line width=0.3mm] (0+\sh,0)--(0+\sh,5);
\draw[->,line width=0.3mm] (0+\sh,0)--(6+\sh,0);
\node at (-0.52+\sh,5.2){$\vsr(v|\mu,\delta)$};
\node at (6.1+\sh,-0.25){$v$};

\draw[dashed,line width=0.4mm](0+\sh,0)--(4.5+\sh,4.5);

\draw[line width=0.6mm](2.5+\sh,0)--(2.5+\sh,2) node at (2.3+\sh,-0.25){\small $\ps-\delta$};
\draw[dashed,line width=0.4mm] (0+\sh,2)--(2.5+\sh,2)node at (-1.0+\sh,2) {$\ps-\delta-\mu$};

\draw[line width=0.6mm](2.5+\sh,2)--(4.25+\sh,3.75);
\draw[dashed,line width=0.4mm] (4.25+\sh,0)--(4.25+\sh,3.75) node at (4.25+\sh,-0.25) {$\ps+\delta+\mu$};
\draw[dashed,line width=0.4mm] (4.25+\sh,3.75)--(0+\sh,3.75) node at (-0.6+\sh,3.75) {$\ps+\delta$};

\draw[line width=0.6mm](4.25+\sh,3.75)--(4.75+\sh,3.75);

\node at (-0.4+\sh,3 ){$\ps$};
\node at (3+\sh,-0.25 ){$\ps$};
\draw[dashed,line width=0.4mm](3+\sh,0)--(3+\sh,3);
\draw[dashed,line width=0.4mm](3+\sh,3)--(0+\sh,3);

\node at (2.75,-1){\textbf{(b)}};
\node at (2.75+\sh,-1){\textbf{(a)}};

\end{tikzpicture}}
\end{center}
\caption{Illustration of  \textbf{(a)} the parametrized best reporting function and \textbf{(b)} the best  reporting function for a hard/soft floor mechanism with $\pd=\ps$ and $\pt=\ps+\eps$.}
\label{fig:reporting_fun}
\end{figure}
For a graphical illustration of $\vsr(v|\mu,\delta)$ see \Cref{fig:reporting_fun} (a).
In what follows, when clear from context, we drop the dependence on $(\mu,\delta)$ and use $\vsr(v)$ to denote $\vsr(v|\mu,\delta)$.
The definition of  $\vsr(\cdot)$ implies that under the associated mechanism the best reporting function always misreports the value in a downward fashion, $\vsr(v)\leq v$. It equals 0  for  small enough valuations (below $\ps-\delta$),  $v-\mu$ for intermediate valuations (in $[\ps-\delta,\ps+\delta+\mu]$) and $\ps+\delta$ for large valuations (above $\ps+\delta$).

Note that $\vsr(\cdot)$ generalizes the best reporting functions of the hard/soft floor mechanism $(\ps,\ps+\eps)$. Indeed,
for $\delta=0$ both functions coincide, except in  $[\ps,\ps+\mu]$ where $\vsr(\cdot)$ is linear. However,
as $\mu$ becomes smaller the functions become indistinguishable, i.e., $\lim_{\mu\downarrow0}\vsr(v|\mu,0) =\vd(v)$.

\paragraph{Perturbed delayed mechanisms.}
We first establish that it is possible to characterize the allocation and payments associated to $\vsr(\cdot)$ through a set of ordinary and delayed differential equations. Before stating the result, we let $w(\cdot)$ denote the generalized inverse of $\vsr(\cdot)$, which is given by
\begin{equation*}
w(v)\triangleq \twopartdef{\ps-\delta}{v\in[0,\ps-\delta-\mu];}{v+\mu}{v\in[\ps-\delta-\mu, \ps+\delta].}
\end{equation*}

\begin{proposition}[Perturbed delayed mechanisms]\label{prop:alloc-tra-vs}
Let $\xr:[0,\ps + \delta]\rightarrow \mathbb R$ be a solution of the system
\begin{flalign}\label{eq:ode-1}\tag{ODE}
\dxr(v)&=\frac{\xr(v)}{\w(v)-v}, \:\: \forall v\in [0,\ps-\delta], \qquad \xr(0)=\frac{\eps}{\ps-\delta}, \\\label{eq:dde-1}\tag{DDE}
\dxr(v)&=\frac{\xr(v)-\xr(\vsr(v))}{w(v)-v}, \:\: \forall v\in [\ps-\delta,\ps+\delta].
\end{flalign}
Moreover, let $\xr(v)= 1$ for all $v\in (\ps+\delta,\bv]$ and let $\tr:[0,\bv]\rightarrow \RR$
 be given by
\begin{equation*}
\tr(v) = \twopartdef{v\cdot \xr(v)}{v< \ps-\delta;}{v\cdot \xr(v)- \int_{0}^{v}\xr(\vsr(s))ds +\eps}{v\geq \ps-\delta.}
\end{equation*}
If $\lim_{v\uparrow\ps+\delta}\xr(v)= 1$, then $\vsr(\cdot)$, defined in \eqref{eq-vs-fam}, is a best reporting function for $\mr=(\xr,\tr)$.
\end{proposition}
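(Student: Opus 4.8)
\textbf{Proof plan for Proposition~\ref{prop:alloc-tra-vs}.} The goal is to verify that the triplet $(\xr,\tr,\vsr)$ described in the statement is internally consistent, meaning $\vsr(\cdot)$ is indeed a best reporting function for the mechanism $\mr=(\xr,\tr)$. The plan is to proceed by reverse engineering: rather than guessing $(\xr,\tr)$ and checking optimality of $\vsr$, I would start from the requirement that $\vsr(v)\in\argmax_{w}\{v\cdot \xr(w)-\tr(w)\}$ and show this forces the stated differential equations, then argue the converse. First I would observe that if $\vsr(\cdot)$ is to be a best response, then the induced indirect utility $u(v)=v\cdot\xr(\vsr(v))-\tr(\vsr(v))$ must satisfy the envelope formula $u(v)=u(0)+\int_0^v \xr(\vsr(s))\,ds$ (this is exactly \eqref{eq:env-report} specialized to our candidate), which combined with the definition of $\tr$ in the statement (chosen so that $v\cdot\xr(v)-\tr(v)=u(v)$ exactly when $v\geq \ps-\delta$, i.e.\ truthful reporting is $\eps$-optimal with slack exactly $\eps$) pins down $\tr$ once $\xr$ is known. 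So the real content is the characterization of $\xr$.

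The key step is a first-order/indifference condition. For $\vsr(v)$ to be a maximizer of $w\mapsto v\cdot\xr(w)-\tr(w)$, and because the candidate $\vsr$ is interior and varies smoothly on $[\ps-\delta,\ps+\delta]$ (and pools at $0$ below $\ps-\delta$), the buyer of type $v$ must be indifferent between reporting $\vsr(v)$ and nearby reports on the path --- in particular indifferent between $\vsr(v)$ and $\vsr(v')$ for $v'$ near $v$. Writing out $v\cdot\xr(\vsr(v))-\tr(\vsr(v)) = v\cdot\xr(\vsr(v'))-\tr(\vsr(v'))$, differentiating in $v'$ at $v'=v$, and using $\tr'(\cdot)$ expressed via the envelope relation, one gets a relation of the form $(v-\vsr(v))\,\xr'(\vsr(v))\cdot\dot{\vsr}(v) = $ (a transfer-derivative term). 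Re-expressing everything in the $w$-variable via $w=\vsr^{-1}$, i.e.\ using $v=w(v)$ when the roles are swapped, this indifference condition becomes precisely $\dxr(v)(w(v)-v) = \xr(v)-\xr(\vsr(v))$ on $[\ps-\delta,\ps+\delta]$, which is \eqref{eq:dde-1}; below $\ps-\delta$ the term $\xr(\vsr(v))=\xr(0)$ drops out in the right way (or rather, the relevant comparison is between reporting $0$ and reporting slightly above, giving the homogeneous ODE \eqref{eq:ode-1}). The boundary condition $\xr(0)=\eps/(\ps-\delta)$ comes from the \eqref{eq:IR} constraint binding appropriately at the bottom together with the $+\eps$ in $\tr$: the type just at the pooling threshold must get exactly the right utility. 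The other boundary condition $\lim_{v\uparrow\ps+\delta}\xr(v)=1$ is imposed as a hypothesis and is what glues the delayed region to the region where $\xr\equiv 1$.

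After deriving that any consistent mechanism must have $\xr$ solving \eqref{eq:ode-1}--\eqref{eq:dde-1}, the converse direction --- that a solution to this system actually yields a valid best response --- requires checking (i) monotonicity/feasibility: $\xr(\cdot)\in[0,1]$ and the mechanism is well-defined, (ii) that $\vsr(v)$ is a \emph{global} maximizer of $v\cdot\xr(w)-\tr(w)$ over all $w\in\mathcal S$, not just a local/stationary point, and (iii) that the $(\eps\text{-IR})$ and $(\eps\text{-IC})$ constraints hold (feasibility for \eqref{eq:probeps}), though strictly speaking the proposition as stated only claims $\vsr$ is a best reporting function, with full feasibility presumably handled in a later step of the overall proof. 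For global optimality I would exploit the structure: the objective $v\cdot\xr(w)-\tr(w)$ as a function of the report $w$ should be shown to be quasiconcave (or at least to have its maximum on the path traced by $\vsr$), using that $\xr$ is nondecreasing and the envelope construction of $\tr$ makes the ``reporting-$w$'' payoff a concave-like function of the type mismatch. This global-to-local gap is the main obstacle: the differential equations only encode a \emph{local} indifference/first-order condition, and one must rule out profitable large deviations --- e.g.\ a high-value buyer reporting far below $\ps-\delta$, or any buyer reporting above $\ps+\delta$. I expect this to require a careful sign analysis of $\frac{d}{dw}\big(v\cdot\xr(w)-\tr(w)\big)$ on each of the three regions of the report space, showing it is positive then negative with the single crossing exactly at $w=\vsr(v)$, together with a direct comparison of the pooling value (reporting $0$) against on-path values for the low types. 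The delayed nature of \eqref{eq:dde-1} makes even the existence and regularity of $\xr$ something to address --- solving forward on successive intervals of length $\mu$ (a method-of-steps argument), which also explains why delayed differential equations are intrinsic here.
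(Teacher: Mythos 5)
Your plan matches the paper's proof in its essential content. The paper's argument is exactly your ``converse direction'': writing $g(v')\triangleq v\cdot\xr(v')-\tr(v')$ and substituting \eqref{eq:ode-1}--\eqref{eq:dde-1} yields the clean factorization $\dot g(v')=(v-\w(v'))\,\dxr(v')$ for $v'\le\ps+\delta$, and $\dot g(v')=-1+\xr(\vsr(v'))\le 0$ for $v'>\ps+\delta$; combined with monotonicity of $\xr$ (proved as a separate lemma via a comparison argument) and monotonicity of $\w$, this gives precisely the single-crossing sign pattern around $v'=\vsr(v)$ you describe, with the low-type case $v\le\ps-\delta$ handled by $v-\w(v')\le 0$ uniformly so $g$ is nonincreasing and its maximum is at $v'=0=\vsr(v)$. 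The ``forward'' necessity step you front-load is used in the paper only as a heuristic to motivate the construction (it appears in the body, not the formal proof) and is not needed to prove the proposition as stated, so that part of your plan can be dropped; likewise items (i) and (iii) in your checklist are deferred to the well-posedness and feasibility propositions, exactly as you suspected.
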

This result provides a partial characterization of the family of mechanisms that lead to $\vsr(\cdot)$. The allocation, and as a consequence the transfers, are determined by a set of ordinary \eqref{eq:ode-1} and  \textit{delayed differential equations} \eqref{eq:dde-1}. That is, \Cref{prop:alloc-tra-vs} equips us with a precise methodology to generate mechanisms. (In the next section we will show that there is a range of values  $(\mu,\delta)$ for which $\mr$ is indeed an $\eps$-IC mechanism.)
Given that $\xr$ must be found by solving a delayed differential equation and that it depends on $\vsr(\cdot)$ (a perturbation of $\vd(\cdot)$), we name the family of mechanisms in \Cref{prop:alloc-tra-vs} the \textit{perturbed delayed mechanisms}.

Another important implication of
\Cref{prop:alloc-tra-vs} is that it showcases the relationship between the allocations and transfers as a function of the best reporting function they induce via \eqref{eq:ode-1}-\eqref{eq:dde-1}. This highlights that the class of $\eps$-IC mechanisms is of a different nature compared to the class of IC mechanism.

To obtain \Cref{prop:alloc-tra-vs} we establish necessary conditions for the allocations and transfers that induce the postulated best reporting mapping. Under differentiability conditions, this leads to  the set  of ordinary and delayed differential equations for the allocation. Next we provide a heuristic derivation of the differential equations for $\xr$ and transfers. A formal proof of the proposition can be found in the appendix.

Any mechanism $m=(\all,\tra)$ that is compatible with $\vsr(\cdot)$ must satisfy condition \eqref{eq:best-str-Map}, i.e., the best response function should maximize the buyer's utility.
If we further assume that the allocations and transfers are differentiable, and impose that the first order conditions
are satisfied at every $\vsr(v)$, we deduce that  $v\cdot \dot{\all}(\vsr(v))=\dot{\tra}(\vsr(v))$ for all $ v\in [0,\bv]$. Recalling that $w(\cdot)$ is the generalized inverse of $\vsr(\cdot)$, we then have
\begin{equation}\label{eq:fo-cond}
w(v)\cdot \dot{\all}(v)=\dot{\tra}(v), \quad \forall v\in [0,\ps+\delta].
\end{equation}
Note that because $\vsr(v)=0$ for $v\leq \ps-\delta$ this equation is valid only for $v\in [\ps-\delta+\mu,\ps+\delta]$. By extending the equation for
$v\le \ps -\delta+\mu$, we are simply imposing more structure on $(x,t)$.
The equation above characterizes the transfers (up to a constant) in $[0,\ps+\delta]$ as a function of $w(\cdot)$ (or, equivalently, as a function of $\vsr(\cdot)$) and $\all(\cdot)$.
We would like, however, to obtain a characterization of both transfers and allocations as a function of solely $\vsr(\cdot)$. Using the structure of the optimal deterministic mechanism and  the upper bound derived in Section~\ref{sec:path-based-upper-bound} as motivation, we impose two additional conditions that will enable us to accomplish this goal. First, we assume that
\eqref{eq:IR} binds for all low types  $v\in [0,\ps-\delta]$, that is,  $v\cdot \all(v)-\tra(v)=0$. By differentiating,  we obtain
$\all(v)+v\cdot \dot{\all}(v)=\dot{\tra}(v)$ for all $ v\in [0,\ps-\delta]$. Combining this with Eq. \eqref{eq:fo-cond} yields
\begin{equation}\label{eq:ode-w}
\all(v)+(v-w(v))\cdot \dot{\all}(v)=0, \quad \forall v\in [0,\ps-\delta].
\end{equation}
Second, we assume that \eqref{eq:IC} binds for high types $v\in [\ps-\delta,\bv]$. When \eqref{eq:IC} holds, condition~\eqref{eq:env-UB-t} holds with equality and we obtain
\begin{equation*}
v\cdot \all(v)-\tra(v)= u(0)+\int_{0}^{v}\all(\vsr(s))ds -\eps,\quad \forall v\in [\ps-\delta,\bv].
\end{equation*}
Differentiating both sides above obtains $\all(v) + v\dot{\all}(v)-\dot{\tra}(v)=\all(\vsr(v))$ for all $v\in [\ps-\delta,\bv]$, which in conjuction with \cref{eq:fo-cond} implies
\begin{equation}\label{eq:dde-w}
\all(v) + (v-w(v))\dot{\all}(v)=\all(\vsr(v)),\quad \forall v\in [\ps-\delta,\ps+\delta].
\end{equation}
In summary, \cref{eq:ode-w} and \cref{eq:dde-w} together provide a characterization (up to a constant) for the allocation as a function
of our choice of best reporting function $\vsr(\cdot)$. Note that the solution to these equations only yields the allocation in $[0,\ps+\delta]$. For higher values we always allocate. Since we are assuming that \eqref{eq:IR} and \eqref{eq:IC} bind in $[0,\ps-\delta]$ and $[\ps-\delta,\bv]$, respectively, we also obtain the transfers for all values.
Finally, observe that combining   \eqref{eq:IR} and \eqref{eq:IC}  at $\ps-\delta$
implies that $x(0)=\eps/(\ps-\delta)$. This leads to the systems of ordinary and delayed differential equations in \Cref{prop:alloc-tra-vs}.

\subsubsection{Step 2: Feasibility of proposed mechanisms}
In this section, we establish that the perturbed delayed mechanism from \Cref{prop:alloc-tra-vs} is feasible, that is,
$\mr\in \mathcal{M}(\eps)$. To do so, we take two steps. We first argue that the system \eqref{eq:ode-1}-\eqref{eq:dde-1} admits a unique solution and that  the solution is monotone. Then, we establish that for a range of parameters $(\mu,\delta)$ the  mechanism is feasible.

\begin{proposition}[Well-posedness]\label{prop:well-posed}
The perturbed delayed mechanism $\mr$ from \Cref{prop:alloc-tra-vs} is well defined, that is, the system \eqref{eq:ode-1}-\eqref{eq:dde-1} has a unique solution.
In addition, $\xr$ is strictly positive,  monotone non-decreasing in $[0,\ps + \delta]$, and admits a closed-form representation.
\end{proposition}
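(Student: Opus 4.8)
The plan is to split the interval $[0,\ps+\delta]$ at $\ps-\delta$ and analyze the ordinary part \eqref{eq:ode-1} on $[0,\ps-\delta]$ and the delay part \eqref{eq:dde-1} on $[\ps-\delta,\ps+\delta]$ separately, then glue the two pieces together. The delicate point, which I expect to be the main obstacle, is that positivity and monotonicity of $\xr$ must be shown to persist through the delay regime, where nothing in the local form of \eqref{eq:dde-1} forces the sign of $\dxr$.

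On $[0,\ps-\delta]$, equation \eqref{eq:ode-1} is a scalar linear ODE with coefficient $1/(\w(v)-v)$; from the explicit form of $\w$ one checks $\w(v)-v\ge\mu>0$ on this interval, so the coefficient is continuous and bounded, and \eqref{eq:ode-1} together with the initial condition $\xr(0)=\eps/(\ps-\delta)$ has the unique solution $\xr(v)=\tfrac{\eps}{\ps-\delta}\exp\left(\int_0^v \frac{ds}{\w(s)-s}\right)$. Because $\w$ is piecewise linear this integral, and hence $\xr$, is available in closed form on $[0,\ps-\delta]$ (one gets $\xr(v)=\eps/(\ps-\delta-v)$ on $[0,\ps-\delta-\mu]$ and an exponential on $[\ps-\delta-\mu,\ps-\delta]$). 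From the formula, $\xr>0$ and $\dxr>0$ on $[0,\ps-\delta]$; in particular $\xr(\ps-\delta)>0$ and $\dxr(v)=\xr(v)/\mu>0$ on the history window $[\ps-\delta-\mu,\ps-\delta]$.

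For the delay regime I would use the method of steps. On $[\ps-\delta,\ps+\delta]$ one has $\w(v)-v=\mu$ and $\vsr(v)=v-\mu$, so \eqref{eq:dde-1} reads $\dxr(v)=(\xr(v)-\xr(v-\mu))/\mu$, a constant-delay equation whose history on $[\ps-\delta-\mu,\ps-\delta]$ is the ODE solution above. I partition $[\ps-\delta,\ps+\delta]$ into the $\lceil 2\delta/\mu\rceil$ consecutive length-$\mu$ intervals $I_k=[\ps-\delta+k\mu,\ps-\delta+(k+1)\mu]$, the last one truncated at $\ps+\delta$. On $I_0$ the lagged term $v\mapsto\xr(v-\mu)$ is the known history, so \eqref{eq:dde-1} is an inhomogeneous linear ODE whose left-endpoint value $\xr(\ps-\delta)$ is pinned down by continuity; it has a unique, explicit solution. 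Inductively, on $I_k$ the lagged term is the solution already obtained on $I_{k-1}$, so again one solves a linear ODE with prescribed left-endpoint value. This yields existence, uniqueness, and the claimed (piecewise) closed-form representation of $\xr$ on $[0,\ps+\delta]$.

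Finally, to establish $\dxr>0$ on $[\ps-\delta,\ps+\delta]$ I would exploit the identity, valid since $\xr$ is absolutely continuous, $\dxr(v)=\tfrac1\mu(\xr(v)-\xr(v-\mu))=\tfrac1\mu\int_{v-\mu}^{v}\dxr(s)\,ds$, so $\dxr(v)$ equals its own running average over the window $[v-\mu,v]$. Then I would induct over the $I_k$: the base case is $\dxr>0$ on $I_{-1}:=[\ps-\delta-\mu,\ps-\delta]$ (shown above); assuming $\dxr>0$ on $I_{k-1}$, if $\dxr$ had a first zero $v_0$ inside $I_k$ then at the left endpoint of $I_k$ the averaging window overlaps $I_{k-1}$ on a set of positive measure, forcing $\dxr>0$ there and hence $v_0$ into the interior of $I_k$, with $\dxr(v_0)=0$ and $\dxr>0$ on $[\ps-\delta+k\mu,v_0)$; but since $v_0<\ps-\delta+(k+1)\mu$ the window $[v_0-\mu,v_0]$ still meets $I_{k-1}$ in a set of positive measure where $\dxr>0$, so $\dxr(v_0)=\tfrac1\mu\int_{v_0-\mu}^{v_0}\dxr(s)\,ds>0$, a contradiction. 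Hence $\dxr>0$ throughout $[\ps-\delta,\ps+\delta]$, so $\xr$ is strictly increasing there and $\ge\xr(\ps-\delta)>0$; together with the ODE regime, $\xr$ is strictly positive and monotone non-decreasing on $[0,\ps+\delta]$, which would complete the proof.
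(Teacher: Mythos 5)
Your proof is correct, and it takes a genuinely different route from the paper's. The paper first applies the change of variables $\yr(v)=\xr(v)\cdot\frac{\mu}{\eps}\exp\bigl(-\frac{v-\ps+\delta+\mu}{\mu}\bigr)$, which strips the $\xr(v)$ term from the delay equation and reduces \eqref{eq:dde-1} to the canonical constant-coefficient form $\dyr(v)+\frac{e^{-1}}{\mu}\yr(v-\mu)=0$; it then cites the known finite-sum solution for such equations, obtaining a single compact closed form, and proves monotonicity in a separate lemma by an ODE comparison argument (on each step, compare $\all$ with the auxiliary ODE $\dot r=(r-\text{const})/\mu$, conclude $\all\ge$ its left-endpoint value, hence $\dxr\ge 0$). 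You instead work directly with \eqref{eq:dde-1} via the method of steps, noting that $\w(v)-v=\mu$ and $\vsr(v)=v-\mu$ on $[\ps-\delta,\ps+\delta]$ already yield a constant-delay equation, and you prove monotonicity through the neat observation that $\dxr(v)=\frac1\mu\int_{v-\mu}^{v}\dxr(s)\,ds$, i.e., $\dxr$ is its own running average, which supports a clean first-zero contradiction. Your monotonicity argument is arguably more elementary and self-contained than the paper's comparison lemma. The one respect in which the paper's route buys more is the compactness of the closed form: the explicit finite-sum formula for $\yr$ (and hence $\xr$) is what the paper's \Cref{prop:feas} uses directly to locate $\delta(\mu,\eps)$ via the function $\Gamma$ and its renewal-theoretic bounds; your "piecewise closed form" via iterated integration is an existence-and-uniqueness-level representation and would need to be unwound into the same finite sum before the downstream feasibility analysis could proceed. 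As a proof of the well-posedness statement itself, though, your argument is complete and sound.
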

We establish the proposition by analyzing the system of differential equations. Note that
\eqref{eq:ode-1} can be solved by simple integration. However, \eqref{eq:dde-1} is a delayed differential equation
for which standard methods of ordinary differential equations do not apply.  Performing a change of variables and using the method of steps (see, e.g., \citealt{driver2012ordinary}), we can derive a closed-form solution for the allocation.  We provide the closed-form solution in the proof.

\Cref{fig:alloc-tra} provides an example when $F$ is a uniform
distribution. There are several interesting observations. First, note that $\xr(\cdot)$ is a non-decreasing and randomized allocation. Second,
both allocations and transfers represent a perturbation of the optimal hard/soft floor mechanism $\mr=(0.5\cdot(1-\eps),0.5\cdot(1+\eps))$.
Indeed, one can establish that as $\eps\downarrow0$, $\mr$ becomes $\md$.
Now that we have proved that $\mr$ can be explicitly determined, we can leverage \Cref{prop:alloc-tra-vs} to show that
$\mr$ is an $\eps$-IC mechanism.

\begin{proposition}[Feasibility]\label{prop:feas}
Fix $\eps>0$ and $\mu>\eps\cdot e$ then there exists $\delta(\mu,\eps)>0$  in $\left[ \frac{\mu^2}{4\eps}-\mu, \frac{ \mu^2}{4\eps}\right]$
such that
$\lim_{v\uparrow\ps+\delta}\xr(v)= 1$. For $\mu$ small such that $(\ps-\delta,\ps+\delta+\mu)\subset[0,\bv]$ the perturbed delayed mechanism is feasible, i.e., $\mr\in \mathcal{M}(\eps)$.
\end{proposition}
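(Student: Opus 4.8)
The plan is to treat the two assertions separately: first the existence of a $\delta=\delta(\mu,\eps)$ in the stated window with $\lim_{v\uparrow\ps+\delta}\xr(v)=1$, and then the feasibility of $\mr=(\xr,\tr)$ once such a $\delta$ is fixed. For the first assertion I would exploit the closed form guaranteed by \Cref{prop:well-posed} together with a rescaling that makes the dependence on $\delta$ transparent. On $[\ps-\delta-\mu,\ps-\delta]$ the solution of \eqref{eq:ode-1} is $\xr(v)=\tfrac{\eps}{\mu}e^{(v-\ps+\delta+\mu)/\mu}$, so $\xr(\ps-\delta)=e\eps/\mu$ regardless of $\delta$; substituting $\sigma=(v-\ps+\delta)/\mu$ and $\xr(v)=\tfrac{\eps}{\mu}r(\sigma)$ turns \eqref{eq:dde-1} into the parameter-free delay equation $r'(\sigma)=r(\sigma)-r(\sigma-1)$ with history $r(\sigma)=e^{\sigma+1}$ on $[-1,0]$ (equivalently $r(\sigma)=e^{\sigma}(e-\sigma)$ on $[0,1]$). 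Consequently $g(\delta):=\lim_{v\uparrow\ps+\delta}\xr(v)=\tfrac{\eps}{\mu}\,r(2\delta/\mu)$ for a single fixed, continuous, unbounded function $r$, which is moreover strictly increasing (monotonicity and continuity of $\xr$, hence of $g$, are part of \Cref{prop:well-posed}) and satisfies $r(0)=e$. Since $\mu>e\eps$ gives $\mu/\eps>e=r(0)$, the equation $g(\delta)=1$, i.e. $r(2\delta/\mu)=\mu/\eps$, has a unique solution $\delta>0$ by the intermediate value theorem.

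To pin this $\delta$ down to $[\tfrac{\mu^{2}}{4\eps}-\mu,\tfrac{\mu^{2}}{4\eps}]$ I would establish the sharp two-sided estimate $2\sigma\le r(\sigma)\le 2\sigma+4$ for all relevant $\sigma$. The exact slope $2$ follows by integrating the delay equation and telescoping, $r(N)-r(1)=\int_{N-1}^{N}r-\int_{0}^{1}r$, which together with $r(1)=e^{2}-e$ and $\int_{0}^{1}r=e^{2}-e-1$ pins the asymptotic linear rate at $2\bigl(r(1)-\int_{0}^{1}r\bigr)=2$; that $r$ admits an affine (rather than faster-growing) asymptote is because the characteristic equation $\lambda=1-e^{-\lambda}$ has $\lambda=0$ as a double root and no other root with $\mathrm{Re}\,\lambda\ge 0$. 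The additive band $0\le r(\sigma)-2\sigma\le 4$ then follows by checking it on an initial segment (where $r$ is explicit) and bounding the exponentially decaying transient thereafter, using that $\rho(\sigma):=r(\sigma)-2\sigma$ obeys the same delay equation $\rho'(\sigma)=\rho(\sigma)-\rho(\sigma-1)$. Granting $2\sigma\le r(\sigma)\le 2\sigma+4$, the solution $\sigma^{\star}$ of $r(\sigma^{\star})=\mu/\eps$ obeys $\tfrac{\mu}{2\eps}-2\le\sigma^{\star}\le\tfrac{\mu}{2\eps}$, hence $\delta=\tfrac{\mu}{2}\sigma^{\star}\in[\tfrac{\mu^{2}}{4\eps}-\mu,\tfrac{\mu^{2}}{4\eps}]$. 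I expect this step — obtaining explicit, non-asymptotic two-sided control of the universal profile $r$ — to be the main obstacle; the rest is routine verification.

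With $\delta$ fixed as above, feasibility of $\mr$ follows. By \Cref{prop:well-posed} and the choice of $\delta$, $\xr$ is strictly positive and non-decreasing on $[0,\ps+\delta]$ with $\lim_{v\uparrow\ps+\delta}\xr(v)=1$, while $\xr\equiv 1$ on $(\ps+\delta,\bv]$; hence $\xr$ maps $\mathcal{S}$ into $(0,1]$. Because $\lim_{v\uparrow\ps+\delta}\xr(v)=1$, \Cref{prop:alloc-tra-vs} applies and $\vsr(\cdot|\mu,\delta)$ from \eqref{eq-vs-fam} is a best reporting function for $\mr$, so $u(v):=\sup_{w}\{v\xr(w)-\tr(w)\}=v\xr(\vsr(v))-\tr(\vsr(v))$ and, by the envelope identity \eqref{eq:env-report}, $u(v)=u(0)+\int_{0}^{v}\xr(\vsr(s))\,ds$. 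Constraint \eqref{eq:IR} is immediate: $\tr(v)=v\xr(v)$ for $v<\ps-\delta$, and for $v\ge\ps-\delta$ one has $v\xr(v)-\tr(v)=\int_{0}^{v}\xr(\vsr(s))\,ds-\eps\ge\int_{0}^{\ps-\delta}\xr(0)\,ds-\eps=0$, since $\vsr\equiv 0$ on $[0,\ps-\delta)$ and $\xr(0)=\eps/(\ps-\delta)$. For \eqref{eq:IC}, observe $u(0)=\sup_{w}\{-\tr(w)\}\ge -\tr(0)=0$; the envelope identity and the definition of $\tr$ give that $u(v)-[v\xr(v)-\tr(v)]$ equals $u(0)+\eps$ when $v\ge\ps-\delta$ and equals $u(0)+v\eps/(\ps-\delta)\le u(0)+\eps$ when $v<\ps-\delta$, so the $\eps$-IC constraint holds if and only if $u(0)=0$, i.e. if and only if $\tr(w)\ge 0$ for every $w$. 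This last fact is checked termwise: $\tr(w)=w\xr(w)\ge 0$ for $w<\ps-\delta$, while for $w\ge\ps-\delta$, $\tr(w)=w\xr(w)-\int_{\ps-\delta}^{w}\xr(\vsr(s))\,ds\ge w\xr(w)-(w-\ps+\delta)\xr(w)=(\ps-\delta)\xr(w)\ge 0$, using $\vsr(s)\le s$, monotonicity of $\xr$, and $\delta<\ps$. Hence $\mr\in\mathcal{M}(\eps)$, provided $\mu$ is small enough that $\ps-\delta-\mu\ge 0$ and $(\ps-\delta,\ps+\delta+\mu)\subset[0,\bv]$.
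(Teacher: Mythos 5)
Your rescaling $\xr(v)=\tfrac{\eps}{\mu}r(\sigma)$ with $\sigma=(v-\ps+\delta)/\mu$ is correct and is equivalent to the paper's change of variables via $r(\sigma)=e^{\sigma+1}\yr(v)$; your reduction of the $\delta$-window to a two-sided bound of the form $2\sigma\le r(\sigma)\le 2\sigma+4$ is exactly the right move, and your feasibility verification is fine (indeed a touch more careful than the paper's, in that you explicitly identify $\eps$-IC as equivalent to $u(0)=0$, i.e.\ $\tr(\cdot)\ge 0$, and check the latter termwise). The genuine gap is the one you flag yourself: the two-sided estimate on the universal profile $r$ is asserted, not proved. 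Your observation that $\rho(\sigma)=r(\sigma)-2\sigma$ satisfies the same delay equation as $r$ cannot by itself bound $\rho$, since $r$ itself satisfies that equation yet grows linearly; whether $\rho$ is bounded depends entirely on whether its initial data projects trivially onto the secular (double-root) mode. Making the ``exponentially decaying transient'' rigorous would require a Laplace-transform computation with an explicit residue at $\lambda=0$ and quantitative control of the remaining spectrum, which is considerably more work than the rest of the argument.

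The paper closes this gap with a shorter, elementary device that you may wish to adopt. The explicit series solution (the paper's $\Gamma(\sigma)=e^{-(\sigma+1)}r(\sigma)$) is, after the identity $r(\sigma)=m(\sigma+1)+1$, precisely the renewal function $m$ of a renewal process with i.i.d.\ Uniform$[0,1]$ inter-arrival times; this falls out of the Irwin--Hall formula for the distribution of partial sums. Since the time of the first renewal after $t$ is bounded between $t$ and $t+1$, Wald's equation immediately yields $2t< m(t)+1\le 2(t+1)$, i.e.\ $2\sigma+2< r(\sigma)\le 2\sigma+4$, which is exactly (and on the low side slightly sharper than) the bound you needed. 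That renewal-theoretic reading replaces your whole characteristic-equation step with a one-line probabilistic argument and is what makes the proof non-asymptotic and parameter-free; if you want to complete your route, the cleanest fix is to make that identification and invoke Wald rather than analyze the spectrum of the delay operator.
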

Note that \Cref{prop:alloc-tra-vs} provides a sufficient condition for $\vsr(\cdot)$ to be a best reporting function for $\mr$.
Under this condition one can readily verify that $\mr\in\mathcal{M}(\eps)$.  To see why \eqref{eq:IC} holds,
note that because $\vsr(\cdot)$ is a best reporting function we have that $\max_{w}\{v\cdot \xr(w)-\tr(w)\}
=v\cdot \xr(\vsr(v))-\tr(\vsr(v))$. This, together with the
envelope theorem (see \eqref{eq:env-report}), implies that \eqref{eq:IC} can be cast as
\begin{equation*}
v\cdot \xr(v)-\tr(v)\geq \int_{0}^v \xr(\vsr(s))ds-\epsilon,\quad \forall v\in [0,\bv].
\end{equation*}
If $v\leq \ps-\delta$, then the left hand-side above is 0; while the right hand-side is $v\xr(0)-\eps$ which, by the boundary condition of $\xr(\cdot)$, is non-negative. If $v\geq \ps-\delta$ then the inequality above is binding.
That is, $\mr$ verifies \eqref{eq:IC}. The \eqref{eq:IR} constraint can be  verified in a similar fashion. We note that the condition $(\ps-\delta,\ps+\delta+\mu)\subset[0,\bv]$ is needed to ensure that $\vsr(\cdot)$ is properly defined in $[0,\bv]$ and, from the characterization of $\delta(\mu,\eps)$, it is satisfied when $\mu$ is small. In \Cref{sec:per-charac}, we show how to properly choose $\mu$.

Therefore, the crucial step to establish \Cref{prop:feas} is to show that the sufficient condition in \Cref{prop:alloc-tra-vs} can be met. To do so, we show it is always possible to find $(\mu,\delta)$ such that $\xr((\ps+\delta)^-)=1$. Using that $\xr(\cdot)$ is non-decreasing and that it belongs to $[0,1]$, yields the desired conclusion. We note that establishing that $\xr((\ps+\delta)^-)=1$ requires to carefully choose $(\mu,\delta)$ so the solution to \eqref{eq:ode-1}-\eqref{eq:dde-1} hits 1 at $\ps+\delta$. The proof involves an elegant application of renewal theory---we provide more details in the appendix.

\subsubsection{Step 3: Performance characterization}\label{sec:per-charac}

From \Cref{prop:feas}, we have derived a feasible mechanism. Next, we analyze its performance. The following result derives an explicit form for the performance of a perturbed delayed mechanism.
\begin{proposition}[Performance Analysis]\label{prop:low-bd-obj}  Fix $\mu,\delta>0$ and $\eps>0$ such that the mechanism $\mr$ is feasible. Then the revenues garnered under $\mr$ are given by
\begin{flalign*}
\Pi(\mr)=\int_{0}^{\ps+\delta}\xr(v)\psi(v|w(\cdot))f(v)dv
+R(\vm)+\int_{\vm}^{\vm+\mu}\overline{F}(v)dv
\end{flalign*}
where
\begin{equation*}
\psi(v|w(\cdot))= v-\frac{\overline{F}(v+\mu)}{f(v)}\cdot \dot{w}(v),\quad \forall v\in [0,\ps+\delta].
\end{equation*}
\end{proposition}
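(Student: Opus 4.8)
The plan is to substitute the explicit transfer $\tr$ from \Cref{prop:alloc-tra-vs} into $\Pi(\mr)=\E_v[\tr(v)]=\int_0^{\bv}\tr(v)f(v)\,dv$ and simplify through two integration-by-parts maneuvers, using the piecewise structure of $\vsr(\cdot)$ and of its inverse $\w(\cdot)$ to match and cancel terms. Concretely, I would split the domain at $\ps-\delta$: for $v<\ps-\delta$ we have $\tr(v)=v\cdot\xr(v)$, while for $v\ge\ps-\delta$ we have $\tr(v)=v\cdot\xr(v)-\int_0^v\xr(\vsr(s))\,ds+\eps$; also $\xr\equiv 1$ on $(\ps+\delta,\bv]$. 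Hence
\[
\Pi(\mr)=\int_0^{\ps+\delta}v\,\xr(v)f(v)\,dv+\int_{\ps+\delta}^{\bv}v f(v)\,dv-\int_{\ps-\delta}^{\bv}\Bigl(\textstyle\int_0^v\xr(\vsr(s))\,ds-\eps\Bigr)f(v)\,dv .
\]

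For the third integral, set $G(v)=\int_0^v\xr(\vsr(s))\,ds$. Since $\vsr\equiv 0$ on $[0,\ps-\delta)$ and $\xr(0)=\eps/(\ps-\delta)$ by the boundary condition of \eqref{eq:ode-1}, we get $G(\ps-\delta)=\eps$, so the additive constant $\eps$ exactly matches the value of $G$ at the left endpoint. Integrating by parts against $dF=-d\overline{F}$ (valid since $G$ and $\overline{F}$ are absolutely continuous under \Cref{assumption1}, with $G'(v)=\xr(\vsr(v))$ a.e.) and using $\overline{F}(\bv)=0$, the third integral collapses to $-\int_{\ps-\delta}^{\bv}\overline{F}(v)\,\xr(\vsr(v))\,dv$.

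Next I would decompose $\int_{\ps-\delta}^{\bv}\overline{F}(v)\xr(\vsr(v))\,dv$ according to the three pieces of $\vsr$: on $[\ps-\delta,\ps+\delta+\mu]$, $\vsr(v)=v-\mu$, and the substitution $u=v-\mu$ yields $\int_{\ps-\delta-\mu}^{\ps+\delta}\overline{F}(u+\mu)\,\xr(u)\,du$; on $[\ps+\delta+\mu,\bv]$, $\vsr(v)=\ps+\delta$ with $\xr(\ps+\delta)=1$ (feasibility, cf.\ \Cref{prop:well-posed}), contributing $\int_{\ps+\delta+\mu}^{\bv}\overline{F}(v)\,dv$. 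Now recognize the ``virtual value seen through the reporting map'': since $\dot\w\equiv 0$ on $[0,\ps-\delta-\mu]$ while $\w(v)=v+\mu$, $\dot\w(v)=1$ on $[\ps-\delta-\mu,\ps+\delta]$, the pointwise identity $v f(v)-\overline{F}(v+\mu)\dot\w(v)=\psi(v\mid\w(\cdot))f(v)$ holds on $[0,\ps+\delta]$, and the correction term $\overline{F}(v+\mu)\dot\w(v)$ is supported exactly on $[\ps-\delta-\mu,\ps+\delta]$. Therefore the change-of-variables integral is absorbed: $\int_0^{\ps+\delta}v\,\xr(v)f(v)\,dv-\int_{\ps-\delta-\mu}^{\ps+\delta}\overline{F}(u+\mu)\xr(u)\,du=\int_0^{\ps+\delta}\xr(v)\,\psi(v\mid\w(\cdot))f(v)\,dv$. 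Collecting terms leaves $\Pi(\mr)=\int_0^{\ps+\delta}\xr(v)\psi(v\mid\w(\cdot))f(v)\,dv+\int_{\ps+\delta}^{\bv}vf(v)\,dv-\int_{\ps+\delta+\mu}^{\bv}\overline{F}(v)\,dv$. A final routine integration by parts gives $\int_{\ps+\delta}^{\bv}vf(v)\,dv=(\ps+\delta)\overline{F}(\ps+\delta)+\int_{\ps+\delta}^{\bv}\overline{F}(v)\,dv=R(\vm)+\int_{\vm}^{\bv}\overline{F}(v)\,dv$, and subtracting $\int_{\vm+\mu}^{\bv}\overline{F}$ produces $R(\vm)+\int_{\vm}^{\vm+\mu}\overline{F}(v)\,dv$, which is the claimed formula.

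The main obstacle is not any single computation but the bookkeeping: $\vsr$, $\w$, and $\xr$ are all piecewise, and the breakpoints $\ps-\delta-\mu<\ps-\delta<\ps+\delta<\ps+\delta+\mu$ interleave, so the two cancellations — the additive $\eps$ against $G(\ps-\delta)$, and the shifted integral against the $v-\psi$ correction term — must be arranged so that the residual pieces recombine cleanly into $R(\vm)$ plus a single $\overline{F}$-integral. One must also verify that the integration-by-parts steps are legitimate, which is precisely where \Cref{assumption1} (a density, hence $\overline{F}$ absolutely continuous) and the boundedness and monotonicity of $\xr$ from \Cref{prop:well-posed} enter. Conceptually, the result is a distorted Myerson identity — revenue equals expected ``virtual surplus'' $\int\xr\,\psi(\,\cdot\mid\w)f$ over the randomizing region plus the posted-price revenue $R(\vm)$ and a small $\mathcal{O}(\mu)$ gain term — and the proof just makes this accounting precise.
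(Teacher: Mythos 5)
Your proof is correct and takes essentially the same route as the paper's: both substitute the explicit transfer, cancel the additive $\eps$ against the accumulated allocation $\int_0^{\vz}\xr(\vsr(s))\,ds$ (you do this by integration by parts on $G(v)-G(\vz)$, the paper by a Fubini swap of the double integral --- equivalent manipulations), change variables using the piecewise form of $\vsr$ and the support of $\dot{\w}$ to absorb the shifted $\overline{F}(\cdot+\mu)$ factor into $\psi(\cdot\mid\w(\cdot))$, and finish with the routine integration by parts that produces $R(\vm)$. Your bookkeeping matches the paper's chain $(a)$--$(c)$ step for step, and the observation that $G(\vz)=\eps$ via the boundary condition $\xr(0)=\eps/(\ps-\delta)$ is exactly what makes the paper's $\eps\overline{F}(\vz)-\vz\,\xr(0)\,\overline{F}(\vz)$ cancellation work.
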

The result above provides an expression for the revenues that highlights the different forces at play. A particular term is notable in the objective above, the first integral. This integral resembles the classical objective reformulation  for the (IC$_0$) problem (see \Cref{sec:formulation}, \cref{eq:probstan}).
Na\"ively, one may be tempted to try to optimize point-wise (as in a classical argument), but such an approach fails here as $\xr(v)$ depends on the entire profile of the best reporting strategy $\vsr(\cdot)$ and its inverse. The latter, $\w(\cdot)$, comes into play in the
term  $\psi(v|w(\cdot))$ that, due to its resemblance to the standard virtual value,  we coin ``\textit{delayed virtual value}.''
This further highlights the difference between the class of $\eps$-IC problems and the standard IC ones. Even when values are regular, the presence of $\eps$-IC demands a global optimization approach, while in the IC setting local optimization is enough to characterize an optimal mechanism.

Next, we analyze and lower bound each term in $\Pi(\mr)$.
Consider the second and third terms in  $\Pi(\mr)$.
Because distributions are monotone, we have
\begin{flalign*}
R(\vm)+\int_{\vm}^{\vm+\mu}\overline{F}(v)dv&\ge R(\ps+\delta)+
\overline{F}(\vm+\mu)\cdot \mu \\
&\geq \rs-\kappa_U\delta^\alpha + \bar{F}(\ps+\delta+\mu)\mu := (A),
\end{flalign*}
 where the last inequality follows from a Taylor expansion of the revenue function and from \Cref{def1}, which implies that the derivative of $R(\cdot)$ is bounded below by
 $-\kappa_U \alpha(v-\ps)^{\alpha-1}$ for $v\geq \ps$.
 Using (A) and the bounds for $\delta$ in \Cref{prop:feas} we choose $\mu$ that enables us to achieve the right scaling.
If we choose $\mu=\mu(\eps)=K\cdot \eps^{\alpha/(2\alpha-1)}$ for some positive constant $K>0$, we have that
$ \delta(\mu(\eps),\eps)\leq \mu(\eps)^2/(4\eps) \leq  K^2 \eps^{1/(2\alpha-1)/4}$.
 This delivers,
\begin{equation*}
(A)\geq \Pi^\star(\mathcal{M}(0)) +\left(\bar{F}(\ps+\delta+\mu)\cdot K -\kappa_U \left(\frac{K^2}{4}\right)^\alpha \right)\cdot \eps^{\alpha/(2\alpha-1)},
\end{equation*}
 In the equation above $K$ needs to be taken small enough so that $
\bar{F}(\ps+\delta+\mu) >\kappa_U\left(\frac{1}{4}\right)^\alpha K^{2\alpha-1}
>0$ (which is possible because $\alpha>1/2$, $\delta(\eps)+\mu(\eps)\downarrow 0$ and $\overline{F}(\ps)>0$ ).

This argument shows that the last two terms in \Cref{prop:low-bd-obj} already deliver the desired $\eps^{\alpha/(2\alpha-1)}$ rate.
We highlight that this choice of $\mu(\eps)$ as in \Cref{prop:feas}  is optimal. That is, $\Omega(\eps^{\alpha/(2\alpha-1)})$ is the best guarantee that can be achieved in the class of perturbed delayed mechanisms.
We still need to show, however, that the first term in \Cref{prop:low-bd-obj} does not negatively affect the order of magnitude of the performance gains. We provide a complete, and more formal analysis of $\Pi(\mr)$ in the proof of \Cref{thm:low-bd-per}.

We note that for the family of perturbed delayed  mechanisms---which generalizes optimal deterministic ones---allocations not only exhibit randomization, but they also randomize  even the smallest type, $\xr(0)>0$. A sharp departure from the standard IC mechanism design literature.

\paragraph{Discussion.} Interestingly, the second and third terms of $\Pi(\mr)$ which, as we have seen, provide the $\eps^{\alpha/(2\alpha-1)}$ gains in revenue, coincide with the expected revenue of a hard/soft floor mechanism with hard floor $\ps+\delta$ and soft floor $\ps+\delta+\mu$. Because the revenue function $R(\cdot)$ is smooth around its interior optimizer $\ps$, shifting the hard floor by an amount $\delta$ from the optimal price $\ps$ leads to a loss of order $-\delta^\alpha \approx -\eps^{\alpha/(2\alpha-1)}$ (see \Cref{eq:def1-intuitive}). The seller's gains are introduced by offering a region of size $\mu$ in which the hard/soft floor mechanism charges the bid. Note that in the deterministic case, this region can have length at most $\eps$, because otherwise \eqref{eq:IC} would be violated. Interestingly, randomization allows one to offer a region of larger size $\mu \approx \eps^{\alpha/(2\alpha-1)}$. By carefully choosing the parameters, we can guarantee that the benefits of the region in which bids are paid outweigh the loss of picking a suboptimal hard floor. These terms combined lead to the $\eps^{\alpha/(2\alpha-1)}$ revenue gains.

\section{Conclusion}

 In the present paper, we have initiated the study of the \textit{optimization} of mechanisms under $\eps$-IC constraints. We have highlighted the rich structure underlying this class of optimization problems and  characterized the gains in performance that are achievable. These are, to the best of our knowledge, the first results of this nature in the literature.

This paper opens up various avenues for future research. %
 First,  characterizing the structure of an optimal mechanism in this setting is an open question. While the results in the paper establish the need for randomization and construct mechanisms  with supralinear gains,  various arguments developed are fairly general, yield bounds for arbitrary value of $\eps$,  and may prove useful in characterizing an optimal mechanism in the future.  Another important avenue of research pertains to studying settings with an arbitrary number of buyers and analyzing how the gains grow as a function of this number.

Another possible implication of our analysis is that the classical rounding argument discussed in Section~\ref{sec:Nisan} might not be tight for all mechanisms. This argument is used extensively in the literature on black-box reductions for mechanism design as it provides an approach to construct a exactly IC mechanism from an $\eps$-IC mechanisms while losing an optimal revenue of $\eps^{1/2}$. Because the revenue gains of our near-optimal $\eps$-IC mechanism are $\eps^{\alpha/(2\alpha-1)}$, this suggests that the rounding argument might introduce revenue losses that are higher than necessary for some mechanisms and some distributions. We hope that the ideas developed in this paper also open the door to developing new techniques to construct exactly IC mechanisms from $\eps$-IC ones.

While we have focused on a static setting, the present analysis also opens possibilities for the design and optimization of dynamic mechanisms  under (dynamic) $\eps$-IC constraints. In such settings, an IC mechanism may often be intractable, however, under $\eps$-IC constraints, many potentially ``simple'' mechanisms become feasible. The present analysis may suggest local  perturbations to mechanisms to improve performance while maintaining $\eps$-IC. Additionally, the present paper suggests a refined metric to assess near-optimality of mechanisms when relaxing to $\eps$-IC constraints. The performance of an $\eps$-IC mechanism should not be judged just by whether it approaches the performance of an optimal IC mechanism (first order optimality), but also by the order of magnitude of the gap between the performance of the $\eps$-IC mechanism and an optimal IC mechanism (second order optimality).

\clearpage
\newpage
\begin{center}
\LARGE \textbf{Online Appendix: Mechanism Design under Approximate Incentive Compatibility}
\end{center}

\begin{APPENDICES}

\section{Proofs for  Section \ref{sec:upperboundper}}

\begin{proof}{\underline{\bfseries\sffamily Proof of \Cref{prop:smoothing}}}The following argument follows from a classic rounding argument attributed to Nisan.
Let $m=(\all,\tra)$ be an $\eps$-IC mechanism. Define an alternative mechanism $\tilde{m}$ as follows. For $\delta\in (0,1)$, we let
\begin{equation*}
v^\star_\delta(v)\in \argmax_{w\in \mathcal{S}}\Big\{v\cdot \all(w)-(1-\delta)\tra(w)\Big\}.
\end{equation*}
Then we define the mechanism $\tilde{m}\triangleq(\tilde{\all},\tilde{\tra})=(\all(v^\star_\delta(\cdot)),(1-\delta)\tra(v^\star_\delta(\cdot)))$. Therefore,  $\tilde{m}$ is a $0$-IC mechanism and for $v'=v^\star_\delta(v)$
we have that
\begin{flalign*}
v\cdot\all(v)-\tra(v)&\ge v\cdot \all(v')-\tra(v')-\eps,\\
 v\cdot \all(v')-(1-\delta)\tra(v')&\ge v\cdot \all(v)-(1-\delta)\tra(v),
\end{flalign*}
where the first inequality holds because $(x,t)$ is $\eps$-IC, and the second comes from the definition of $v^\star_\delta(v)$.
Combining the inequalities above we obtain
\begin{equation*}
    \tra(v')\ge \tra(v)-\frac{\eps}{\delta}.
\end{equation*}
Now since $\tilde{m}$ is $0$-IC, we have
\begin{equation*}
    \Pi^\star(\mathcal{M}(0)) \ge \Pi(\tilde{m})=
    \E_v[\tilde{\tra}(v)]=(1-\delta)\E_v[\tra(v^\star_\delta(v))]\ge
    (1-\delta)\left(\E_v[\tra(v)]-\frac{\eps}{\delta}\right).
\end{equation*}
The latter can be rearranged to yield the following
\begin{equation*}
    \Pi(m)-\Pi^\star(\mathcal{M}(0))\leq \frac{\delta}{1-\delta}\Pi^\star(\mathcal{M}(0))
    +\frac{\eps}{\delta}\leq 2\delta\Pi^\star(\mathcal{M}(0))+\frac{\eps}{\delta},
\end{equation*}
where in the last inequality we considered $\delta\leq 1/2$. Choosing $\delta=\sqrt{\eps}$ delivers the result.
\end{proof}

\begin{proof}{\underline{\bfseries\sffamily Proof of \Cref{lem:weak-dual-upper}}}

Given the definitions preceding the lemma, the Lagrangian is:

\begin{flalign*}
L
&=\int_{0}^{\bv}t(v)f(v)dv +\int_{0}^{\mu}\Lir(v)(vx(v)-t(v))dv
+\int_{\mu}^{\bv}\Lic(v)\Big(vx(v)-t(v)-vx(\vsr(v))+t(\vsr(v))+\eps\Big)dv\\
&=\eps \int_{\mu}^{\bv}\Lic(v)dv
+\int_{0}^{\bv} t(v)
\Big(f(v)-\Lir(v)\1{v\in[0,\mu]}
-\Lic(v)\1{v\in[\mu,\bv]}+\Lic(\w(v))\dot{\w}(v)\1{v\in(0,\nu_0)}
\Big)dv\\
&+\int_{0}^{\bv}x(v)\Big(
\Lir(v)v\1{v\in[0,\mu]}
+\Lic(v)v\1{v\in[\mu,\bv]}-\Lic(\w(v))\w(v)\dot{\w}(v)\1{v\in(0,\nu_0)}
\Big)dv,
\end{flalign*}
where we have used the change of variables $u=\vsr(v)$, the fact that $\w$ is the inverse of $\vsr$, that $\vsr$ is continuous and increasing (hence almost everywhere differentiable).
Since $t(v)$ is a free variable, we must have that
\begin{equation*}
f(v)-\Lir(v)\1{v\in[0,\mu]}
-\Lic(v)\1{v\in[\mu,\bv]}+\Lic(w(v))\dot{w}(v)\1{v\in(0,\nu_0)}=0
,\quad \text{a.e on } [0,\bv].
\end{equation*}
From this, we can deduce the following
\begin{flalign*}
\Lic(v) &= f(v),\quad \text{a.e on } [\nu_0,\bv] \\
\Lic(v) &= f(v) + \dot{\w}(v)\Lic(\w(v)),\quad \text{a.e on } [\mu,\nu_0] \\
\Lir(v) &= f(v) + \dot{\w}(v)\Lic(\w(v)),\quad \text{a.e on } [0,\mu]
\end{flalign*}
Note that since $\w$ is increasing then it is differentiable with positive derivative almost everywhere. As a consequence, $\Lic$ and $\Lir$ are non-negative and are well defined almost everywhere in their domains.

Going back to the Lagrangian, we find that
 \begin{equation*}
 L = \varepsilon \int_{\mu}^{\bv}\Lic(v)dv +
 \int_{0}^{\bv} \left(vf(v)-\dot{w}(v)(w(v)-v)\Lic(w(v))\1{v\in[0,\nu_0]}\right)^+dv.
 \end{equation*}
Since we have relaxed the constraints outside the path $(v, \vsr(v))$, only considered the IR constraints for $v\leq \mu$, and  the dual variables are non-negative, the above provides an upper bound for the primal problem. This concludes the proof.
\end{proof}

\begin{proof}{\underline{\bfseries\sffamily Proof of \Cref{lem:charac_lambda_ic}}}
It is easy to see that this is  true for $k=0$. Consider $v\in [\nu_{k},\nu_{k-1}]$ for some arbitrary $k\geq 1$, in this case,  $w(v) \in[\nu_{k-1},\nu_{k-2}]$. Hence, by induction, we have (a.e)
\begin{flalign*}
\lambda(v) &=  f(v) + \lambda(w(v))\dot{w}(v)\\
&= f(v) +\frac{d}{dx}\sum_{j=0}^{k-1} F(w^j(x))\Big|_{x=w(v)}\\
&= f(v) +\sum_{j=0}^{k-1} f(w^j(w(v)))\dot{(w^j)}(w(v))\dot{w}(v)\\
&= f(v) +\frac{d}{dv}\sum_{j=0}^{k-1} F(w^{j+1}(v))\\
&= \frac{d}{dv}\sum_{j=0}^{k} F(w^{j}(v)).
\end{flalign*}
\end{proof}

\begin{proof}{\underline{\bfseries\sffamily Proof of \Cref{lem:steps}}}
\begin{flalign*}
\int_{\mu}^{\bv} \lambda(v)dv&= \int_{\mu}^{\nu_0}\lambda(v)dv+\int_{\nu_0}^{\bv}\lambda(v)dv\\
&= \int_{\mu}^{\nu_0}\lambda(v)dv+\int_{\nu_0}^{\bv}f(v)dv\\
&= \sum_{k=1}^{K-1} \int_{\nu_k }^{\nu_{k-1}}\lambda(v)dv +\int_{\mu}^{\nu_{K-1}}\lambda(v)dv +\int_{\nu_0}^{\bv}f(v)dv\\
&= \sum_{k=1}^{K-1} \int_{\nu_k}^{\nu_{k-1}}\left(
\frac{d}{dv}\sum_{j=0}^{k} F(w^{j}(v))
\right)dv
+\int_{\mu}^{\nu_{K-1}}\left(
\frac{d}{dv}\sum_{j=0}^{K} F(w^{j}(v))
\right)dv
+\int_{\nu_0}^{\bv}f(v)dv\\
&= \sum_{k=1}^{K-1} \left(
\sum_{j=0}^{k} F(w^{j}(v))\right)\Big |_{\nu_k}^{\nu_{k-1}}
+
\left(
\sum_{j=0}^{K} F(w^{j}(v))
\right)\Big|_{\mu}^{\nu_{K-1}}
 +\int_{\nu_0}^{\bv}f(v)dv\\
 &= \sum_{k=1}^{K-1} \left(
\sum_{j=0}^{k} F(w^{j}(\nu_{k-1}))- F(w^{j}(\nu_k))\right)
+
\left(
\sum_{j=0}^{K} F(w^{j}(\nu_{K-1}))-F(w^{j}(\mu))
\right)
+\int_{\nu_0}^{\bv}f(v)dv\\
  &= \sum_{k=1}^{K-1} \left(
\sum_{j=0}^{k} (F(\nu_{k-j-1})-F(\nu_{k-j}))\right)
+
\left(
\sum_{j=0}^{K} (F(\nu_{K-1-j})-F(w^{j}(\mu)))
\right)
+\int_{\nu_0}^{\bv}f(v)dv\\
&\leq  \sum_{k=1}^{K-1} (1-F(\nu_k))
+
\left(
\sum_{j=0}^{K} (F(\nu_{K-1-j})-F(\nu_{K-j}))
\right)
+\int_{\nu_0}^{\bv}f(v)dv\\
 &=  \sum_{k=1}^{K-1} (1-F(\nu_k))
+(1-F(\nu_{K}))+\int_{\nu_0}^{\bv}f(v)dv\\
&\leq K+1,
\end{flalign*}
where in the above we used the property that $\w^j(\nu_k)=\nu_{k-j}$ and \Cref{lem:charac_lambda_ic}. In the first inequality, we used that $\w^j(\mu)\geq \nu_{K-j}$ which is true because $\mu\ge \nu_K$.
In the last inequality, we used that
$\int_{\nu_0}^{\bv}f(v)dv\leq 1$. This concludes the proof.
\end{proof}

\begin{proof}{\underline{\bfseries\sffamily Proof of \Cref{lem:bound_rev}}}
We separate the computation of the integral $\int_{x}^{\bv} \Delta(v) dv$ into three pieces. We do the computation for each piece for an arbitrary interval, $[a,b]$, first. We then combine all the pieces and intervals.

We will make the following definitions.
Since $\lambda(v)$ is defined differently
in different interval of the form $[\nu_k,\nu_{k-1}]$ (see \Cref{lem:charac_lambda_ic}), we will consider first  $[a,b] \subset[\nu_k,\nu_{k-1}]$ for some $k\geq 1$.
We also define
\begin{equation*}
H_{k}(v) \triangleq\sum_{j=0}^k F(w^j(v)), \quad \text{for } v\in [\nu_k,\nu_{k-1}].
\end{equation*}
Note that from \Cref{lem:charac_lambda_ic} for $v\in [\nu_k,\nu_{k-1}]$, we have that $\lambda(v)=\dot{H}_k(v)$ a.e.

\underline{First piece.}
\begin{equation*}
A_1 = \int_a^b v f(v)dv = vF(v)\Big|_a^b-\int_{a}^bF(v)dv.
\end{equation*}

\underline{Second piece.}
\begin{flalign*}
A_2 &= \int_a^b \w(v)\dot{\w}(v)\lambda(\w(v))dv \\
&= \int_{\w(a)}^{\w(b)}v \lambda(v)dv\\
&= \int_{\w(a)}^{\w(b)}v \dot{H}_{k-1}(v)dv\\
&= v H_{k-1}(v)\Big|_{\w(a)}^{\w(b)} - \int_{\w(a)}^{\w(b)}H_{k-1}(v)dv\\
&= \w(b) H_{k-1}(\w(b))-\w(a) H_{k-1}(\w(a)) - \int_{\w(a)}^{\w(b)}H_{k-1}(v)dv,
\end{flalign*}
where in the second equality we used the change of variables $u=w(v)$. In the third equality, we used that
because $[a,b] \subset[\nu_k,\nu_{k-1}]$ we must have
$[\w(a),\w(b)] \subset[\nu_{k-1},\nu_{k-2}]$
and, therefore, $\lambda(v)=\dot{H}_{k-1}(v)$ a.e.

\underline{Third piece.}
\begin{flalign*}
A_3 &= \int_a^b v\dot{\w}(v)\lambda(\w(v))dv \\
&= \int_a^b v\dot{\w}(v)\dot{H}_{k-1}(\w(v))dv \\
&= v H_{k-1}(\w(v))\Big|_{a}^{b} - \int_{a}^{b}H_{k-1}(\w(v))dv\\
&= b H_{k-1}(\w(b))-a H_{k-1}(\w(a)) - \int_{a}^{b}H_{k-1}(\w(v))dv
\end{flalign*}

Using the expressions above we can compute the integrals in $[x,\nu_0]$. We will need the following properties
\begin{equation}\label{eq:up_prop1}
H_{k}(\nu_{k-1})= \sum_{j=0}^k F(w^j(\nu_{k-1}))=\sum_{j=0}^k F(\nu_{k-1-j})=\sum_{j=0}^k F(\nu_{j-1}),
\end{equation}
and
\begin{equation}\label{eq:up_prop2}
H_{k}(\nu_{k}) = \sum_{j=0}^k F(w^j(\nu_{k}))=\sum_{j=0}^k F(\nu_{k-j})=\sum_{j=0}^{k+1} F(\nu_{j-1}) - 1,
\end{equation}
and
\begin{equation}\label{eq:up_prop3}
H_{k}(w(v)) = \sum_{j=0}^k F(w^j(w(v)))=\sum_{j=0}^k F(w^{j+1}(v))
=\sum_{j=1}^{k+1} F(w^{j}(v))= H_{k+1}(v)-F(v)
\end{equation}

 Next we exploit the properties above to compute the three integrals in $[x,\nu_0]$. Let $k$ be such that $\nu_k\leq x\leq \nu_{k-1}$. For the second piece we have
\begin{flalign*}
\int_{x}^{\nu_0} \w(v)\dot{\w}(v)\lambda(\w(v))dv &=\sum_{j=1}^{k-1}\int_{\nu_j}^{\nu_{j-1}}\w(v)\dot{\w}(v)\lambda(\w(v))dv +\int_{x}^{\nu_{k-1}}\w(v)\dot{\w}(v)\lambda(\w(v))dv\\
&=\sum_{j=1}^{k-1}\w(\nu_{j-1}) H_{j-1}(\w(\nu_{j-1}))-\w(\nu_{j}) H_{j-1}(\w(\nu_{j})) - \int_{\w(\nu_{j})}^{\w(\nu_{j-1})}H_{j-1}(v)dv\\
&+\w(\nu_{k-1}) H_{k-1}(\w(\nu_{k-1}))-\w(x) H_{k-1}(\w(x)) - \int_{\w(x)}^{\w(\nu_{k-1})}H_{k-1}(v)dv\\
&=\sum_{j=1}^{k-1}\nu_{j-2} H_{j-1}(\nu_{j-2})-\nu_{j-1} H_{j-1}(\nu_{j-1}) - \int_{\nu_{j-1}}^{\nu_{j-2}}H_{j-1}(v)dv\\
&+\nu_{k-2} H_{k-1}(\nu_{k-2})-\w(x) H_{k-1}(\w(x)) - \int_{\w(x)}^{\nu_{k-2}}H_{k-1}(v)dv
\end{flalign*}
For the third piece we have
\begin{flalign*}
\int_{x}^{\nu_0} v\dot{\w}(v)\lambda(\w(v))dv &=\sum_{j=1}^{k-1}\int_{\nu_j}^{\nu_{j-1}}v\dot{\w}(v)\lambda(\w(v))dv +\int_{x}^{\nu_{k-1}}v\dot{\w}(v)\lambda(\w(v))dv\\
&=\sum_{j=1}^{k-1} \nu_{j-1} H_{j-1}(\nu_{j-2})-\nu_{j} H_{j-1}(\nu_{j-1}) - \int_{\nu_{j}}^{\nu_{j-1}}H_{j-1}(\w(v))dv\\
&+ \nu_{k-1} H_{k-1}(\nu_{k-2})-x H_{k-1}(\w(x)) - \int_{x}^{\nu_{k-1}}H_{k-1}(\w(v))dv
\end{flalign*}

Combining the second and third pieces  and using properties \eqref{eq:up_prop1}, \eqref{eq:up_prop2} and \eqref{eq:up_prop3}
yields
\begin{flalign*}
\int_{x}^{\nu_0} (\w(v)-v)\dot{\w}(v)\lambda(\w(v))dv &=
\sum_{j=1}^{k-1}\left(\nu_{j-2} -\nu_{j-1} \right)H_{j-1}(\nu_{j-2})
-\left(\nu_{j-1}  -\nu_{j}  \right)H_{j-1}(\nu_{j-1})\\
& -\left( \int_{\nu_{j-1}}^{\nu_{j-2}}H_{j-1}(v)dv-
\int_{\nu_{j}}^{\nu_{j-1}}H_{j-1}(\w(v))dv
\right)\\
&+(\nu_{k-2}-\nu_{k-1}) H_{k-1}(\nu_{k-2})-(\w(x)-x) H_{k-1}(\w(x))\\
& - \left(\int_{\w(x)}^{\nu_{k-2}}H_{k-1}(v)dv-\int_{x}^{\nu_{k-1}}H_{k-1}(\w(v))dv\right)\\
&=
\sum_{j=1}^{k-1}\left(\nu_{j-2} -\nu_{j-1} \right)\sum_{i=0}^{j-1} F(\nu_{i-1})
-\left(\nu_{j-1}  -\nu_{j}  \right)\left(
\sum_{i=0}^{j} F(\nu_{i-1}) - 1
\right)\\
& -\left( \int_{\nu_{j-1}}^{\nu_{j-2}}H_{j-1}(v)dv - \int_{\nu_{j}}^{\nu_{j-1}}\left(H_{j}(v)-F(v)\right)dv
\right)\\
&+(\nu_{k-2}-\nu_{k-1}) H_{k-1}(\nu_{k-2})-(\w(x)-x) H_{k-1}(\w(x))\\
& - \left(\int_{\w(x)}^{\nu_{k-2}}H_{k-1}(v)dv-\int_{x}^{\nu_{k-1}}\left(H_{k}(v)-F(v)\right)dv\right).
\end{flalign*}
Note that
\begin{flalign*}
\sum_{j=1}^{k-1}\left(\nu_{j-2} -\nu_{j-1} \right)\sum_{i=0}^{j-1} F(\nu_{i-1})
-\left(\nu_{j-1}  -\nu_{j}  \right)\left(
\sum_{i=0}^{j} F(\nu_{i-1}) - 1
\right)&=-(\nu_{k-1}-\nu_0) + \left(\bv  -\nu_{0}  \right)\\
&- \left(\nu_{k-2}  -\nu_{k-1}  \right)
\sum_{i=0}^{k-1} F(\nu_{i-1}) \\
&=(\bv-\nu_{k-1}) -\left(\nu_{k-2}  -\nu_{k-1}  \right)H_{k-1}(\nu_{k-2}).
\end{flalign*}
We also have that
\begin{flalign*}
\sum_{j=1}^{k-1}\left( \int_{\nu_{j-1}}^{\nu_{j-2}}H_{j-1}(v)dv - \int_{\nu_{j}}^{\nu_{j-1}}\left(H_{j}(v)-F(v)\right)dv
\right) &= \int_{\nu_0}^{\bv}H_{0}(v)dv-\int_{\nu_{k-1}}^{\nu_{k-2}}H_{k-1}(v)dv + \int_{\nu_{k-1}}^{\nu_0}F(v)dv\\
&= \int_{\nu_{k-1}}^{\bv}F(v)dv-\int_{\nu_{k-1}}^{\nu_{k-2}}H_{k-1}(v)dv .
\end{flalign*}
In turn, we have that
\begin{flalign*}
\int_{x}^{\nu_0} (\w(v)-v)\dot{\w}(v)\lambda(\w(v))dv &=(\bv-\nu_{k-1}) -\left(\nu_{k-2}  -\nu_{k-1}  \right)H_{k-1}(\nu_{k-2})
-\int_{\nu_{k-1}}^{\bv}F(v)dv+\int_{\nu_{k-1}}^{\nu_{k-2}}H_{k-1}(v)dv \\
&+(\nu_{k-2}-\nu_{k-1}) H_{k-1}(\nu_{k-2})-(\w(x)-x) H_{k-1}(\w(x))\\
& - \left(\int_{\w(x)}^{\nu_{k-2}}H_{k-1}(v)dv-\int_{x}^{\nu_{k-1}}\left(H_{k}(v)-F(v)\right)dv\right)\\
&=(\bv-\nu_{k-1})-\int_{x}^{\bv}F(v)dv+\int_{x}^{w(x)}H_{k-1}(v)dv -(\w(x)-x) H_{k-1}(\w(x))\\
&+\int_{x}^{\nu_{k-1}}H_{k}(v)dv.
\end{flalign*}
Therefore,
\begin{flalign*}
\int_{x}^{\nu_0} \Delta(v) dv+ \int_{\nu_0}^{\bv} vf(v)dv&=  \nu_{k-1}-xF(x)
-\int_{x}^{w(x)}H_{k-1}(v)dv +(\w(x)-x) H_{k-1}(\w(x))-\int_{x}^{\nu_{k-1}}H_{k}(v)dv\\
&=  (\nu_{k-1}-x)+ x\bar{F}(x)
-\int_{x}^{w(x)}H_{k-1}(v)dv +(\w(x)-x) H_{k-1}(\w(x))\\
&-\int_{x}^{\nu_{k-1}}H_{k}(v)dv.
\end{flalign*}
Now, we have that $H_{k-1}(w(x))=H_{k}(x)-F(x)=H_{k-1}(x) + F(w^k(x))-F(x)$. Therefore,
\begin{flalign*}
(\w(x)-x) H_{k-1}(\w(x))&=(\w(x)-x) H_{k-1}(x) +(\w(x)-x)(F(w^k(x))-F(x))\\
&\leq \int_{x}^{w(x)}H_{k-1}(v)dv +(\w(x)-x)(F(w^k(x))-F(x)),
\end{flalign*}
where the inequality follows from $H_{k-1}(v)$ being non-decreasing (composition of non-decreasing functions). So finally, we deduce
\begin{flalign*}
\int_{x}^{\nu_0} \Delta(v) dv+ \int_{\nu_0}^{\bv} vf(v)dv&\leq   (\nu_{k-1}-x)+ x\bar{F}(x)
+(\w(x)-x)(F(w^k(x))-F(x))\\
&\leq   (w(x)-x)+ x\bar{F}(x)
+(\w(x)-x)\\
&=x\bar{F}(x)+2(\w(x)-x),
\end{flalign*}
where in the second inequality we used that $\nu_{k-1}\leq \w(x)$ and that $(F(w^k(x))-F(x))\leq 1$.
\end{proof}

\begin{proof}{\underline{\bfseries\sffamily Proof of \Cref{prop:numberOfSteps}}}
First, let us consider counting the number of steps in $[\ps,\nu_0]$. For $k$ such that $\nu_k\geq \ps$ we have
$$
\nu_k = \frac{1}{m^k}\left\{
\nu_0 +\ps(m^k-1) -\eps^{1-\beta} \frac{m^k-1}{m-1}\right\}.
$$
The number of steps in $v\geq \ps$ equals $k$ where $k$ is the largest $k$ such that $\nu_k\geq \ps$. We solve for $k$ such that
$$
\frac{1}{m^{k}}\left\{
\nu_0 +\ps(m^{k}-1) -\eps^{1-\beta} \frac{m^{k}-1}{m-1}\right\} \le \ps+\eps^{1-\beta}\le \frac{1}{m^{k-1}}\left\{
\nu_0 +\ps(m^{k-1}-1) -\eps^{1-\beta} \frac{m^{k-1}-1}{m-1}\right\}.
$$
This is equivalent to
$$
\left\{\frac{(\nu_0-\ps)(m-1)}{\eps^{1-\beta}}+1\right\}\frac{1}{m} \le m^{k}\le \frac{(\nu_0-\ps)(m-1)}{\eps^{1-\beta}}+1.
$$
Note that
\begin{flalign*}
\frac{(\nu_0-\ps)(m-1)}{\eps^{1-\beta} }
&= \frac{(\nu_0-\ps)}{\eps^{1-\beta} } \left( \frac{\bv-\ps-\eps^{1-\beta}}{\nu_0-\ps}-1\right)\\
&= \frac{(\nu_0-\ps)}{\eps^{1-\beta} } \left( \frac{\bv-\ps-\eps^{1-\beta}}{\nu_0-\ps}-\frac{\nu_0-\ps}{\nu_0-\ps}\right)\\
&= \frac{(\nu_0-\ps)}{\eps^{1-\beta} } \left( \frac{\bv-\ps-\eps^{1-\beta} -\nu_0+\ps}{\nu_0-\ps}\right)\\
&= \frac{1}{\eps^{1-\beta} } \left( \bv-\eps^{1-\beta} -\nu_0\right)\\
&= \frac{1}{\eps^{1-\beta} } \left( \bv -\nu_0\right) -1\\
&=\eps^{\beta-({1-\beta})} -1.
\end{flalign*}
In turn, we have
\begin{equation}\label{eq:rev-step-1}
\frac{\eps^{2\beta-1}}{m} \le m^{k}\le \eps^{2\beta-1},
\end{equation}
and, thus,
$$
\frac{\log\left(\frac{\eps^{2\beta-1}}{m}\right)}{\log(m)} \le k\le \frac{\log\left(\eps^{2\beta-1}\right)}{\log(m)},
$$
Note that,
$$
\log(m)= \log\left(m-1+1\right) = \log\left(\frac{\bv-\eps^{1-\beta} -\nu_0}{\nu_0-\ps}+1\right)
= \log\left(\frac{\eps^\beta-\eps^{1-\beta}}{\bv -\eps^\beta -\ps}+1\right) = \mathcal{O}(\eps^\beta),
$$
where in the last equality we used that $\beta< 1/2$. This shows that the number of steps above $\ps$ is of order
$\Theta\left(\log(\eps^{2\beta-1})/\eps^\beta\right)$.

Next, we bound the number of steps below $\ps$. Let $k^\star$ be the largest $k$ such that $\nu_k\geq \ps$. We can obtain the number of steps $j+1$ below $\ps$ by finding the largest $j$ such that $\nu_{k^\star+j}\geq \mu$. First note that
$$
\nu_{k^\star+j} =   \frac{1}{(2-m)^j}\left\{
\nu_{k^\star} +\ps((2-m)^j-1) -\eps^{1-\beta} \frac{(2-m)^j-1}{2-m-1}\right\}.
$$
We use the above to solve for the largest $j$ such that $\nu_{k^\star+j} \geq \mu$. Recall that
$$w(0)=\mu = -\ps(2-m) +\ps +\eps^{1-\beta}
= \ps(m-1) +\eps^{1-\beta}
$$
We have
$$
 \frac{1}{(2-m)^j}\left\{
\nu_{k^\star} +\ps((2-m)^j-1) -\eps^{1-\beta} \frac{(2-m)^j-1}{2-m-1}\right\}\geq \ps(m-1) +\eps^{1-\beta}
$$
which is equivalent to
$$
\eps^{1-\beta} -\ps(1-m) + \nu_{k^\star}(1-m)\leq (2-m)^{j+1}\left( \ps (m-1) +\eps^{1-\beta}
\right),
$$
hence  we have
$$
\log\left(1-\frac{ \nu_{k^\star}(m-1)}
{\ps(m-1)+\eps^{1-\beta}}
\right)\le (j+1) \log(2-m)
$$
or equivalently
$$
  (j+1)\le \frac{\log\left(1-\frac{ \nu_{k^\star}(m-1)}
{\ps(m-1)+\eps^{1-\beta}}
\right)}{\log(2-m)}
$$
We note that $\log(2-m)<0$ and
$$
\log(2-m) =  \log(1-(m-1))=\log\left(1- \frac{\eps^\beta-\eps^{1-\beta}}{\bv -\eps^\beta -\ps}\right)=\mathcal{O} (-\eps^{\beta}).
$$
Also note that by the definition of $\nu_{k^\star}$ we have
$$
\frac{\nu_{k^\star}-\ps -\eps^{1-\beta}}{m}+\ps \le \ps +\eps^{1-\beta}
\Leftrightarrow
\nu_{k^\star}  \le \ps + (m+1)\eps^{1-\beta}.
$$
So
$$
1-\frac{ \nu_{k^\star}(m-1)}
{\ps(m-1)+\eps^{1-\beta}} \geq
1-\frac{ (\ps + (m+1)\eps^{1-\beta})(m-1)}
{\ps(m-1)+\eps^{1-\beta}} =
\frac{  (2- m^2)\eps^{1-\beta}}
{\ps(m-1)+\eps^{1-\beta}}  =\mathcal{O}(\eps^{1-2\beta}).
$$
Putting all together we deduce that
$$
  (j+1) =\mathcal{O} \left(\frac{\log(\eps^{2\beta-1})}{\eps^\beta}\right).
 $$
In conclusion, the total number of steps is $\mathcal{O} \left(\frac{\log(\eps^{2\beta-1})}{\eps^\beta}\right)$. This concludes the proof.

\end{proof}

\begin{proof}{\underline{\bfseries\sffamily Proof of \Cref{thm:up-bd-per}}}
We show that
\begin{equation*}
    \Pi^\star(\mathcal{D})\leq \Pi^\star(\mathcal{M}(0)) +\tilde{\mathcal{O}}(\eps^{\alpha/(2\alpha-1)}).
\end{equation*}
By combining \Cref{lem:steps} and \Cref{prop:numberOfSteps},  we can deduce that
$\Pi^\star(\mathcal{D})-\Phi_2(\Lic)= \tilde{\mathcal{O}}(\eps^{1-\beta})$. It remains to provide an upper bound for $\Phi_2(\Lic)$.

We separate the proof into two cases:  $v\geq\ps$ and  $v\leq\ps$.
In both cases we provide bounds for $\Lic(\w(v))$ which we then use to further bound $\Phi_2(\Lic)$.
For $v\geq\ps$, we will provide an upper and a lower bound for $\Lic(\w(v))$. For $v<\ps$, we will provide an upper bound for $\Lic(\w(v))$. These bounds will be tight and will enable us to use \Cref{lem:bound_rev} to bound $\Phi_2(\Lic)$.

\underline{Case 1:} We consider the case  $v\geq\ps$. For ease of exposition, we use $\gamma$ to denote $1-\beta$. First, we characterize the difference $\nu_{k-1}-\nu_{k}$ for $\nu_k\geq \ps$.
\begin{flalign*}
\nu_{k-1}-\nu_{k} &= \frac{1}{m^{k-1}}\left\{
\nu_0 +\ps(m^{k-1}-1) -\eps^\gamma \frac{m^{k-1}-1}{m-1}\right\}
- \frac{1}{m^k}\left\{
\nu_0 +\ps(m^k-1) -\eps^\gamma \frac{m^k-1}{m-1}\right\}\\
&=  \frac{1}{m^k}\left\{
m\nu_0 +\ps(m^{k}-m) -\eps^\gamma \frac{m^{k}-m}{m-1}
-\nu_0 -\ps(m^k-1) +\eps^\gamma \frac{m^k-1}{m-1}\right\}\\
&= \frac{1}{m^k}\left\{
(m-1)\nu_0 -\ps(m-1) +\eps^\gamma\right\}\\
&= \frac{1}{m^k}\left\{
(m-1)(\nu_0 -\ps) +\eps^\gamma\right\}\\
&= \frac{1}{m^k}\left\{
\eps^\beta -\eps^\gamma +\eps^\gamma\right\}\\
&= \frac{\eps^\beta}{m^k}.
\end{flalign*}
Note that from \Cref{lem:charac_lambda_ic} for $k\geq1$, we have
$$
\lambda(w(v)) = \sum_{i=1}^k m^{i-1} f(w^i(v)), \quad a.e \text{ in } [\nu_k,\nu_{k-1}]\cap[\ps,\bv].
$$
First, note that the following holds for $v\in [\nu_k,\nu_{k-1}]\cap[\ps,\bv]$
$$
 \sum_{i=1}^k m^{i-1} f(w^i(v)) = m^{k-1}  \sum_{i=0}^{k-1} m^{-i} f(w^{k-i}(v)).
$$
In order to bound the expression above, we make use of the mean value theorem for integrals. We have that since $v\in[\nu_k,\nu_{k-1}]$ then $w^{k-i}(v) \in[\nu_i,\nu_{i-1}]$ and there exists $\xi \in[\nu_i,\nu_{i-1}]$ such that
\begin{flalign*}
\int_{\nu_i}^{\nu_{i-1}} f(v)dv &=  f(\xi)(\nu_{i-1}-\nu_{i} )\\
&=  (f(\xi)-f(w^{k-i}(v)))(\nu_{i-1}-\nu_{i} )+ f(w^{k-i}(v))(\nu_{i-1}-\nu_{i} )\\
&=  \underbrace{(f(\xi)-f(w^{k-i}(v)))}_{\chi_i(v)}\frac{\eps^\beta}{m^i}+ f(w^{k-i}(v))\frac{\eps^\beta}{m^i}.
\end{flalign*}
Then
$$
 m^{k-1}  \sum_{i=0}^{k-1}\int_{\nu_i}^{\nu_{i-1}} f(v)dv= m^{k-1}  \sum_{i=0}^{k-1}\chi_i(v)\frac{\eps^\beta}{m^i}+  m^{k-1}  \sum_{i=0}^{k-1}f(w^{k-i}(v))\frac{\eps^\beta}{m^i}
$$
Equivalently,
$$
\frac{m^{k-1}}{\eps^\beta} \bar{F}(\nu_{k-1})= m^{k-1}  \sum_{i=0}^{k-1}\frac{\chi_i(v)}{m^i}+  \lambda(w(v)).
$$
Note that
$$
\left|   \sum_{i=0}^{k-1}\frac{\chi_i(v)}{m^i} \right|\leq \sum_{i=0}^{k-1}\frac{|\chi_i(v)|}{m^i}
\leq \sum_{i=0}^{k-1}\frac{|\chi_i(v)|}{m^i} \leq \sum_{i=0}^{k-1}|\chi_i(v)|\leq V_0^{\bv}(f),
$$
where we have used that $m>1$ and that $V_0^{\bv}(f)$ denotes the total variation\footnote{The total variation of a function $f$ over $[0,\bv]$ is given by $V_0^{\bv}(f) = \sup_{P\in\mathcal{P} } \sum_{i=0}^{n_P-1} |f(x_{i+1})-f(x_i)|,$
where $\mathcal{P} $ is the set of partitions in $[0,\bv]$ and $n_P$ is the number of intervals in partition $P$.} $f(\cdot)$ which we assume to be bounded (see \Cref{assumption1}), and that
$\xi \in[\nu_i,\nu_{i-1}]$ and $w^{k-i}(v) \in[\nu_i,\nu_{i-1}]$. Hence, for $v\in [\nu_k,\nu_{k-1}]\cap[\ps,\bv]$ (a.e)
$$
  \frac{m^{k-1}}{\eps^\beta} \bar{F}(\nu_{k-1})  - m^{k-1}V_0^{\bv}(f)\leq \lambda(w(v)) \leq  \frac{m^{k-1}}{\eps^\beta} \bar{F}(\nu_{k-1})  +m^{k-1}V_0^{\bv}(f).
$$
Note that $\bar{F}(\nu_{k-1})\leq \bar{F}(v)$. Also,
$$
\bar{F}(\nu_{k-1}) = \bar{F}(v) -\int_{v}^{\nu_{k-1}}f(x)dx \geq \bar{F}(v) -\bar{f}(\nu_{k-1}-v)\geq \bar{F}(v) -\bar{f}\cdot (w(v)-v).
$$
Additionally, we note that $(w(v)-v)$ in increasing for $v\geq \ps$ hence $(w(v)-v)\leq \nu_{k-2}-\nu_{k-1} = \eps^\beta/m^{k-1}$. Then,
for $v\in [\nu_k,\nu_{k-1}]\cap[\ps,\bv]$ (a.e)
$$
  \frac{m^{k-1}}{\eps^\beta} \bar{F}(v) - \bar{f}  - m^{k-1}V_0^{\bv}(f)\leq \lambda(w(v)) \leq  \frac{m^{k-1}}{\eps^\beta} \bar{F}(v)  +m^{k-1}V_0^{\bv}(f).
$$

We thus have that the following holds almost everywhere for $v \in [\ps,\nu_0]$
\begin{flalign*}
\Delta(v) &= vf(v) - \dot{w}(v)(w(v)-v)\lambda(w(v))\\
&\geq vf(v) - m\frac{\eps^\beta}{m^{k-1}}\lambda(w(v)) \\
&\geq vf(v) - m\frac{\eps^\beta}{m^{k-1}}\left( \frac{m^{k-1}}{\eps^\beta} \bar{F}(v)  +m^{k-1}V_0^{\bv}(f)\right)\\
&= vf(v) - m\bar{F}(v)  -m\eps^\beta V_0^{\bv}(f)\\
&= vf(v) - \bar{F}(v) -(m-1)\bar{F}(v)  -m\eps^\beta V_0^{\bv}(f)\\
&= vf(v) - \bar{F}(v) -\frac{\eps^\beta-\eps^\gamma}{\nu_0-\ps}  -m\eps^\beta V_0^{\bv}(f)\\
&\ge vf(v) - \bar{F}(v) -C_L \eps^\beta,
\end{flalign*}
where $C_L$ is a positive constant. In the last inequality we have used $\eps>0$ small. We also have
\begin{flalign*}
\Delta(v) &= vf(v) - \dot{w}(v)(w(v)-v)\lambda(w(v))\\
&\leq vf(v) - \dot{w}(v)(w(\nu_k)-\nu_k)\lambda(w(v))\\
&\leq vf(v) - m\frac{\eps^\beta}{m^{k}}\lambda(w(v)) \\
&\leq  vf(v) - \frac{\eps^\beta}{m^{k-1}}\left( \frac{m^{k-1}}{\eps^\beta} \bar{F}(v)  -\bar{f} - m^{k-1}V_0^{\bv}(f)\right)\\
&\leq  vf(v) - \bar{F}(v)  +\eps^\beta \left\{\bar{f} +V_0^{\bv}(f)\right\}\\
&= vf(v) - \bar{F}(v)  +C_U\eps^\beta,
\end{flalign*}
where $C_U$ is a positive constant. We have thus shown that
\begin{equation*}
\Delta_L(v)\triangleq vf(v) - \bar{F}(v) -C_L \eps^\beta \leq \Delta(v)\leq  vf(v) - \bar{F}(v)  +C_U\eps^\beta\triangleq \Delta_U(v), \quad v\in [\ps,\nu_0] \: \: \text{ a.e}.
\end{equation*}
Note that $\Delta_L(v)=-\dot{R}(v)-C_L\eps^\beta$ and, hence, $\Delta_L(v)\ge 0$ if and only if
$\dot{R}(v)\leq -C_L\eps^\beta$. By \Cref{def1}, the latter is true for all $v$ such that $-\kappa_L\alpha(v-\ps)^{\alpha-1}\leq -C_L\eps^\beta$ or, equivalently, for all $v\in (\ps,\ps+\ell)$ such that $v\geq \ps + (C_L/(\kappa_L\alpha))^{1/(\alpha-1)}\eps^{\beta/(\alpha-1)}$. We define
\begin{equation*}
\ps_L=\ps + \underbrace{(C_L/(\kappa_L\alpha))^{1/(\alpha-1)}}_{C_L'(\alpha)}\eps^{\beta/(\alpha-1)},
\end{equation*}
and we consider $\eps>0$ small enough such that $\ps_L\in (\ps,\ps+\ell)$. Hence, for  $v\in [\ps_L,\ps+\ell)$ we have that $\Delta_L(v)\ge 0$. Now, for $v\geq \ps+\ell$ we use the last assumption in the statement of the theorem. From the assumption, we can see that it is always possible to consider $\eps>0$ small enough such that
$\sup_{v\geq \ps+\ell} \dot{R}(v) \leq -C_L\eps^\beta$ (because the sup is strictly negative). Therefore, for any $v\geq \ps+\ell$ we have that  $\Delta_L(v)\ge 0$. In conclusion, we can always consider $\eps>0$ small enough such that  for $v\geq \ps_L$ we have that  $\Delta_L(v)\ge 0$.  Hence, considering $\eps>0$ small enough  such that $\ps_L< \nu_0$, \Cref{lem:bound_rev} with $x=\ps_L$ implies
\begin{flalign*}
\int_{\ps_L}^{\bv}\Delta(v)^+dv &= \int_{\ps_L}^{\bv}\Delta(v)dv \\
&\leq \ps_L \bar{F}(\ps_L) + 2(w(\ps_L)-\ps_L)\\
&= R(\ps_L) + 2(w(\ps_L)-\ps_L)\\
&= R(\ps_L) + (m-1) C_L'(\alpha) \eps^{\frac{\beta}{\alpha-1}} +\eps^\gamma\\
&= R(\ps_L) + \mathcal{O}\left(\eps^{\beta+\frac{\beta}{\alpha-1}}\right) +\eps^\gamma,
\end{flalign*}
where we have used that $m-1 =\mathcal{O}(\eps^\beta)$. We also have, $\Delta_U(v)\geq 0$ for $v\geq \ps$ and
\begin{flalign*}
\int_{\ps}^{\ps_L}\Delta(v)^+dv &\leq \int_{\ps}^{\ps_L}\Delta_U(v)dv=R(\ps) -R(\ps_L) +C_U\eps^\beta C_L'(\alpha) \eps^{\frac{\beta}{\alpha-1}}.
\end{flalign*}
Then,
\begin{equation*}
\int_{\ps}^{\bv}\Delta(v)^+dv \leq R(\ps)+ \mathcal{O}\left(\eps^{\beta+\frac{\beta}{\alpha-1}}\right) +\eps^\gamma.
\end{equation*}
Now, we still need to bound $\Phi_2(\Lic)$ for $v\leq \ps$. Before we do that we can check that we are getting the right order. From the bound for the number of steps, we have a term of order $\eps\cdot \log(\eps^{\beta-\gamma})/\eps^\beta$. So if we take $\beta = (\alpha-1)/(2\alpha-1)$. We will have that
$$
\gamma = 1-\beta= \frac{\alpha}{2\alpha-1} \quad \text{and} \quad \beta + \frac{\beta}{\alpha-1}= \frac{\alpha}{2\alpha-1}.
$$
Note that $\gamma >\beta$ if and only if $\beta<1/2$ which is true if $\alpha<\infty$. In sum, this gives the upper bound of order $\frac{\alpha}{2\alpha-1}$. As desired.

\underline{Case 2:} We consider the case  $v\leq\ps$.
We will provide an upper bound for $\Delta(v)$ for $v\leq \ps$. Then we will show that the upper is negative except close to $\ps$. The area below the upper bound in such region will be of the desired order.  We have the following
\begin{flalign*}
\Delta(v) &= vf(v) - \dot{w}(v)(w(v)-v)\lambda(w(v))\\
&= vf(v) - (2-m)(w(v)-v)\lambda(w(v)).
\end{flalign*}
We now provide bounds for the terms $(w(v)-v)$ and $\lambda(w(v))$. Let $k^\star$ be the largest $k$ such that $\nu_k\geq \ps$. We consider $v\in[\nu_{k^\star+j},\nu_{k^\star+j-1}] \cap [0,\ps]$ for $j\geq 1$. We let $\bar{j}$ to be the largest $j$ such that $\nu_{k^\star+j}\geq 0$ and define
$\nu_{k^\star+\bar{j}+1}=0$. For $j \in\{1,\dots,\bar{j}\}$, we have
\begin{flalign}\nonumber
\nu_{k^\star+j-1}- \nu_{k^\star+j} &=\frac{1}{(2-m)^{j-1}}\left\{
\nu_{k^\star} +\ps((2-m)^{j-1}-1) -\eps^\gamma \frac{(2-m)^{j-1}-1}{2-m-1}
\right\}\\\nonumber
&-\frac{1}{(2-m)^j}\left\{
\nu_{k^\star} +\ps((2-m)^j-1) -\eps^\gamma \frac{(2-m)^j-1}{2-m-1}\right\}\\\nonumber
&=\frac{1}{(2-m)^{j}}\left\{
(1-m)\nu_{k^\star} +\ps(m-1) +\eps^\gamma
\right\}\\ \label{eq:rev-diff-below-1}
&=\frac{1}{(2-m)^{j}}\left\{
 (\ps-\nu_{k^\star})(m-1) +\eps^\gamma
\right\}.
\end{flalign}
Additionally, since $\nu_{k^\star+j}\geq 0$ we have that
$$
(2-m)^{j-1} ( (m-1)\ps +\eps^\gamma )  \geq  (\ps-\nu_{k^\star})(m-1)+\eps^\gamma,
$$
note that $\nu_{k^\star}-\ps \leq w(\ps) -\ps =\eps^\gamma$. Hence, $\nu_{k^\star+j}\geq 0$ implies that
\begin{equation}\label{eq:rev-j-below-2}
\frac{1}{(2-m)^{j-1}}\leq \frac{1}{(2-m)} \left( \frac{(m-1)\ps}{\eps^\gamma} +1 \right), \quad j \in\{1,\dots,\bar{j}\}.
\end{equation}

We use the above to provide lower bounds for both $(w(v)-v)$ and $\lambda(w(v))$. We start with $w(v)-v$. Note that for $v\leq \ps$, $w(v)-v$ is decreasing. Hence for $v\in[\nu_{k^\star+j},\nu_{k^\star+j-1}] \cap [0,\ps]$ with $j \in\{1,\dots,\bar{j}+1\}$ we have
$$
w(v)-v \geq (2-m)\frac{\eps^\gamma}{(2-m)^{j-1}}.
$$
Indeed, for $j=1$ we have
$$w(v)-v\geq w(\ps)-\ps =\eps^\gamma \geq (2-m)\eps^\gamma,$$
where we have used that $(2-m)<1$. For $j>1$ we have
\begin{flalign*}
w(v)-v &\geq  w(\nu_{k^\star+j-1})-\nu_{k^\star+j-1}\\
& = \nu_{k^\star+j-2}-\nu_{k^\star+j-1}\\
&=\frac{1}{(2-m)^{j-1}}\left\{
 (\ps-\nu_{k^\star})(m-1) +\eps^\gamma \right\}\\
 & \geq (2-m)\frac{\eps^\gamma}{(2-m)^{j-1}}.
\end{flalign*}
Using the above, so far we have that for $v\in[\nu_{k^\star+j},\nu_{k^\star+j-1}] \cap [0,\ps]$ for  $j \in\{1,\dots,\bar{j}+1\}$
\begin{flalign*}
\Delta(v) &\leq  vf(v) - (2-m)^2\frac{\eps^\gamma}{(2-m)^{j-1}}\lambda(w(v)).
\end{flalign*}
Next, we provide a lower bound for $\lambda(w(v))$. From \Cref{lem:charac_lambda_ic} and since $(w(v))\geq \mu$ we have for $v\in[\nu_{k^\star+j},\nu_{k^\star+j-1}] \cap [0,\ps]$ with  $j \in\{1,\dots,\bar{j}+1\}$ that (almost everywhere)
\begin{flalign*}
\lambda(w(v)) &\geq \sum_{i=0}^{j-1} f(w^{i+1}(v))(2-m)^i + (2-m)^{j-1}\sum_{i=j}^{k^\star+j-1} f(w^{i+1}(v))m^{i-j+1}\\
&=(2-m)^{j-1}\underbrace{\sum_{i=0}^{j-1} f(w^{j-i}(v)) (2-m)^{-i}}_{(A)} +(2-m)^{j-1}m^{k^\star}\underbrace{\sum_{i=0}^{k^\star-1} f(w^{k^\star-i+j}(v))m^{-i}}_{(B)}.
\end{flalign*}
Next, we bound each sum above. For $(A)$ we have that since $v\in[\nu_{k^\star+j},\nu_{k^\star+j-1}]$ then $w^{j-i}(v) \in [\nu_{k^\star+i},\nu_{k^\star+i-1}]$. So we can always find $\xi \in  [\nu_{k^\star+i},\nu_{k^\star+i-1}]$ such that
\begin{flalign*}
\int _{\nu_{k^\star+i}}^{\nu_{k^\star+i-1}} f(v)dv &= \underbrace{(f(\xi)- f(w^{j-i}(v)))}_{\triangleq \chi_{j,i}(v)}\cdot(\nu_{k^\star+i-1} - \nu_{k^\star+i}) + f(w^{j-i}(v))\cdot(\nu_{k^\star+i-1} - \nu_{k^\star+i})\\
&\leq |\chi_{j,i}(v)| \cdot(\nu_{k^\star+i-1} - \nu_{k^\star+i}) + f(w^{j-i}(v))\cdot(\nu_{k^\star+i-1} - \nu_{k^\star+i})\\
&\leq |\chi_{j,i}(v)| \cdot m\frac{\eps^\gamma}{(2-m)^i} + f(w^{j-i}(v))\cdot m\frac{\eps^\gamma}{(2-m)^i},
\end{flalign*}
where we have used that for $i=0 $ the difference $\nu_{k^\star+i-1} - \nu_{k^\star+i}$ equals $\eps^\beta/m^{k^\star}$, and that $1/m^{k^\star}$ is bounded above by $m \eps^\gamma/\eps^\beta$ (see \cref{eq:rev-step-1}). For $i>0$ the bound comes from \cref{eq:rev-diff-below-1}. Next, by the fact that
$(2-m)<1$, that $f$ has bounded variation and \cref{eq:rev-j-below-2} we have
$$
\sum_{i=0}^{j-1} \frac{|\chi_{j,i}(v)|}{(2-m)^i}\le \sum_{i=0}^{j-1} \frac{|\chi_{j,i}(v)|}{(2-m)^{j-1}} \le V_0^{\bv}(f)\frac{1}{(2-m)} \left( \frac{(m-1)\ps}{\eps^\gamma} +1 \right).
$$
Hence,
$$
(A)\geq \frac{1}{m\eps^{\gamma}}\int _{\nu_{k^\star+j-1}}^{\nu_{k^\star-1}} f(v)dv - V_0^{\bv}(f)\frac{1}{(2-m)} \left( \frac{(m-1)\ps}{\eps^\gamma} +1 \right).
$$

Now, for $(B)$ we have the following that since $v\in[\nu_{k^\star+j},\nu_{k^\star+j-1}]$ then $w^{k^\star-i+j}(v) \in [\nu_{i},\nu_{i-1}]$. So we can always find $\xi \in  [\nu_{i},\nu_{i-1}]$ such that
\begin{flalign*}
\int _{\nu_{i}}^{\nu_{i-1}} f(v)dv &= \underbrace{(f(\xi)- f(w^{k^\star-i+j}(v)))}_{\triangleq \chi_{j,i}(v)}\cdot(\nu_{i-1} - \nu_{i}) + f(w^{k^\star-i+j}(v))\cdot(\nu_{i-1} - \nu_{i})\\
&\leq |\chi_{j,i}(v)|\cdot(\nu_{i-1} - \nu_{i}) + f(w^{k^\star-i+j}(v))\cdot(\nu_{i-1} - \nu_{i})\\
& = |\chi_{j,i}(v)|\cdot\frac{\eps^\beta}{m^i} + f(w^{k^\star-i+j}(v))\cdot \frac{\eps^\beta}{m^i}.
\end{flalign*}
In turn, using that $m>1$ and that $f$ has bounded variation, we have that
$$
(B)\geq  \frac{1}{\eps^\beta}\int_{\nu_{k^\star-1}}^{\bv}f(v)dv - V_0^{\bv}(f).
$$
With this, we obtain a lower bound for $\lambda(w(v))$. Recall from \cref{eq:rev-step-1} that $ \eps^{\beta-\gamma} \geq m^{k^\star} \geq \eps^{\beta-\gamma}/m$ then
\begin{flalign*}
\frac{\lambda(w(v))}{(2-m)^{j-1}} &\geq
\frac{1}{m\eps^{\gamma}}\int _{\nu_{k^\star+j-1}}^{\nu_{k^\star-1}} f(v)dv - V_0^{\bv}(f)\frac{1}{(2-m)} \left( \frac{(m-1)\ps}{\eps^\gamma} +1 \right)
+ \frac{m^{k^\star}}{\eps^\beta}\int_{\nu_{k^\star-1}}^{\bv}f(v)dv -m^{k^\star} V_0^{\bv}(f)\\
&\geq
\frac{1}{m\eps^{\gamma}}\int _{\nu_{k^\star+j-1}}^{\nu_{k^\star-1}} f(v)dv - V_0^{\bv}(f)\frac{1}{(2-m)} \left( \frac{(m-1)\ps}{\eps^\gamma} +1 \right)
+ \frac{1}{m\eps^\gamma}\int_{\nu_{k^\star-1}}^{\bv}f(v)dv - \eps^{\beta-\gamma} V_0^{\bv}(f)\\
&=
\frac{1}{m\eps^{\gamma}}\bar{F}(\nu_{k^\star+j-1})  - V_0^{\bv}(f)\frac{1}{(2-m)} \left( \frac{(m-1)\ps}{\eps^\gamma} + 1 - (2-m)\eps^{\beta-\gamma}\right).
\end{flalign*}
Note that
\begin{flalign*}
\bar{F}(\nu_{k^\star+j-1}) &= \bar{F}(v) - \int_{v}^{\nu_{k^\star+j-1}} f(s)ds\\&
\geq   \bar{F}(v)  -\bar{f} (\nu_{k^\star+j-1}-\nu_{k^\star+j})\\
&\geq   \bar{F}(v)  -\bar{f} (\nu_{k^\star+j-1}-v)\\
&\geq \bar{F}(v) - \frac{\eps^\gamma}{(2-m)^j}\\
&\geq \bar{F}(v) - \eps^\gamma \frac{1}{(2-m)^2} \left( \frac{(m-1)\ps}{\eps^\gamma} +1 \right),
\end{flalign*}
where in the last inequality we have used \cref{eq:rev-j-below-2}. In turn,
\begin{flalign*}
\Delta(v)&\leq  vf(v) - (2-m)^2 \eps^\gamma \left(
\frac{1}{m\eps^{\gamma}}\left\{\bar{F}(v) - \eps^\gamma \frac{1}{(2-m)^2} \left( \frac{(m-1)\ps}{\eps^\gamma} +1 \right) \right\}\right. \\
&\left. - V_0^{\bv}(f)\frac{1}{(2-m)} \left( \frac{(m-1)\ps}{\eps^\gamma} + 1 - (2-m)\eps^{\beta-\gamma}\right)
\right)\\
&=  vf(v) - (2-m)^2\left(
\frac{1}{m}\left\{\bar{F}(v) - \eps^\gamma \frac{1}{(2-m)^2} \left( \frac{(m-1)\ps}{\eps^\gamma} +1 \right) \right\}\right. \\
&\left. - V_0^{\bv}(f)\frac{1}{(2-m)} \left( (m-1)\ps + \eps^\gamma - (2-m)\eps^{\beta}\right)
\right)\\
&=  vf(v) - \frac{(2-m)^2}{m}\bar{F}(v) + \frac{1}{m} \left( (m-1)\ps+\eps^\gamma \right) + V_0^{\bv}(f) (2-m) \left( (m-1)\ps + \eps^\gamma - (2-m)\eps^{\beta}\right)\\
&\leq vf(v) -\bar{F}(v)- \left( \frac{(2-m)^2}{m}-1\right) + \left(1  + V_0^{\bv}(f) \right)\left( (m-1)\ps+\eps^\gamma \right)\\
&\leq vf(v) -\bar{F}(v)+ \left(1- \frac{(2-m)^2}{m}\right) + \left(1  + V_0^{\bv}(f) \right)\left( (m-1)\ps+\eps^\beta \right),
\end{flalign*}
we note that the third term above is  positive for $\eps>0$ small enough and of order $\eps^\beta$. The fourth term is also of oder $\eps^\beta$ because $m-1$ is of that order. In conclusion, we proved  that the following holds almost everywhere in $[0,\ps]$
$$
\Delta(v)\leq -\dot{R}(v) +C\cdot \eps^\beta,
$$
for some positive constant $C$. From the last assumption in the statement of the theorem, we can deduce that for $\eps>0$ small enough $-\dot{R}(v) +C\cdot \eps^\beta<0$
for all $v\leq \ps +\ell$. From \Cref{def1}, for $v\in (\ps-\ell,\ps)$ we have that $-\dot{R}(v) +C\cdot \eps^\beta< 0$ if
$$
C\cdot \eps^\beta \leq \kappa_L \alpha (\ps-v)^{\alpha-1} \Leftrightarrow
v\leq  \ps-\left(\frac{C\cdot \eps^\beta}{\kappa_L \alpha}\right)^{\frac{1}{\alpha-1}} \triangleq p_L.
$$
Hence, $\eps>0$ small enough we have
$$
\int_{0}^{\ps} \Delta^+(v) dv = \int_{p_L}^{\ps} \Delta^+(v) dv \leq \int_{p_L}^{\ps} \left(-\dot{R}(v) +C\cdot \eps^\beta \right)^+ dv \leq C\eps^\beta \left(\frac{C\cdot \eps^\beta}{\kappa_L \alpha}\right)^{\frac{1}{\alpha-1}} = \mathcal{O}\left(\eps^{\beta+\frac{\beta}{\alpha-1}}\right),
$$
where we have used that $-\dot{R}(v)\leq 0$ for $v\leq \ps$ (this is implied by the assumptions in the theorem). To conclude note that
$\beta + \frac{\beta}{\alpha-1}= \frac{\alpha}{2\alpha-1}$ if we take $\beta = (\alpha-1)/(2\alpha-1)$. This gives us the right order and concludes the proof.

\end{proof}

\section{Proofs for Section \ref{sec:floor}}

\begin{proof}{\underline{\bfseries\sffamily Proof of \Cref{prop:hard/soft-floor}}}
We first establish that the optimal mechanism in $\md\in\mathcal{M}_\texttt{d}(\varepsilon)$ is hard/soft mechanism. We then analyze
the quantity $ \Pi(\md)  \: - \: \Pi^\star(\mathcal{M}(0))\:$, and prove parts \textit{i.)} and \textit{ii.)}.

Let $\md=(\xd,\td)$ be the optimal deterministic mechanism. We show that $\md$ is a hard/soft floor mechanism, that is,
\begin{eqnarray*}
\xd(v) &=& \mathbf{1}\{v \in [\pd,\bv] \},\\
\td(v) &=& v \: \mathbf{1}\{v \in [\pd,\pt)\} + s \:\mathbf{1}\{v \in [\pt,\bv]\},
\end{eqnarray*}
for some appropriately chosen $(\pd,\pt)$.

Without loss of generality assume that $\xd(\cdot)$ is right-continuous (this is possible because $F$ is absolutely continuous). Define
\begin{equation*}
r\triangleq \inf\{v\in[0,\bv]:\xd(v)=1\}.
\end{equation*}
Note that $r<\infty$. Indeed, if this is not true then $\xd(\cdot)\equiv 0$ which, together with \eqref{eq:IR}, implies that $\td(v)\leq0$
for all $v\in[0,\bv]$. This gives $\Pi(\md)\leq 0$ which is suboptimal because the associated mechanism to the posted price $\ps$
 is feasible in $\mathcal{M}_\texttt{d}(\varepsilon)$ and $\Pi^\star(\mathcal{M}(0))>0$. Moreover, note that because $\xd(\cdot)$ is right-continuous we have that
 $\xd(r)=1$, and $\xd(v)=0$ for all $v<r$.

Next we show that
\begin{equation}\label{eq:tm1-t-u}
\td(v)\leq \threepartdef{0}{v\in[0,r);}{v\cdot \xd(v)}{v\in[r,r+\eps);}{r+\eps}{v\in[r+\eps,\bv].}
\end{equation}
For $v<r$, \eqref{eq:IR} and the definition of $r$ imply that $\td(v)\leq v\cdot \xd(v)= 0$. For $v\in [r,r+\eps)$ the inequality follows from \eqref{eq:IR}. For $v\in[r+\eps,\bv]$, \eqref{eq:IC} yields
\begin{equation*}
\td(v) \leq \td(r) +v\cdot(\xd(v)-\xd(r))+\eps\leq r+\eps,
\end{equation*}
where the second inequality follows from $ \td(r)\leq r\cdot \xd(r)=r$ and $(\xd(v)-\xd(r))\leq 0$.

Since $(\xd,\td)$ is optimal, all the inequalities in  \cref{eq:tm1-t-u} must bind. Moreover, we must have  $\xd(v)=1$ for $v\in[r,r+\eps)$.
This means that $\md$ is a hard/soft floor mechanism with $\pd=r$ and  $\pt=r+\eps$. Note that this mechanism is also feasible. \
Indeed, the mechanism satisfies \eqref{eq:IR} because the transfers are never larger than buyers valuations. It also satisfies \eqref{eq:IC}.
A best reporting function is
\begin{equation*}
\vd(v) = \twopartdef{0}{v\in[0,r);}{r}{v\in[r,\bv].}
\end{equation*}
With this, we can see that $u(v)= (v-r) \mathbf{1}\{v \in [r,\bv] \}$. Using the characterization in \cref{eq:IC-no-tight} of \eqref{eq:IC}  and the definition of $\md$, it is not hard to check that $\md$ verifies \eqref{eq:IC}.
This shows that among the deterministic mechanisms, the hard/soft floor mechanisms are optimal. In turn, the optimal solution can be find by optimizing over $r$.

Next we analyze
the quantity $ \Pi(\md)  \: - \: \Pi^\star(\mathcal{M}(0))\:$. We use $\md(r)$ instead of $\md$ to highlight the dependence on $r$.
We have
\begin{equation*}
\Pi(\md(r))= \int_{r}^{r+\eps}vf(v)dv + (r+\eps)\cdot\overline{F}(r+\eps) =
r\cdot  \overline{F}(r) +\int_{r}^{r+\eps}\overline{F}(v)dv,
\end{equation*}
where in the second inequality we used integration by parts. Hence,
\begin{equation*}
\Pi(\md(r)) =r\cdot  \overline{F}(r) +\int_{r}^{r+\eps}\overline{F}(v)dv\: \leq\: r\cdot  \overline{F}(r) +\eps  \: \leq\: \Pi^\star(\mathcal{M}(0))+\eps,
\end{equation*}
where in the first inequality we used that $\overline{F}(v)\leq 1 $ and in the second we used that
$\Pi^\star(\mathcal{M}(0))=\ps \cdot \overline{F}(\ps)$. This shows that $ \Pi(\md)  \: - \: \Pi^\star(\mathcal{M}(0))\:$
is $\mathcal{O}(\eps)$. %

 We next show a lower bound on performance. By \cref{assumption1},  $\ps \in(0,\bv)$. Hence we can always find $\eps_0>0$ such that
for all $\eps\leq \eps_0$, $\ps+\eps\in (0,\bv)$. Now, by the mean value theorem, we have
that $\overline{F}(\ps+\eps)-\overline{F}(\ps)=-f(\xi(\eps))\cdot \eps$ for $\xi(\eps)\in[\ps,\ps+\eps]$. Therefore,
\begin{flalign*}
\Pi^\star(\mathcal{M}_\texttt{d}(\eps))&\geq \Pi(\md(\ps))\\
&=\Pi^\star(\mathcal{M}(0))+\int_{\ps}^{\ps+\eps}\overline{F}(v)dv\\
&=\Pi^\star(\mathcal{M}(0))+\overline{F}(\ps+\eps)\cdot \eps\\
&\ge \Pi^\star(\mathcal{M}(0))+ \overline{F}(\ps)\cdot \eps-\overline{f}\cdot\eps^2,
\end{flalign*}
where in the last inequality we used that $f(v)\leq \overline{f}$ for all $v\in[0,\bv]$ (\Cref{assumption1}).
 This completes the proof of the theorem.

\end{proof}

\section{Proofs for Section \ref{sec:low-bound-perf}}\label{app:sec4}

\begin{proof}{\underline{\bfseries\sffamily Proof of \Cref{prop:alloc-tra-vs}}}
For ease of notation we use $(x,t)$ in lieu of $(\xr,\tr)$.
Let $u(v)=\max_{w\in \mathcal{S}} \{v\cdot x(w)-t(w)\}$, we want to show that
\begin{equation*}%
 u(v)= v\cdot \all(\vsr(v))-\tra(\vsr(v)).
\end{equation*}

First observe that the generalize inverse of $\vsr(\cdot)$, $\w(\cdot)$, is
\begin{equation*}
\w(v)= \twopartdef{\ps-\delta}{v\in[0,\ps-\delta-\mu];}{v+\mu}{v\in[\ps-\delta-\mu, \ps+\delta].}
\end{equation*}
In turn, we have that   $w(v)\ge v$ for all $v\in[0,\ps+\delta]$. Let us analyze two cases, $v\le \vz$ (case 1) and $v\geq \vz$ (case 2).

\textbf{Case 1. } Suppose that $v\le \vz$. Define the function
\begin{equation*}
g(v')= v\cdot \all(v')-\tra(v'),\quad  v'\in[0,\bv],
\end{equation*}
We need to show that $g(\vsr(v))=g(0)\geq g(v')$ for all $v'\in[0,\bv]$. To prove this,
we argue that $g(\cdot)$ is non-increasing. If $v'\in[0,\vz]$ we have
\begin{equation*}
g(v') = (v-v')\cdot \all(v'),\quad \text{and}\quad \dot{g}(v')=-\all(v')+(v-v')\cdot \dot{\all}(v').
\end{equation*}
In $[0,\vz]$ the allocation $\all(\cdot)$ satisfies the differential equation
$(w(v')-v')\dot{\all}(v')=\all(v')$. In turn,
\begin{equation*}
 \dot{g}(v')=-(w(v')-v')\dot{\all}(v')+(v-v')\cdot \dot{\all}(v')=(v-w(v'))\cdot \dot{\all}(v').
\end{equation*}
Note that $w(\cdot)$ is  non-decreasing and $v\leq \vz=w(0)$ hence we have that
$v\leq w(v')$ for all $v'\in [0,\vz]$. Moreover, $\all$ is non-decreasing (see \Cref{prop:x-mon} in \Cref{sec:aux-sec4}) which implies that
$\dot{g}(v')\leq 0$ for all $v'\in[0,\vz]$. We have thus verified that $g$ is decreasing in $[0,\vz]$.

If $v'\in [\vz,\vm]$ we have that
\begin{equation*}
g(v') = (v-v')\cdot \all(v')+\int_0^{v'}\all(\vsr(s))ds-\epsilon,\quad \text{and}\quad \dot{g}(v')=-\all(v')+(v-v')\cdot \dot{\all}(v') +\all(\vsr(v')).
\end{equation*}
In this interval the allocation $\all$ is a solution to the delayed differential equation $\all(v') + (v' - w(v') ) \dot{\all}(v') =\all (\vsr(v'))$. As before this implies that $ \dot{g}(v')=(v-w(v'))\cdot \dot{\all}(v').$ Note that because
$\all$ is non-decreasing (see \Cref{prop:x-mon} in \Cref{sec:aux-sec4}) we have that $\dot{\all}(v')\ge 0$, and because $w(v')\ge v'$ and $v'\geq \vz\ge v$, we can conclude that $ \dot{g}(v')\le 0$. We have thus verified that $g$ is non-increasing in $[\vz,\vm]$.

If $v'\in[\vm,\bv]$ we have
\begin{equation*}
g(v') = (v-v')+\int_0^{v'}\all(\vsr(s))ds-\epsilon,\quad \text{and}\quad \dot{g}(v')=-1+\all(\vsr(v')),
\end{equation*}
where to derive $\dot{g}(v')$ we have used that $\all(v')=1$ for all $v'\in[\vm,\bv]$.
Note that since $\vsr(\cdot)$ is non-decreasing, we have that $\vsr(v')\leq \vm$ for all $v'$. In turn, because
$\all(\cdot)$ is non-decreasing (see \Cref{prop:x-mon} in \Cref{sec:aux-sec4}) we conclude that $\all(\vsr(v'))\leq \all(\vm)=1$. That is,
$ \dot{g}(v')\leq 0$, equivalently, $g$ is decreasing in $[\vm,\bv]$.

 Finally, since $\all$ and $\tra$ are continuous functions, $g$ is also a continuous function. This, together with $g$ being decreasing in each of the intervals $[0,\vz]$, $[\vz,\vm]$ and $[\vm,\bv]$, implies that $g$ is decreasing in $[0,\bv]$, as desired.

\textbf{Case 2.}
$v\ge \vz$: In the previous case we already computed the derivative of $g(\cdot)$ at different intervals. In particular, we have
\begin{equation*}
\dot{g}(v')=\twopartdef{(v-w(v'))\cdot \dot{\all}(v')}{v'\leq \ps+\delta}{-1+\all(\vsr(v'))}{v'\geq \ps+\delta}
\end{equation*}
Note that the sign of the derivative at $v'\in [0,\vm]$ depends on
how $v$ compares to $w(v')$. Observe that $w(\cdot)$ is a non-decreasing continuous function taking values in $[\vz,\vm+\mu]$. Therefore, for any $v\in [\vz,\vm+\mu]$
we have that if $v'<\vsr(v)$ then $w(v')\le v$, and if $v'>\vsr(v)$ then $w(v')\ge v$. Moreover, as we argued before
$x(\vsr(v'))\leq 1$.
That is, for any $v\in [\vz,\vm+\mu]$ the function $g(v')$ is first increasing for $v'<\vsr(v)$
and then decreasing for $v'>\vsr(v)$. Hence, for any $v\in [\vz,\vm+\mu]$  the maximum of $g$ is attained at $\vsr(v)$.

If $v\geq \vm+\mu$ then $\vsr(v)=\vm$, and for any $v'\leq \ps+\delta$ we have that
$\w(v')\leq \ps+\delta+\mu\leq v$. In turn, in this case $g(v')$ is increasing for all $v'\leq \ps+\delta$.
Since  we already saw that $g$ is  decreasing for $v'\ge \vm$ we conclude that $\vsr(v)$ attains the maximum of $g$. Thus the maximum is still achieved at $\vsr(v)$, as desired.

\end{proof}

\begin{proof}{\underline{\bfseries\sffamily Proof of \Cref{prop:well-posed}}}
Consider the change of variables
\begin{equation*}
\yr(v)=\xr(v)\cdot \frac{\mu}{\eps}\cdot \exp\left(-\frac{v-\ps+\delta+\mu}{\mu}\right).
\end{equation*}
To prove the statement,  we solve the resulting ordinary and delayed differential equations for $\yr(v)$.
Under this change of variables, \eqref{eq:ode-1}  for $v\in [0,\ps-\delta-\mu]$ becomes
\begin{equation*}%
\dyr(v)=\yr(v)\left(\frac{1}{w(v)-v}-\frac{1}{\mu}\right), \quad \yr(0)=\frac{\mu}{\vz}\cdot e^{\frac{\vz-\mu}{\mu}},\quad v\in[0,\vz-\mu].
\end{equation*}
This has solution,
\begin{equation*}
    \yr(v)=\frac{\mu}{\vz-v}\exp\left(-\frac{v-\ps+\delta+\mu}{\mu}\right).
\end{equation*}

For $v\in [\ps-\delta-\mu,\ps-\delta]$
\eqref{eq:ode-1} becomes $\dyr(v)=0$ with
$\yr(\ps-\delta-\mu)=1$. Hence, in this interval $\yr(v)=1$.

 \eqref{eq:dde-1} becomes
\begin{equation}\label{eq:dde-y-p2}
\dyr(v)+\frac{e^{-1}}{\mu} \cdot \yr(v-\mu)=0,\quad v\in [\ps-\delta,\ps+\delta]; \quad \yr(v)=1, \:\: \forall v\in [\ps-\delta-\mu,\ps-\delta].
\end{equation}
The delayed differential equation in \eqref{eq:dde-y-p2}
is a linear delayed differential equation with constants coefficients. Its solution is unique and it can be found in
\cite{norkin1973introduction} p.8 and is given by
\begin{equation*}
\yr(v) = \sum_{j=0}^{\floor*{\frac{v-\ps+\delta}{\mu}} + 1} \frac{(-1)^je^{-j}}{j!}\left(\frac{v-\ps+\delta}{\mu}+1-j\right)^j,\quad v\in [\vz,\vm].
\end{equation*}
This solution can also be verified by an inductive argument. In summary, we have that
\begin{equation}\label{eq:solyr}
\yr(v)=
\threepartdef{\frac{\mu}{\vz-v}\exp\left(-\frac{v-\ps+\delta+\mu}{\mu}\right)}{v\leq \vz-\mu;}{1}{v\in [\vz-\mu,\vz];}{\sum_{j=0}^{\floor*{\frac{v-\ps+\delta}{\mu}} + 1} \frac{(-1)^je^{-j}}{j!}\left(\frac{v-\ps+\delta}{\mu}+1-j\right)^j}{ v\in [\vz,\vm].}
\end{equation}
Finally, we argue that $\xr$ is strictly positive and  monotone non-decreasing in $[0,\ps + \delta]$.
 \Cref{prop:x-mon} in \Cref{sec:aux-sec4} establishes that $\xr$ is non-decreasing in $[0,\ps+\delta]$. Since $\xr(0)=\eps/(\ps-\delta)>0$ and $\xr(\cdot)$ is non-decreasing, we deduce that  $\xr(\cdot)$ is strictly positive.
\end{proof}

\begin{proof}{\underline{\bfseries\sffamily Proof of \Cref{prop:feas}}}
First fix $\eps>0$ and $\mu>\eps\cdot e$. To show that $\lim_{v\uparrow\ps+\delta}\xr(v)= 1$ it suffices to prove
$\xr(\vm)$ can be equal to one. Given the change of variables from $\xr(v)$ to $\yr(v)$ (see the proof of \Cref{prop:well-posed}), the latter condition is equivalent to:
\begin{equation}\label{eq:1-pf-prop:sc-rel}
\yr(\vm)=\frac{1}{\eps}\frac{\mu}{e}e^{\frac{-2\delta}{\mu}}.
\end{equation}
Note that from \cref{eq:solyr}, we have that
\begin{equation*}
\yr(\vm)=\sum_{j=0}^{\floor*{\frac{2\delta}{\mu}} + 1} \frac{(-1)^je^{-j}}{j!}\left(\frac{2\delta}{\mu}+1-j\right)^j\triangleq \Gamma\left(\frac{2\delta}{\mu}\right).
\end{equation*}
Then we can cast \eqref{eq:1-pf-prop:sc-rel} as
\begin{equation}\label{eq:recast-1-aa}
\Gamma\left(\frac{2\delta}{\mu}\right)=\frac{\mu}{e\eps}e^{\frac{-2\delta}{\mu}}.
\end{equation}
For fixed $\eps$ and $\mu$ we show that we can find a solution $\delta(\mu,\eps)$ to \Cref{eq:recast-1-aa}. { Let $z = \frac{2\delta}{\mu}$, then we need to find $z$ that solves $\Gamma(z)=\frac{\mu}{e\eps}e^{-z}$ or equivalently $L(z) = 0$ with $L(z) = \Gamma(z) - \frac{\mu}{e\eps}e^{-z}$. Note that for $z=0$, we have $L(0) = 1- \frac{\mu}{e\eps} < 0$ because $\Gamma(0)=1$ and, by assumption, $\frac{\mu}{e\eps}>1$. Additionally, from \Cref{lem:new-rev-bound-on-gamma} we have that $L(z) \ge \left( 2e^{-1} z - \frac{\mu}{e\eps} \right) e^{-z}$ and, thus, $L(z) > 0$ for all $z$ large enough. Finally, we have that $L(z)$ is continuous because $\Gamma(z) = \yr(\mu z +\ps -\delta)$ and $\yr(v)$ is the continuous solution of a delayed differential.
The intermediate value theorem implies that we can always solve \Cref{eq:recast-1-aa}.} From the bounds \Cref{lem:new-rev-bound-on-gamma} we deduce that a solution $\delta= \delta(\mu,\eps)$ to \Cref{eq:1-pf-prop:sc-rel} must satisfy
$$
\frac{\mu}{e\eps}e^{\frac{-2\delta}{\mu}}= \Gamma\left(\frac{2\delta}{\mu}\right) >2\left(\frac{2\delta}{\mu}\right) e^{-1}e^{-\frac{2\delta}{\mu}} \Leftrightarrow
\frac{\mu^2}{4\eps}>  \delta,
$$
and, also,
$$
e^{\frac{2\delta}{\mu}+1} \Gamma\left(\frac{2\delta}{\mu}\right) \leq 2\left(\frac{2\delta}{\mu}+2\right) \Leftrightarrow
e^{\frac{2\delta}{\mu}+1} \left(\frac{\mu}{e\eps}e^{\frac{-2\delta}{\mu}}\right) \leq 2\left(\frac{2\delta}{\mu}+2\right)\Leftrightarrow
\frac{\mu}{4\eps} \leq \frac{\delta}{\mu}+1.
$$
This shows that $\delta(\mu,\eps) \in \left[ \frac{\mu^2}{4\eps}-\mu, \frac{ \mu^2}{4\eps}\right]$.

 Next, we show that $\mr\in \mathcal{M}(\eps)$. First, note by assumption  $\vsr$ is well defined.
  Now from  \Cref{prop:alloc-tra-vs} and since $\delta(\mu(\eps),\eps)$ solves \eqref{eq:1-pf-prop:sc-rel}, we know that $\vsr(\cdot)$
 is a best reporting function for $\mr$. Given this property, we can verify that $\mr$ is feasible. To see why \eqref{eq:IC} holds,
note that because $\vsr(\cdot)$ is a best reporting function we have that $\max_{w}\{v\cdot \xr(w)-\tr(w)\}
=v\cdot \xr(\vsr(v))-\tr(\vsr(v))$. This, together with the
envelope theorem (see \cref{eq:env-report}), implies that \eqref{eq:IC} can be cast as
\begin{equation*}
v\cdot \xr(v)-\tr(v)\geq \int_{0}^v \xr(\vsr(s))ds-\epsilon,\quad \forall v\in [0,\bv].
\end{equation*}
If $v\leq \ps-\delta$ then the left hand-side above is 0; while the right hand-side is $v\xr(0)-\eps$ which, by the boundary condition of $\xr(\cdot)$, is non-negative. If $v\geq \ps-\delta$ then the inequality above is binding.
That is, $\mr$ verifies \eqref{eq:IC}. The \eqref{eq:IR} constraint can be  verified in a similar fashion. This concludes the proof.

\end{proof}

\begin{proof}{\underline{\bfseries\sffamily Proof of \Cref{thm:low-bd-per}}}
We rely on the characterization for $\Pi(\mr)$ provided in \Cref{prop:low-bd-obj}.
We analyze the order of $\Pi(\mr)$.  We have
\begin{flalign*}
\Pi(\mr) &=
\int_{0}^{\vm}\all(v)\left(v-\frac{\overline{F}(v+\mu)}{f(v)}\dot{w}(v)\right)f(v)dv+R(\vm)+\int_{\vm}^{\vm+\mu}\overline{F}(v)dv\\
&=  \int_{0}^{\ps-\delta-\mu}\all(v)\left(v-\frac{\overline{F}(v+\mu)}{f(v)}\dot{w}(v)\right)f(v)dv + \int_{\ps-\delta-\mu}^{\vm}\all(v)\left(v-\frac{\overline{F}(v+\mu)}{f(v)}\dot{w}(v)\right)f(v)dv\\
&+R(\vm)+\int_{\vm}^{\vm+\mu}\overline{F}(v)dv\\
&=  \int_{0}^{\ps-\delta-\mu}\all(v)vf(v)dv + \int_{\ps-\delta-\mu}^{\vm}\all(v)\left(v-\frac{\overline{F}(v+\mu)}{f(v)}\right)f(v)dv
+R(\vm)+\int_{\vm}^{\vm+\mu}\overline{F}(v)dv\\
&\ge \underbrace{\int_{\ps-\delta-\mu}^{\ps+\delta} vx(v) f(v)dv- \int_{\ps-\delta-\mu}^{\ps+\delta}x(v)\bar{F}(v+\mu)dv}_{(A)}
+\underbrace{R(\ps+\delta)+\int_{\ps+\delta}^{\ps+\delta+\mu}\bar{F}(v)dv}_{(B)}.
\end{flalign*}
where we have used that $\dot{w}(v)=0$ for $v\leq \ps-\delta-\mu$ and $\dot{w}(v)$ in $[\ps-\delta-\mu,\vm]$.
Let us bound $(B)$ first.  We use \Cref{prop:feas}  with
$$
\mu(\eps)= K\cdot \eps^\beta,\quad \beta\in(0,1).
$$
Note that we need $\beta\in(0,1)$ so that $\mu >\eps \cdot e$ for all $\eps>0$ small and $\mu(\eps)/\eps \uparrow \infty$ as $\eps\downarrow0$. With this choice, we have
$$
\delta \leq \frac{ K^2}{4} \cdot \eps^{2\beta-1}.
$$
Note that the above further constraints $\beta$ such that $\beta>1/2$.
We have
\begin{flalign*}
(B)&\geq \bar{F}(\ps+\delta+\mu)\mu +R(\ps +\delta)\\
&\geq \bar{F}(\ps+\delta+\mu)\mu +\rs-\kappa_U\delta^\alpha\\
&\geq \bar{F}(\ps+\delta+\mu)\cdot K\cdot \eps^\beta +\rs-\kappa_U \left(\frac{K^2}{4}\right)^\alpha \eps^{(2\beta-1)\alpha}
\end{flalign*}
Note that in the second inequality above we are using that
\begin{equation}
R(\ps +\delta)-\rs = \int_{\ps}^{\ps+\delta}\dot{R}(v)dv \geq \int_{\ps}^{\ps+\delta} -\kappa_u\alpha(v-\ps)^{\alpha-1} dv =-\kappa_U\delta^\alpha,
\end{equation}
here we are using \Cref{def1} and $\delta>0$ small such that $\delta<\ell$. The latter is possible by considering $\eps>0$ small enough. We set $\beta=\alpha/(2\alpha-1)$ then
\begin{flalign*}
(B)&\geq \rs+ \bar{F}(\ps+\delta+\mu)\cdot K\cdot \eps^{\alpha/(2\alpha-1)} -\kappa_U \left(\frac{K^2}{4}\right)^\alpha \eps^{\alpha/(2\alpha-1)}\\
&= \rs+  \left(\bar{F}(\ps+\delta+\mu)\cdot K -\kappa_U \left(\frac{K^2}{4}\right)^\alpha \right)\cdot \eps^{\alpha/(2\alpha-1)}
\end{flalign*}
We must make sure that the parenthesis in the last expression above is strictly positive:
$$
\bar{F}(\ps+\delta+\mu) >\kappa_U\left(\frac{1}{4}\right)^\alpha K^{2\alpha-1},
$$
since $\alpha>1/2$, $\delta(\eps)+\mu(\eps)\downarrow 0$, and $\bar{F}(\ps)>0$,  we can always choose $K>0$ small enough such that the above is true.

Let us bound $(A)$. To do this, we use \Cref{def1} again.

\begin{flalign*}
(A) &= \int_{\ps-\delta-\mu}^{\ps+\delta} vx(v) f(v)dv- \int_{\ps-\delta-\mu}^{\ps+\delta}x(v)\bar{F}(v+\mu)dv\\
&\geq \int_{\ps-\delta-\mu}^{\ps+\delta} vx(v) f(v)dv- \int_{\ps-\delta-\mu}^{\ps+\delta}x(v)\bar{F}(v)dv\\
&= \int_{\ps-\delta-\mu}^{\ps+\delta} x(v)\left(v f(v)-\bar{F}(v)\right)dv\\
&= \int_{\ps-\delta-\mu}^{\ps+\delta} x(v)\left(-\dot{R}(v)\right)dv\\
&= -x(\ps-\delta)R(\ps-\delta) + x(\ps-\delta-\mu)R(\ps-\delta-\mu) + \int_{\ps-\delta-\mu}^{\ps+\delta} \dot{x}(v)R(v)dv\\
&= -R(\ps-\delta) + \frac{\eps}{\mu}R(\ps-\delta-\mu) + \int_{\ps-\delta-\mu}^{\ps+\delta} \dot{x}(v)R(v)dv\\
&\geq -\left(\rs -\kappa_L \delta^\alpha\right) + \frac{\eps}{\mu}\left(\rs -\kappa_U |\delta+\mu|^\alpha\right) + \int_{\ps-\delta-\mu}^{\ps+\delta} \dot{x}(v)\left(\rs -\kappa_U |v-\ps|^\alpha\right)dv\\
&\geq -\left(\rs -\kappa_L \delta^\alpha\right) + \frac{\eps}{\mu}\left(\rs -\kappa_U (\delta+\mu)^\alpha\right) + \left(\rs -\kappa_U (\delta+\mu)^\alpha\right)\int_{\ps-\delta-\mu}^{\ps+\delta} \dot{x}(v)dv\\
&= -\left(\rs -\kappa_L \delta^\alpha\right) + \frac{\eps}{\mu}\left(\rs -\kappa_U (\delta+\mu)^\alpha\right) + \left(\rs -\kappa_U (\delta+\mu)^\alpha\right)\left(1-\frac{\eps}{\mu}\right)\\
& = \kappa_L\delta^\alpha-\kappa_U (\delta+\mu)^\alpha\\
&\geq -\kappa_U (\delta+\mu)^\alpha\\
&\stackrel{(a)}{\geq} -\kappa_U (2\delta)^\alpha\\
&\geq -\kappa_U \left(2 \frac{K^2}{4} \cdot \eps^{2\beta-1}\right)^\alpha\\
&\geq -\kappa_U \left(2 \frac{K^2}{4} \cdot \right)^\alpha \eps^{\alpha/(2\alpha-1)}
\end{flalign*}
In (a) we use that $\delta>\mu$ when $\lim_{\eps\downarrow0}\mu(\eps)/\eps=+\infty$ because
$\lim_{\eps\downarrow 0}\delta(\mu(\eps),\eps)/\mu(\eps)=+\infty$ which is true because $\frac{\mu}{4\eps} \leq \frac{\delta}{\mu}+1 $, see
\Cref{prop:feas}. This, together with the bound we have for (B), yields
$$
\Pi \geq
 \rs+  \left(\bar{F}(\ps+\delta+\mu)\cdot K -\kappa_U \left(\frac{K^2}{4}\right)^\alpha -\kappa_U \left(2 \frac{K^2}{4} \cdot \right)^\alpha\right)\cdot \eps^{\alpha/(2\alpha-1)}.
$$
The parenthesis can be guaranteed to be positive by using a the same reasoning as before. This concludes the proof.

\end{proof}

\subsection{Auxiliary results for Section \ref{sec:low-bound-perf}}\label{sec:aux-sec4}
\begin{lemma}\label{prop:x-mon}
Let $\all(\cdot)$ be a solution to \eqref{eq:ode-1}-\eqref{eq:dde-1}. Then $\all(\cdot)$ is non-decreasing in $[0,\ps+\delta]$.
\end{lemma}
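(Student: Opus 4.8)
The plan is to treat separately the two regimes of the system: the interval $[0,\vz]$, where $\xr$ solves the ordinary differential equation \eqref{eq:ode-1}, and $[\vz,\vm]$, where it solves the delayed equation \eqref{eq:dde-1}. On $[0,\vz]$ monotonicity is immediate. Equation \eqref{eq:ode-1} is the scalar linear ODE $\dxr(v)=\xr(v)/(\w(v)-v)$ with $\w(v)-v>0$ throughout and $\xr(0)=\eps/(\ps-\delta)>0$, so integrating gives $\xr(v)=\xr(0)\exp\!\big(\int_0^v \mathrm{d}s/(\w(s)-s)\big)>0$ and hence $\dxr(v)=\xr(v)/(\w(v)-v)>0$; thus $\xr$ is strictly increasing, and positive, on $[0,\vz]$. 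In particular $\xr$ is positive on $[\vz-\mu,\vz]$, and there, since $\w(v)=v+\mu$, one has $\dxr(v)=\xr(v)/\mu$, so $\xr$ grows by the factor $e$ over this length-$\mu$ interval---a fact that will matter below.

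On $[\vz,\vm]$ we have $\w(v)=v+\mu$, so \eqref{eq:dde-1} reads $\mu\,\dxr(v)=\xr(v)-\xr(v-\mu)$; consequently $\xr$ is non-decreasing on $[\vz,\vm]$ if and only if $\xr(v)\ge\xr(v-\mu)$ for all $v\in[\vz,\vm]$. I would prove this by the method of steps, proceeding over the unit-length sub-intervals $[\vz+(k-1)\mu,\min\{\vz+k\mu,\vm\}]$, $k\ge 1$, using the explicit solution produced in the proof of \Cref{prop:well-posed} (whose existence-and-uniqueness part does not rely on monotonicity). Passing to the transformed variable $\yr(v)=\xr(v)\cdot\tfrac{\mu}{\eps}\exp(-(v-\ps+\delta+\mu)/\mu)$, which differs from $\xr$ by a strictly positive factor, the inequality $\xr(v)\ge\xr(v-\mu)$ becomes $\yr(v)\ge e^{-1}\yr(v-\mu)$, and $\yr$ solves $\dyr(v)=-(e^{-1}/\mu)\,\yr(v-\mu)$. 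On the first step $v-\mu\in[\vz-\mu,\vz]$, where $\yr\equiv 1$, so $\yr(v)=1-e^{-1}(v-\ps+\delta)/\mu\ge 1-e^{-1}>e^{-1}$ and the inequality holds with room to spare; for the later steps I would propagate $\yr(v)\ge e^{-1}\yr(v-\mu)$ along the steps using the closed-form polynomial pieces of $\yr$ together with the two-sided estimates on the underlying exponential-type series supplied by \Cref{lem:new-rev-bound-on-gamma}, which control the ratio $\yr(v)/\yr(v-\mu)$ tightly enough.

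The step I expect to be the main obstacle is exactly this last one, because the inequality $\xr(v)\ge\xr(v-\mu)$ on the delayed region is \emph{asymptotically tight}: to the right of $\vz$ the solution decays like $e^{-v/\mu}$, so $\xr(v)/\xr(v-\mu)\to 1$ from above, and a soft comparison argument will not suffice. Concretely, if one sets $\phi(v)=\xr(v)-\xr(v-\mu)$, then $\mu\,\dot\phi(v)=\phi(v)-\mu\,\dxr(v-\mu)$ with $\dxr(v-\mu)\ge 0$ available from the induction hypothesis, i.e.\ $\dot\phi(v)\le\phi(v)/\mu$---a differential inequality pointing the wrong way for concluding $\phi\ge 0$. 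What rescues the argument is the exponential growth of $\xr$ on the initial history interval $[\vz-\mu,\vz]$ (the factor $e$ noted above): this makes $\xr(v)$ start each step comfortably above $\xr(v-\mu)$, and the work of the proof is to carry this quantitative slack through the induction by means of the closed form and the series bounds rather than a pure ODE/DDE comparison.
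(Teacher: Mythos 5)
Your proposal is correct in outline, but it takes a genuinely different route from the paper, and the detour into the $\phi$-based difficulty somewhat misdiagnoses the paper's actual argument. Your plan for $[\vz,\vm]$ is to use the closed-form series solution $\yr(v)=\Gamma\big((v-\vz)/\mu\big)$ together with the two-sided bounds from \Cref{lem:new-rev-bound-on-gamma}, and this does close: for $t\in[0,1]$ one has $\Gamma(t)=1-e^{-1}t\ge 1-e^{-1}>e^{-1}$, and for $t\ge 1$ the lemma gives $\Gamma(t)>2(t+1)e^{-1}e^{-t}$ and $\Gamma(t-1)\le 2(t+1)e^{-t}$, hence $e^{-1}\Gamma(t-1)\le 2(t+1)e^{-1}e^{-t}<\Gamma(t)$. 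So the quantitative step you flagged as the obstacle is already supplied by the lemma, and — notably — once you use the lemma, no induction is needed at all. (Small quibble: it is $\yr$, not $\xr$, that decays like $e^{-v/\mu}$; $\xr$ grows roughly linearly there, which is why the ratio $\xr(v)/\xr(v-\mu)$ approaches $1$ from above.)

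The paper avoids the closed form and the renewal-theoretic bounds entirely. It inducts over the steps and, assuming $\xr$ is non-decreasing on $[\vz+(n-1)\mu,\vz+n\mu]$, it compares the delayed ODE $\dot \all(v)=(\all(v)-\all_0(v-\mu))/\mu$ on $[\vz+n\mu,\vz+(n+1)\mu]$ not against $\all(v-\mu)$ directly (your $\phi$ comparison, which indeed points the wrong way), but against the \emph{constant} solution $r(v)\equiv\all_0(\vz+n\mu)$ of the ODE $\dot r=(r-\all_0(\vz+n\mu))/\mu$ with $r(\vz+n\mu)=\all_0(\vz+n\mu)$. Since $\all_0(v-\mu)\le\all_0(\vz+n\mu)$ by the induction hypothesis, the right-hand side of the delayed equation dominates pointwise, so a standard ODE comparison gives $\all(v)\ge\all_0(\vz+n\mu)\ge\all_0(v-\mu)$, hence $\dot\all(v)\ge 0$. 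The extra idea that rescues a comparison argument here is precisely the choice of a \emph{constant} comparator anchored at the left endpoint, rather than a moving comparator $\all(v-\mu)$; once you see that, the argument is elementary and does not require any knowledge of the series solution or of \Cref{lem:new-rev-bound-on-gamma}. Your approach trades this for explicit computation, which is fine and self-contained, but you should actually carry out the $\Gamma(t)\ge e^{-1}\Gamma(t-1)$ verification (as above) and drop the $\phi$ discussion, since it does not correspond to the comparison one would actually run.
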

\begin{proof}{\underline{\bfseries\sffamily Proof of \Cref{prop:x-mon}}}

To see why $\all(\cdot)$ is non-decreasing we show that if $\all(\cdot)$ is non-decreasing in $[\vz+(n-1)\cdot \mu,\vz+n\cdot \mu]$ then it is also non-decreasing in $[\vz+n\cdot \mu,\vz+(n+1)\cdot \mu]$ for all $n\ge 0$. Indeed, fix $n$ and  denote by $\all_0$ to $\all(\cdot)$ in
$[\vz+(n-1)\cdot \mu,\vz+n\cdot \mu]$. Suppose that $\all_0$ in non-decreasing.
 Now consider $v\in[\vz+n\cdot \mu,\vz+(n+1)\cdot \mu]$, $\all(\cdot)$ must satisfy the following equation
\begin{equation}\label{eq:altern-ode-0}
\all(v)=\all_0(\vz+n\cdot \mu),\quad \dot{\all}(v)= \frac{\all(v)-\all_0(v-\mu)}{\mu},\quad \forall v\in [\vz+n\cdot \mu,\vz+(n+1)\cdot \mu].
\end{equation}
Consider the alternative ordinary differential equation
\begin{equation}\label{eq:altern-ode-1}
r(v)=\all_0(\vz+n\cdot \mu),\quad \dot{r}(v)= \frac{r(v)-\all_0(\vz+n\cdot \mu)}{\mu},\quad \forall v\in [\vz+n\cdot \mu,\vz+(n+1)\cdot \mu].
\end{equation}
Let $f(v,x)$ and $g(v,r)$ be the right-hand side of \cref{eq:altern-ode-0} and \cref{eq:altern-ode-1}, respectively. Note because $\all_0(\cdot)$ is non-decresing in $[\vz+(n-1)\cdot \mu,\vz+n\cdot \mu]$ we have that
$f(v,x)\geq g(v,x)$. In turn, a standard comparison argument for ordinary differential equations implies that
$\all(v)\geq r(v)$ for all $v\in [\vz +n\cdot \mu,\vz+(n+1)\cdot \mu].$ Since the solution to \cref{eq:altern-ode-1} is $r(v)=\all_0(\vz+n\cdot \mu)$, we can conclude that $\all(v)\ge\all_0(\vz+n\cdot \mu)$ for all $v\in[\vz +n\cdot \mu,\vz+(n+1)\cdot \mu].$ Therefore, from
\cref{eq:altern-ode-0} we deduce that $\mu\dot{\all}(v)\geq \all_0(\vz+n\cdot \mu)-\all_0(v-\mu)\geq 0$
 for all $v\in[\vz +n\cdot \mu,\vz+(n+1)\cdot \mu].$ That is, $\all(\cdot)$ is non-decreasing in $\in [\vz +n\cdot \mu,\vz+(n+1)\cdot \mu]$.
 To conclude, note that in $[\vz-\mu,\vz]$ $\all(\cdot)$ is the solution to an ordinary differential equation for which it can be easily verified that the solution in strictly increasing.
\end{proof}

\begin{lemma}\label{lem:new-rev-bound-on-gamma}
Let $\Gamma(t)$ be defined by
$$
\Gamma(t) \triangleq \sum_{j=0}^{\floor*{t} + 1} \frac{(-1)^je^{-j}}{j!}\left(t+1-j\right)^j,\quad t\geq 0.
$$
Then,
\begin{equation*}
2(t+1) e^{-1}e^{-t}< \Gamma(t) \leq 2(t+2)e^{-(t+1)}, \quad \forall t\geq 0.
\end{equation*}

\end{lemma}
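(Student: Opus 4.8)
The plan is to recognize that $\Gamma$ is, up to an affine change of variable, the solution of the delay differential equation already solved in the proof of \Cref{prop:well-posed}: writing $t = (v-\ps+\delta)/\mu$ one has $\Gamma(t) = \yr(\mu t + \ps-\delta)$, and since $\yr$ satisfies $\dyr(v) + (e^{-1}/\mu)\,\yr(v-\mu) = 0$ with $\yr \equiv 1$ on $[\ps-\delta-\mu,\ps-\delta]$, it follows that
\begin{equation*}
\Gamma'(t) = -e^{-1}\,\Gamma(t-1),\quad t \ge 0, \qquad \Gamma \equiv 1 \text{ on } [-1,0].
\end{equation*}
(This can also be checked directly from the sum: differentiating term by term on each interval $[n,n+1]$ reproduces exactly $-e^{-1}$ times the sum defining $\Gamma(t-1)$, and the formula is continuous at every integer because the would-be extra term is a multiple of $0^{n+2}=0$.) I would then pass to $G(t)\triangleq e^{1+t}\Gamma(t)$, so that the claim becomes $2(t+1) < G(t) \le 2(t+2)$ for $t\ge 0$. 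One computes $G'(t) = G(t) - G(t-1)$ on $t\ge 0$; integrating over $[0,T]$ and using $G(0)=e$ and $\int_{-1}^0 G(s)\,ds = e-1$ (both from $\Gamma\equiv 1$) gives the clean identity
\begin{equation*}
G(T) = 1 + \int_{T-1}^{T} G(s)\,ds, \qquad T\ge 0 .
\end{equation*}
I would emphasize this identity at the outset: a naive induction over the intervals $[n,n+1]$ with Gronwall-type estimates loses an additive constant at each step and fails to close, whereas the conservation-law flavor of the displayed identity is precisely what makes the argument work.

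With the identity in hand, the bounds follow from a first-exit-time argument. On $[-1,0]$ we have $G(t)=e^{1+t}$, and convexity of $\xi\mapsto e^{\xi}$ gives $2(t+1) < e^{1+t} < 2(t+2)$ on $[-1,0]$ (both strict, with slack). Let $T^\star = \sup\{\,T\ge -1 : 2(s+1) < G(s) < 2(s+2)\ \text{for all } s\in[-1,T]\,\}$; by continuity of $G$ we have $T^\star>0$. Suppose $T^\star<\infty$. Then the strict bounds hold on $[-1,T^\star)$, hence on $[T^\star-1,T^\star)$, while at $s=T^\star$ one of them is an equality. Substituting these strict bounds into the identity at $T=T^\star$, together with $\int_a^{a+1}(2s+2)\,ds = 2a+3$ and $\int_a^{a+1}(2s+4)\,ds = 2a+5$ evaluated at $a=T^\star-1$, yields
\begin{equation*}
2T^\star + 2 \;<\; G(T^\star) \;=\; 1 + \int_{T^\star-1}^{T^\star} G(s)\,ds \;<\; 2T^\star + 4 ,
\end{equation*}
which contradicts the equality at $T^\star$. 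Hence $T^\star=\infty$, so $2(t+1) < G(t) < 2(t+2)$ for all $t\ge -1$.

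Dividing by $e^{1+t}$ and restricting to $t\ge 0$ gives $2(t+1)e^{-1}e^{-t} < \Gamma(t) < 2(t+2)e^{-(t+1)}$, which establishes the lemma (with the upper inequality in fact strict). The only steps that require genuine care are the passage from $G'(t)=G(t)-G(t-1)$ to the integral identity — which amounts to bookkeeping of the shifted integral $\int_0^T G(s-1)\,ds$, in particular when $T<1$ so that it reaches into the interval $[-1,0]$ where $\Gamma\equiv 1$ — and the elementary but load-bearing estimates $2(t+1)<e^{1+t}<2(t+2)$ on the base interval $[-1,0]$; the first-exit argument itself is then immediate.
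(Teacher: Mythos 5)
Your proof is correct (and in fact establishes the upper bound with strict inequality). Let me verify the load-bearing steps and then compare with the paper.

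The DDE $\Gamma'(t) = -e^{-1}\Gamma(t-1)$ with $\Gamma\equiv 1$ on $[-1,0]$ follows either from the change of variables you describe or directly from the defining sum, as you note; passing to $G(t)=e^{1+t}\Gamma(t)$ gives $G'(t)=G(t)-G(t-1)$; integrating from $0$ to $T$ and telescoping the shifted integral against $G(0)=e$ and $\int_{-1}^0 G = e-1$ yields $G(T)=1+\int_{T-1}^T G$. The base estimates $2(u) < e^u < 2(u+1)$ on $u\in[0,1]$ hold with slack (the lower-bound gap is $2(1-\ln 2)$ at its minimum, the upper-bound gap is at least $1$), and the first-exit argument closes because $\int_{T^\star-1}^{T^\star} 2(s+1)\,ds = 2T^\star+1$ and $\int_{T^\star-1}^{T^\star} 2(s+2)\,ds = 2T^\star+3$, so the identity forces $2(T^\star+1)<G(T^\star)<2(T^\star+2)$, contradicting the choice of $T^\star$. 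Everything checks.

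Your route is genuinely different in presentation from the paper's, though the two are structurally kin. The paper identifies $e^{t+1}\Gamma(t)-1$ as the renewal function $m(t+1)$ of a renewal process with i.i.d.\ $\mathrm{Uniform}[0,1]$ interarrivals and then invokes Wald's equation twice ($Z_{N(t)+1}>t$ and $Z_{N(t)+1}\le t+1$) to obtain the bounds. Your integral identity $G(T)=1+\int_{T-1}^T G$ is precisely the renewal equation for uniform interarrivals written for $G=m(\cdot+1)+1$, and the first-exit argument delivers exactly what Wald's bounds give. What you buy is a self-contained, purely analytic proof that requires no probabilistic scaffolding and explains via continuity alone why the bounds propagate; what the paper's route buys is an explanation of \emph{why} bounds of this linear-over-exponential shape should be expected in the first place, since they are the elementary $t < \tfrac12(m(t)+1) \le t+1$ estimates for a renewal function. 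Either proof would serve; yours is a clean elementary alternative.

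One small presentational nit: attributing the base estimates solely to "convexity of $e^\xi$" undersells the work — for $e^u>2u$ you are minimizing a convex difference and checking the critical value $2(1-\ln 2)>0$, while for $e^u<2(u+1)$ you are using concavity of the difference to push the max to the endpoints. Both are immediate, but it is worth stating which direction convexity is being used in each case since they are opposite.
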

\begin{proof}{\underline{\bfseries\sffamily Proof of \Cref{lem:new-rev-bound-on-gamma}}}
To prove the bounds we establish a connection between $\Gamma(x)$ and renewal theory. We will show that for an appropriate renewal function $m(t)$, we have that $m(t+1) = e^{t+1}\Gamma(t) - 1$. Then we will use natural bounds on $m(t)$ to derive bounds for $\Gamma(t)$.

Consider the uniform distribution with support in $[0,1]$ with cdf denoted by $G(t)$. That is, we have a sequence of independent random variables $\{X_i\}_{i\ge 1}$ with uniform distribution with support in $[0,1]$ that represent the time between the $(i-1)$st and the $i$th events. Let
 $$
 Z_0=0,\quad Z_k = \sum_{i=1}^k X_i,\quad k\ge1,
 $$
 that is, $Z_k$ is the time of the $k$th event. Then, the number of events up to time $t$ and the renewal function are given by
 $$
 N(t) = \sup\{k: \: Z_k\leq t\},\quad \text{and} \quad m(t) = \E[N(t)].
 $$
respectively. We will prove that  for all $n\geq 0$
\begin{equation}\label{eq:renewal_1}
m(t) = \sum_{k=0}^n (-1)^ke^{t-k}(t-k)^k/k! -1,\quad \text{for}\: n\leq t\leq n+1.
\end{equation}
The renewal function can be written as (see, e.g., Proposition 3.2.1 in \citealt{ross1996stochastic})
$$
m(t)= \sum_{\ell=1}^\infty \PP(Z_\ell\leq t).
$$
Because $Z_\ell$ is the sum of $\ell$ independent uniform distributions, it follows that $Z_\ell$ has an Irwin-Hall distribution given by
$$
 \PP(Z_\ell\leq t) = \twopartdef{\frac{1}{\ell!} \sum_{k=0}^{\floor{t}} (-1)^k \binom{\ell}{k} (t-k)^\ell}{t<\ell}{1}{t\geq\ell.}
$$
So for $n\leq t\leq n+1$,
\begin{flalign*}
m(t) &= \sum_{\ell=1}^\infty \PP(Z_\ell\leq t)\\
&= \sum_{\ell=1}^n 1 + \sum_{\ell=n+1}^\infty  \frac{1}{\ell!} \sum_{k=0}^{\floor{t}} (-1)^k \binom{\ell}{k} (t-k)^\ell\\
&= n + \sum_{\ell=n+1}^\infty  \frac{1}{\ell!} \sum_{k=0}^{n} (-1)^k \binom{\ell}{k} (t-k)^\ell\\
&= n + \sum_{k=0}^{n} \frac{(-1)^k}{k!}\sum_{\ell=n+1}^\infty  \frac{1}{(\ell-k)!}   (t-k)^\ell\\
&= n + \sum_{k=0}^{n} \frac{(-1)^k(t-k)^{k}}{k!}\sum_{\ell=n+1-k}^\infty  \frac{1}{\ell!}   (t-k)^{\ell}\\
&= n + \sum_{k=0}^{n} \frac{(-1)^k(t-k)^{k}}{k!}\left(e^{t-k}- \sum_{\ell=0}^{n-k}  \frac{1}{\ell!}   (t-k)^{\ell}\right)\\
&= n + \sum_{k=0}^{n} \frac{(-1)^k(t-k)^{k}}{k!}e^{t-k}
- \sum_{k=0}^{n} \sum_{\ell=0}^{n-k}  \frac{(-1)^k(t-k)^{k}}{k!} \frac{(t-k)^{\ell}}{\ell!} \\
&=  \sum_{k=0}^{n} \frac{(-1)^k(t-k)^{k}}{k!}e^{t-k} -1,
\end{flalign*}
where the second equation follows from the Irwin-Hall distribution, the third because $n\leq t\leq n+1$, the fourth from exchanging the order of summation, the sixth from the exponential series, and the last equality from the identity (which can be verified by induction)
$$
 \sum_{k=0}^{n} \sum_{\ell=0}^{n-k}  \frac{(-1)^k(t-k)^{k}}{k!} \frac{(t-k)^{\ell}}{\ell!}=n+1 ,\quad \forall n>0,t.
$$
This proves  \Cref{eq:renewal_1}. Evaluating the renewal function at $t+1$ gives
\begin{align}\label{eq:renewal_x}
m(t+1) &=  \sum_{k=0}^{\floor{t+1}} (-1)^ke^{t+1-k}(t+1-k)^k/k! -1
= \sum_{k=0}^{\floor{t}+1} (-1)^ke^{t+1-k}(t+1-k)^k/k! -1 \nonumber\\
&= e^{t+1} \Gamma(t)-1.
\end{align}
Now we obtain bounds on $m(t)$. Note that $Z_{N(t)+1}>t$ and that $N(t)+1$ is a stopping time, hence Wald's equation yields
$$
\frac{1}{2} (m(t)+1) >t.
$$
Hence, from \eqref{eq:renewal_x} using $t+1$ we have that
\begin{equation}\label{eq:renewal_x2}
\Gamma(t) >2(t+1) e^{-1}e^{-t}>2t e^{-1}e^{-t}
\end{equation}

Next, we prove the upper bound  for $\Gamma(t)$. Note that $Z_{N(t)+1}\leq t +1$ because the time between renewals are uniformly distributed in [0,1].  Hence,  by Wald's equation we have that
$$
\frac{1}{2}(m(t)+1) \leq t+1.
$$
Using $t+1$ instead of $t$ in the above we deduce that
$$
e^{t+1} \Gamma(t) \leq 2(t+2).
$$

\end{proof}

\begin{proof}{\underline{\bfseries\sffamily Proof of \Cref{prop:low-bd-obj}}}
We exploit the characterization of $\mr=(\xr,\tr)$ given in \cref{prop:alloc-tra-vs}. For ease of exposition we
use $(\all,\tra)$ instead of $(\xr,\tr)$.  We have
\begin{flalign*}
\Pi(\mr) &=\int_{0}^{\bv}\tr(v)f(v)dv\\
&=
\int_{0}^{\bv}v\all(v)f(v)dv+ \eps\overline{F}(\vz)-\int_{\vz}^{\bv}\int_{0}^{v}\all(\vsr(s))f(v)dsdv \\
&=\int_{0}^{\bv}v\all(v)f(v)dv+ \eps\overline{F}(\vz)
-\int_{0}^{\vz}\int_{\vz}^{\bv}\all(\vsr(s))f(v)dvds\\
&-\int_{\vz}^{\bv}\int_{s}^{\bv}\all(\vsr(s))f(v)dvds\\
&=\int_{0}^{\bv}v\all(v)f(v)dv+ \eps\overline{F}(\vz)
-(\vz)\all(0)\overline{F}(\vz)
-\int_{\vz}^{\bv}\all(\vsr(v))\overline{F}(v)dv\\
&\stackrel{(a)}{=}\int_{0}^{\bv}v\all(v)f(v)dv
-\int_{\vz}^{\vm+\mu}\all(v-\mu)\overline{F}(v)dv-\all(\vm)\int_{\vm+\mu}^{\bv}\overline{F}(v)dv\\
&=\int_{0}^{\bv}v\all(v)f(v)dv
-\int_{\vz-\mu}^{\vm}\all(v)\overline{F}(v+\mu)dv-\int_{\vm+\mu}^{\bv}\overline{F}(v)dv\\
&\stackrel{(b)}{=}\int_{0}^{\vm}\all(v)\left(v-\frac{\overline{F}(v+\mu)}{f(v)}\dot{w}(v)\right)f(v)dv+\int_{\vm}^{\bv}v\all(v)f(v)dv-\int_{\vm+\mu}^{\bv}\overline{F}(v)dv\\
&=\int_{0}^{\vm}\all(v)\left(v-\frac{\overline{F}(v+\mu)}{f(v)}\dot{w}(v)\right)f(v)dv
+\int_{\vm}^{\bv}(vf(v)-\overline{F}(v))dv +\int_{\vm}^{\vm+\mu}\overline{F}(v)dv\\
&\stackrel{(c)}{=}\int_{0}^{\vm}\all(v)\left(v-\frac{\overline{F}(v+\mu)}{f(v)}\dot{w}(v)\right)f(v)dv+R(\vm)+\int_{\vm}^{\vm+\mu}\overline{F}(v)dv,
\end{flalign*}
where $(a)$ holds because $\all(0)=\eps/(\ps-\delta)$;
$(b)$ holds because $\dot{\w}(v)=0$ if $v<\vz-\mu$ and $\dot{\w}(v)=1$ if $v\in(\vz-\mu,\vm)$; and $(c)$ holds because the derivative of $v\overline{F}(v)$ equals $\overline{F}(v)-vf(v)$. This concludes the proof.
\end{proof}

\section{Auxiliary Results }\label{sec:app_rev_aux-11}

\begin{lemma}\label{lem:app_rev_aux-11-1}
Suppose that $R(v)$ is locally concave around $\ps$ and let $\alpha \in (1,+\infty)$. Then
 there exists positive constants $\tilde{\kappa}_L,\tilde{\kappa}_U$ and a neighborhood of $\ps$, $\mathcal{N}_{\tilde{\ell}}=(\ps-\tilde{\ell},\ps+\tilde{\ell})\subset (0,\bv)$, such that
\begin{equation}\label{eq:def1-equiv-1}
\tilde{\kappa}_L\cdot |v - \ps|^\alpha \le R(\ps) - R(v) \le \tilde{\kappa}_U \cdot|v - \ps|^\alpha, \quad \forall v\in  \mathcal{N}_{\tilde{\ell}}
\end{equation}
if and only if there exists positive constants $\kappa_L,\kappa_U$ and a neighborhood of $\ps$, $\mathcal{N}_\ell=(\ps-\ell,\ps+\ell)\subset (0,\bv)$, such that
\begin{equation}\label{eq:def1-equiv-2}
\kappa_L\alpha\cdot |v-\ps|^\alpha \leq (\ps-v)\cdot \dot{R}(v)\leq \kappa_U\alpha\cdot |v-\ps|^\alpha, \quad \forall v\in \mathcal{N}_\ell.
\end{equation}

\end{lemma}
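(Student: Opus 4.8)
The plan is to prove the two implications separately. The direction \eqref{eq:def1-equiv-2} $\Rightarrow$ \eqref{eq:def1-equiv-1} is elementary and does not even use concavity. Note first that \eqref{eq:def1-equiv-2} forces $(\ps-v)$ and $\dot R(v)$ to have the same sign for $v\neq\ps$ in $\mathcal N_\ell$, since the left-hand side is strictly positive; hence $(\ps-v)\dot R(v)=|v-\ps|\,|\dot R(v)|$, so that $\kappa_L\alpha|v-\ps|^{\alpha-1}\le |\dot R(v)|\le \kappa_U\alpha|v-\ps|^{\alpha-1}$ a.e.\ on $\mathcal N_\ell$. Writing $R(\ps)-R(v)=\int_{\min(v,\ps)}^{\max(v,\ps)}|\dot R(s)|\,ds$ (valid because $\dot R$ does not change sign on $\mathcal N_\ell$ and, by Assumption~\ref{assumption1}, $R$ is absolutely continuous) and integrating the pointwise bounds gives $\kappa_L|v-\ps|^\alpha\le R(\ps)-R(v)\le \kappa_U|v-\ps|^\alpha$, i.e.\ \eqref{eq:def1-equiv-1} with $\tilde\kappa_L=\kappa_L$, $\tilde\kappa_U=\kappa_U$, $\tilde\ell=\ell$.

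The substantive direction is \eqref{eq:def1-equiv-1} $\Rightarrow$ \eqref{eq:def1-equiv-2}, and here I would use that a concave $R$ satisfies the chordal inequalities $\dot R(a)\ge \frac{R(b)-R(a)}{b-a}\ge \dot R(b)$ for $a<b$ in the region of concavity. Shrink $\mathcal N_{\tilde\ell}$ if needed so that both \eqref{eq:def1-equiv-1} and concavity hold on it. Fix $v>\ps$ in a smaller neighborhood, set $t:=v-\ps$ and $h:=c\,t$ for a constant $c\in(0,1)$ to be chosen; then $v-h,v,v+h\in\mathcal N_{\tilde\ell}$ provided $v<\ps+\tilde\ell/(1+c)$. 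Concavity gives $\frac{R(v+h)-R(v)}{h}\le \dot R(v)\le \frac{R(v)-R(v-h)}{h}$. Rewriting each $R(v\pm h)-R(v)$ as $[R(\ps)-R(v)]-[R(\ps)-R(v\pm h)]$, inserting the two-sided bound \eqref{eq:def1-equiv-1}, and dividing by $h=ct$ yields
\[
\frac{\tilde\kappa_L-\tilde\kappa_U(1-c)^\alpha}{c}\,t^{\alpha-1}\;\le\;-\dot R(v)\;\le\;\frac{\tilde\kappa_U(1+c)^\alpha-\tilde\kappa_L}{c}\,t^{\alpha-1}.
\]
Multiplying by $t=(v-\ps)$ and using $(\ps-v)\dot R(v)=t\,(-\dot R(v))$ (recall $\dot R(v)\le 0$ for $v>\ps$) gives \eqref{eq:def1-equiv-2} for $v>\ps$; the case $v<\ps$ is symmetric, using $\dot R(v-h)\ge \frac{R(v)-R(v-h)}{h}\ge \dot R(v)$ and $(\ps-v)\dot R(v)\ge0$. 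It then remains to check the constants are admissible: the upper one $\tilde\kappa_U(1+c)^\alpha-\tilde\kappa_L$ is positive for every $c\in(0,1)$ since $\tilde\kappa_U(1+c)^\alpha>\tilde\kappa_U\ge\tilde\kappa_L$, while the lower one $\tilde\kappa_L-\tilde\kappa_U(1-c)^\alpha$ is positive exactly when $(1-c)^\alpha<\tilde\kappa_L/\tilde\kappa_U$, i.e.\ when $c$ is taken in $\big(1-(\tilde\kappa_L/\tilde\kappa_U)^{1/\alpha},\,1\big)$, a nonempty interval because $\tilde\kappa_L\le\tilde\kappa_U$. Fixing such a $c$, one sets $\kappa_L$ and $\kappa_U$ to these two constants divided by $\alpha$, and $\ell=\tilde\ell/(1+c)$.

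The main obstacle is this last piece of bookkeeping in the second direction: one must verify that the slack between $\tilde\kappa_L$ and $\tilde\kappa_U$ is large enough to keep the lower constant strictly positive after the division by $h$, which is precisely where concavity (through one-sided comparisons at nearby points) is essential and is the reason the equivalence holds only up to shrinking the neighborhood and changing the constants. A minor technical caveat is that for a concave $R$ the derivative $\dot R$ need only exist almost everywhere (or as one-sided derivatives); under Assumption~\ref{assumption1} $R$ is differentiable a.e., so all statements are read a.e., and the chordal inequalities above can be applied with the right-derivative whenever a pointwise conclusion is desired.
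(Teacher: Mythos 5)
Your proof is correct, and the direction \eqref{eq:def1-equiv-1} $\Rightarrow$ \eqref{eq:def1-equiv-2} is handled by a genuinely different argument from the paper's. The paper proves the two inequalities in \eqref{eq:def1-equiv-2} by two separate devices: the lower bound $\tilde\kappa_L|v-\ps|^\alpha\le(\ps-v)\dot R(v)$ falls out of the single chord from $v$ to $\ps$ together with monotonicity of $\dot R$ (i.e.\ $\int_v^{\ps}\dot R(s)\,ds\le\dot R(v)(\ps-v)$), while the upper bound is obtained from the subgradient inequality $R(z)\le R(v)+\dot R(v)(z-v)$ with the reference point $z$ chosen optimally at $z^\star=(\alpha v-\ps)/(\alpha-1)$, producing the constant $\alpha e\tilde\kappa_U$. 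You instead take a symmetric two-sided chord comparison at $v\pm h$ with $h=ct$ and a single free parameter $c$, which yields both bounds at once; the price is that you must restrict $c$ to the interval $(1-(\tilde\kappa_L/\tilde\kappa_U)^{1/\alpha},1)$ to keep the lower constant strictly positive, a constraint the paper sidesteps by not coupling the two bounds. Both are clean: the paper's lower-bound argument is slightly shorter and its optimization produces an explicit, $c$-free constant, whereas your symmetric scheme is conceptually uniform and makes the role of local concavity transparent in a single inequality. Your remark that the \eqref{eq:def1-equiv-2} $\Rightarrow$ \eqref{eq:def1-equiv-1} direction does not use concavity is a nice observation; the paper simply says ``integrate'' without noting this.
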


\begin{proof}{\underline{\bfseries\sffamily Proof of \Cref{lem:app_rev_aux-11-1}}}
We show that under local concavity \Cref{eq:def1-equiv-1} is equivalent to \Cref{eq:def1-equiv-2}. To prove that \Cref{eq:def1-equiv-2} implies \Cref{eq:def1-equiv-1} we can simply consider the cases $v<\ps$ and $v>\ps$ and then integrate \Cref{eq:def1-equiv-2}. The latter immediately delivers \Cref{eq:def1-equiv-1}. To show  that \Cref{eq:def1-equiv-1} implies \Cref{eq:def1-equiv-2}, we must use the local concavity assumption. Without loss of generality we will assume that all the calculation below are performed in the neighborhood where  $R(v)$ is  concave, $\mathcal{N}_c$, intersected with the neighborhood where \Cref{eq:def1-equiv-1} holds.
We have that
$$
\tilde{\kappa}_L\cdot |v-\ps|^\alpha\leq R(p)- R(v) =\int_{v}^p \dot{R}(s)ds \leq \dot{R}(v)(p-v),
$$
where the last inequality uses that $\dot{R}(v)$ is non-increasing by concavity of $R(v)$. Now we establish the upper bound in \Cref{eq:def1-equiv-2}. We consider the case $v<\ps$ (the case $v>\ps$ is symmetric). Since $R(v)$ is locally concave, we have that:
$$
R(x) \leq R(y) + \dot{R}(y) (x-y),\quad \forall x,y \in \mathcal{N}_c.
$$
Consider $z\leq v$, use the previous inequality with $x= z$ and $y = v$, and also use that $R(z) \geq R(\ps) - \tilde{\kappa}_U (\ps-z)^\alpha$ to obtain:
$$
\dot{R}(v) \leq  \frac{R(v)-R(z)}{v-z}\leq  \frac{R(v)-R(\ps) + \tilde{\kappa}_U (\ps-z)^\alpha}{v-z}\leq  \frac{\tilde{\kappa}_U (\ps-z)^\alpha}{v-z},
$$
where in the last inequality we used that $R(v)\leq R(\ps)$. Consider $z^\star = (\alpha v -\ps)/(\alpha-1)$ (which can be obtained by optimizing the right-hand-side in the last inequality above). By replacing this choice of $z$ above, we obtain
$$
\dot{R}(v) \leq  \tilde{\kappa}_U  \alpha\cdot \left(\frac{\alpha}{\alpha-1}\right)^{\alpha-1} (\ps-v)^{\alpha-1}
\leq   \alpha\cdot e \tilde{\kappa}_U (\ps-v)^{\alpha-1},
$$
where we have used that $\left(\frac{\alpha}{\alpha-1}\right)^{\alpha-1}$ is bounded above by $e$. In order to conclude the argument we need to verify that $z^\star \in (0,v)$.  If $v$ is taken sufficiently close to $\ps$ then because $\alpha>1$ we have that $z^\star$ can be made positive.
Additionally, $z^\star<v$ if and only if $\alpha v -\ps < (\alpha-1)v$ which holds  because we are assuming that $v<\ps.$ This yields a new neighborhood around $\ps$. Taking the intersection of this neighborhood with $\mathcal{N}_c$ and $\mathcal{N}_{\tilde{\ell}}$ implies \Cref{eq:def1-equiv-2}. This concludes the proof.
\end{proof}

\section{The Revelation Principle}\label{app-rev}
The standard revelation principle can be stated as follows: Given a mechanism and an equilibrium for that mechanism, there exists a direct mechanism in which (1) it is an equilibrium for each buyer to report his or her type truthfully and (2) the outcomes are the same as in the given equilibrium of the original mechanism.  In our case, the statement is similar: Given a mechanism and an  $\eps-$equilibrium for that mechanism, there exists a direct mechanism in which (1) it is an $\eps$-equilibrium for the buyer to report her type truthfully and (2) the outcomes are the same as in the given $\eps$-equilibrium of the original mechanism.

Mathematically, the buyer's set of messages is denoted by $\Theta$. The seller aims to design an indirect selling mechanism given by $\hat{\all}:\Theta\rightarrow \RR$ and $\hat{\tra}:\Theta\rightarrow \RR$, where $\hat{\all}$ denotes the  allocation probability and $\hat{\tra}$  the  transfers. The buyer's strategic response to the mechanism is a function $\hat{\theta}: \mathcal{S}\rightarrow \Theta$. Given a particular mechanism, %
we are interested in $\eps$-equilibria that satisfy:
 $$v\cdot \hat{\all}(\hat{\theta}(v))-\hat{\tra}(\hat{\theta}(v))\geq v\cdot \hat{\all}(\hat{\theta}')-\hat{\tra}(\hat{\theta}')-\eps, \quad \forall v,\hat{\theta}',$$
 and the participation constraint:
  $$v\cdot \hat{\all}(\hat{\theta}(v))-\hat{\tra}(\hat{\theta}(v))\geq 0, \quad \forall v.$$
That is, the buyer aims to select a reporting strategy that ensures that he collects non-negative utility from participating,   and the buyer is satisficing in that he aims to collect the maximum surplus up to $\varepsilon$, i.e., for all $v\in \mathcal{S}$. When there are multiple $\eps$-optimal best responses for the buyer, we assume that the buyer chooses the one that is the most favorable to the principal. An implication of this assumption, which is common in the mechanism design literature, is that the seller's problem reduces to simultaneously choosing a mechanism together with a best response for the buyer, which together are individually rational and approximately incentive compatible. The seller aims to maximize the expected revenues from trade $\E_v[\hat{\tra}(\hat{\theta}(v))]$ subject to the above constraints. We let
$$
\all(v) = \hat{\all}(\hat{\theta}(v) \quad \text{and}\quad t(v)=\hat{\tra}(\hat{\theta}(v)).
$$
With this transformation, we can optimize over mechanisms that take as input values. Indeed, we just saw how a solution from the indirect mechanism problem can be transformed into a solution to the direct mechanism problem. A solution from the direct mechanism problem is also a solution to the indirect mechanism problem because every direct mechanism is an indirect mechanism. For this to work, we use the assumption that we are choosing the best $\eps$-best response. %

\end{APPENDICES}

\bibliographystyle{informs2014} %
\bibliography{references.bib} %

\end{document}